\documentclass{article}
\usepackage[utf8]{inputenc}
\usepackage{booktabs}
\usepackage{threeparttable}
\usepackage{amsmath, amssymb, amsfonts,amsthm}
\usepackage{bm}
\usepackage{mathrsfs}
\usepackage{xr}
\usepackage{tablefootnote}
\usepackage{tabularx}
\usepackage{graphicx}
\usepackage{float}
\usepackage{color}
\usepackage{booktabs}
\usepackage{natbib}
\usepackage{color}
\usepackage{fullpage}
\usepackage{algorithm}
\usepackage{algpseudocode}
\usepackage{multirow}
\usepackage[colorlinks,citecolor=blue,linkcolor=blue]{hyperref}
 \usepackage{algpseudocode}
\newtheorem{remark}{Remark}
\newtheorem{lemma}{Lemma}
\newtheorem{assumption}{Assumption}
\newtheorem{theorem}{Theorem}

\title{Conditional Rank-Rank Regression via Deep Conditional Transformation Models}
\author{Xiaoyi Wang, Long Feng and Zhaojun Wang\\
Nankai University}
\date{\today}

\begin{document}

\maketitle
\begin{abstract}
Intergenerational mobility quantifies the transmission of socio-economic outcomes from parents to children. While rank-rank regression (RRR) is standard, adding covariates directly (RRRX) often yields parameters with unclear interpretation. Conditional rank-rank regression (CRRR) resolves this by using covariate-adjusted (conditional) ranks to measure within-group mobility. We improve and extend CRRR by estimating conditional ranks with a deep conditional transformation model (DCTM) and cross-fitting, enabling end-to-end conditional distribution learning with structural constraints and strong performance under nonlinearity, high-order interactions, and discrete ordered outcomes where the distributional regression used in traditional CRRR may be cumbersome or prone to misconfiguration. We further extend CRRR to discrete outcomes via an $\omega$-indexed conditional-rank definition and study sensitivity to $\omega$. For continuous outcomes, we establish an asymptotic theory for the proposed estimators and verify the validity of exchangeable bootstrap inference. Simulations across simple/complex continuous and discrete ordered designs show clear accuracy gains in challenging settings. Finally, we apply our method to two empirical studies, revealing substantial within-group persistence in U.S. income and pronounced gender differences in educational mobility in India.

{\it Keywords}: Conditional rank-rank regression; Cross-fitting; Deep conditional transformation model; Intergenerational mobility.
\end{abstract}

\section{Introduction}
Intergenerational mobility is a central topic in economics and sociology. It quantifies how socio-economic status---such as income, education, occupation, and health---is transmitted (or reshuffled) from parents to children, and thus provides a window into the mechanisms underlying inequality, opportunity, and social development. A leading empirical tool is the rank-rank regression (RRR) \citep{dahl2008association,chetty2014land,chetty2014united}. In its canonical form, one transforms the parental and child outcomes into ranks and regresses the child rank on the parent rank; the slope coefficient serves as a popular measure of intergenerational persistence (and hence mobility). RRR has been widely used in empirical work on education \citep{kotera2017educational,chetty2017mobility}, labor and occupations \citep{song2020long,boar2021occupational}, income and wealth \citep{chetty2014land,chetty2014united,fagereng2021wealthy}, health \citep{halliday2021intergenerational,fletcher2021intergenerational}, and child development \citep{chetty2016effects}, and has become a benchmark for comparing mobility across regions, cohorts, and social groups, informing policy discussions on inequality and opportunity \citep{mogstad2023family}.

RRR is attractive for both statistical and substantive reasons. For continuous outcomes, the RRR slope is closely connected to rank dependence: it equals the Spearman's rank correlation in the population \citep{chetverikov2023inference}, a standard dependence measure that is invariant to monotone transformations \citep{spearman1961proof,kendall1948rank}. Moreover, by operating on ranks, RRR is typically more robust to extreme outcomes than classic mobility measures based on levels, such as the intergenerational elasticity (IGE) \citep{becker1986human,atkinson1980intergenerational,mazumder2018intergenerational,olivetti2015name,beller2006intergenerational}.

In many applications, however, researchers seek to account for observed covariates $X$ (e.g., region, race, parental education) and to distinguish within-group persistence from between-group persistence. A common practice is to incorporate $X$ directly in the rank regression, yielding RRR with covariates (RRRX). Yet the coefficient from RRRX is difficult to interpret: it generally no longer corresponds to a rank correlation measure and may even fall outside the natural range $[-1,1]$ \citep{chetverikov2023inference}. This interpretability concern motivates the conditional rank-rank regression (CRRR) recently proposed by \cite{chernozhukov2024conditional}. CRRR replaces marginal ranks with conditional ranks computed within covariate-defined group, and its slope admits a transparent interpretation as an average within-group persistence: it can be represented as the average Spearman's rank correlation conditional on $X$, averaged over the distribution of $X$ \citep{chernozhukov2024conditional}. Consequently, CRRR enables a decomposition of overall persistence into within- and between-group components, facilitating more granular mobility analysis.

Implementing CRRR requires estimating conditional ranks, i.e., conditional distribution functions $F_{Y|X}$ and $F_{W|X}$. When $X$ is low-dimensional and discrete with sufficient within-group sample sizes, empirical conditional CDFs are feasible. In modern applications, $X$ is often continuous and/or high-dimensional, making group-based estimation impractical. \cite{chernozhukov2024conditional} propose estimating conditional ranks via distribution regression (DR) \citep{foresi1995conditional,chernozhukov2013inference}, which approximates the conditional CDF by fitting a collection of threshold-specific binary regressions (often with logit/probit link functions) and then interpolating across thresholds. While DR provides an operational pathway, it faces several limitations in complex data environments: (i) link-function specifications impose implicit shape restrictions, which can lead to misspecification under heavy tails, multimodality, strong nonlinearity, or high-order interactions; (ii) accurately approximating the full distribution may require many thresholds, increasing computational and engineering burden; and (iii) independently fitted thresholds do not automatically enforce the probability axioms for a CDF (e.g., monotonicity), often necessitating post-processing. In addition, existing CRRR theory is primarily developed for continuously distributed outcomes, whereas many variables of interest (e.g., education categories, occupational ranks) are discrete and ordered, inducing pervasive ties. For discrete outcomes, ranks are not uniquely defined without explicit tie-handling, and \cite{chernozhukov2024conditional} explicitly note that discrete extensions are left for future work.

This paper advances CRRR along two dimensions. First, we develop a flexible estimator of conditional ranks based on a Deep Conditional Transformation Model (DCTM), combined with cross-fitting to mitigate overfitting bias. DCTM learns the conditional CDF end-to-end and, through architectural constraints, directly enforces validity of the estimated distribution, avoiding repeated threshold-by-threshold fits as in DR. Second, we provide a first systematic investigation of CRRR for discrete ordered outcomes. We introduce an $\omega$-indexed definition of conditional ranks that parameterizes tie-handling and study how the CRRR slope depends on $\omega$, highlighting that mobility conclusions for discrete outcomes are intrinsically sensitive to rank definitions and must be reported accordingly.

Our main contributions are summarized as follows.
\begin{enumerate}
    \item {Methodology.} We propose a ``DCTM + cross-fitting'' procedure for estimating conditional ranks in CRRR, which is computationally efficient and robust in settings with nonlinearity, high-order interactions, and discrete ordered outcomes, improving upon the DR-based implementation \citep{foresi1995conditional,chernozhukov2013inference,chernozhukov2024conditional}.
    \item {Theory (continuous outcomes).} Under a fixed-complexity framework, we establish consistency and asymptotic normality of the proposed estimator and prove validity of exchangeable bootstrap inference \citep{chernozhukov2024conditional,chetverikov2023inference}.
    \item {Discrete CRRR.} We develop a parameterized conditional-rank definition for discrete outcomes and quantify sensitivity of the target parameter to tie-handling, filling an important gap highlighted in \cite{chernozhukov2024conditional}.
    \item {Empirics and simulations.} Extensive simulations demonstrate accuracy gains under complex continuous and discrete ordered designs. Empirical applications to PSID-SHELF and IHDS illustrate how CRRR decomposes persistence into within- and between-group components and uncovers gender heterogeneity in mobility patterns.
\end{enumerate}

The remainder of the paper is organized as follows. Section 2 reviews the key theoretical background, including rank-rank regression, conditional rank-rank regression, and deep conditional transformation models (DCTM). Section 3 presents the proposed framework and model construction, replacing the traditional distribution regression approach for conditional-rank estimation with DCTM and adopting a cross-fitting strategy. Section 4 establishes consistency and asymptotic normality of the proposed estimator under a fixed model-complexity regime. Section 5 reports simulation studies to highlight its advantages relative to the traditional distribution-regression-based CRRR. Section 6 presents empirical applications to the PSID-SHELF and IHDS datasets. Finally, Section 7 concludes and outlines directions for further improvements. The Appendix collects proofs of the lemmas and theorems in Section 4.

\section{Basic Concepts}\label{chpt:method}

To facilitate subsequent exposition, we adopt the following unified notation throughout this section:
\begin{itemize}
    \item $Y$ denotes the child's socio-economic outcome (e.g., income, educational attainment, occupational status), and $W$ denotes the parent's outcome.
    \item $X \in \mathcal{X} \subset \mathbb{R}^p$ is a covariate vector (e.g., race, region, family characteristics).
    \item $F_Y(y)=\mathbb{P}(Y \leq y)$ and $F_W(w)=\mathbb{P}(W \leq w)$ are the marginal distribution functions of $Y$ and $W$, respectively.
    \item $F_{Y \mid X}(y \mid x)=\mathbb{P}(Y \leq y \mid X=x)$ and $F_{W \mid X}(w \mid x)=\mathbb{P}(W \leq w \mid X=x)$ are the conditional distribution functions of $Y\mid X$ and $W\mid X$, respectively.
\end{itemize}
Based on these definitions, we define the marginal ranks and conditional ranks:
\[
R_Y:=F_Y(Y), \quad R_W:=F_W(W), \quad U:=F_{Y\mid X}(Y \mid X), \quad V:=F_{W \mid X}(W \mid X),
\]
where, in the continuous-outcome case, $R_Y$, $R_W$, $U$, and $V$ are all distributed as $\mathrm{Unif}(0,1)$.

\subsection{Rank-Rank Regresssion}

Rank-rank regression is a rank-based regression approach for studying the dependence between two variables and has been widely used in economics, especially in intergenerational mobility research. The basic idea is to transform the child outcome $Y$ and the parent outcome $W$ into their population ranks $R_Y$ and $R_W$, and then use a linear model to describe the dependence of $R_Y$ on $R_W$. The slope $\rho$ is the parameter of interest:
\[
R_{Y,i}=\alpha+\rho R_{W,i}+\varepsilon_i.
\]

When $Y$ and $W$ are continuous, $R_Y \sim \mathrm{Unif}(0,1)$ and $R_W \sim \mathrm{Unif}(0,1)$, and
\[
\mathbb{E}[R_Y]=\mathbb{E}[R_W]=\frac{1}{2},\qquad 
\operatorname{Var}(R_Y)=\operatorname{Var}(R_W)=\frac{1}{12}.
\]
Therefore, the RRR coefficient $\rho$ is both the regression slope of $R_Y$ on $R_W$ and the Spearman's rank correlation between $Y$ and $W$ (equivalently, the Pearson correlation between $R_Y$ and $R_W$):
\[
\rho
=\frac{\operatorname{Cov}(R_Y,R_W)}{\operatorname{Var}(R_W)}
=\frac{\operatorname{Cov}(R_Y,R_W)}{\operatorname{Var}(R_Y)}
=\operatorname{Cor}(R_Y,R_W)
=12\,\mathbb{E}\!\left[\left(R_Y-\tfrac{1}{2}\right)\left(R_W-\tfrac{1}{2}\right)\right].
\]

In intergenerational mobility studies, the rank-rank regression slope $\rho$ is typically viewed as a measure of intergenerational persistence. A larger $\rho$ indicates a stronger association between parental rank and child rank, i.e., higher persistence and lower mobility. Because $\rho$ has an intuitive economic interpretation and is relatively robust to extreme values compared with alternative measures, rank-rank regression is widely used to quantify intergenerational dependence and to compare mobility across groups and regions.

It is important to note that the rank-rank regression model involves only the rank relationship between two variables. It captures aggregate intergenerational mobility, namely the Spearman's rank correlation between $Y$ and $W$ without controlling for other factors (sometimes referred to as the ``unconditional'' rank correlation). In many applications, however, researchers wish to study the association after controlling for additional covariates, such as within-group mobility. A conventional practice is to include covariates $X$ additively or non-additively in the rank regression, referred to as RRR with covariates (RRRX), for example,
\[
R_Y=\beta_0+\rho R_W+\beta_2^\top X+\epsilon,\qquad 
\mathbb{E}[\epsilon]=0,\ \mathbb{E}[R_W\epsilon]=0,\ \mathbb{E}[X\epsilon]=0.
\]
However, \cite{chernozhukov2024conditional} show that the RRRX coefficient $\rho$ is difficult to interpret both economically and statistically: it may fall outside the natural range $[-1,1]$ and no longer coincides with Spearman's rank correlation. This undermines the intuitive role of rank correlation as a dependence measure. Motivated by this issue, \cite{chernozhukov2024conditional} propose conditional rank-rank regression, which incorporates covariates while preserving a rank-correlation interpretation.

\subsection{Conditional Rank-Rank Regression}
To address the interpretability difficulty of rank-rank regression with covariates, \cite{chernozhukov2024conditional} propose conditional rank-rank regression (CRRR) and develop asymptotic theory for the continuous-outcome case.

The key idea of CRRR is to replace the marginal ranks $R_Y$ and $R_W$ with the conditional ranks $U:=F_{Y\mid X}(Y\mid X)$ and $V:=F_{W\mid X}(W\mid X)$, thereby incorporating covariate effects. Informally, CRRR computes within-group ranks of $Y$ and $W$ conditional on $X$, and then regresses the child's conditional rank on the parent's conditional rank:
\begin{equation*}
U_i=\alpha+\rho_C V_i+\varepsilon_i.
\end{equation*}

When $Y$ and $W$ are continuous, one has $U\mid X \sim \mathrm{Unif}(0,1)$ and $V\mid X \sim \mathrm{Unif}(0,1)$, and hence $U,V \sim \mathrm{Unif}(0,1)$. In particular,
\begin{align*}
\mathbb{E}[V]=\mathbb{E}[U]=\mathbb{E}[V\mid X]=\mathbb{E}[U\mid X]=\frac{1}{2},\qquad
\operatorname{Var}(V)=\operatorname{Var}(U)=\operatorname{Var}(V\mid X)=\operatorname{Var}(U\mid X)=\frac{1}{12}.
\end{align*}
Therefore, the CRRR coefficient $\rho_C$ is both the regression slope of $U$ on $V$ and the (unconditional) correlation between $U$ and $V$:
\begin{equation}\label{eq:spearman_rho_C}
\rho_C
=\frac{\operatorname{Cov}(U,V)}{\operatorname{Var}(V)}
=\frac{\operatorname{Cov}(U,V)}{\operatorname{Var}(U)}
=\operatorname{Cor}(U,V).
\end{equation}
Since the mean and variance are known constants, $\rho_C$ can also be written as
\begin{equation}\label{eq:half}
\rho_C
=12\,\mathbb{E}\!\left[\left(U-\tfrac{1}{2}\right)\left(V-\tfrac{1}{2}\right)\right].
\end{equation}
Moreover, because $\operatorname{Cov}(\mathbb{E}[U\mid X],\mathbb{E}[V\mid X])=0$, the law of total covariance implies
\[
\operatorname{Cov}(U,V)=\mathbb{E}\!\left[\operatorname{Cov}(U,V\mid X)\right].
\]
Since $U\mid X$ and $V\mid X$ have constant means and variances, it follows that $\rho_C$ is the $X$-average of the conditional Spearman's rank correlation between $Y$ and $W$:
\[
\rho_C=\mathbb{E}\!\left[\rho_{Y,W\mid X}\right],\qquad 
\rho_{Y,W\mid X}:=\operatorname{Cor}(U,V\mid X).
\]

Consequently, $\rho_C$ always lies in $[-1,1]$ and retains an interpretation analogous to the unconditional Spearman rank correlation, namely \emph{average within-group intergenerational persistence}. By contrast, the RRRX slope generally does not correspond to a rank correlation and suffers from interpretability issues, whereas CRRR preserves the rank-correlation interpretive framework by construction.

CRRR also provides a more refined lens for mobility analysis. The RRR slope $\rho$ measures overall persistence, while the CRRR slope $\rho_C$ measures average within-group persistence. Their difference can be interpreted as \emph{between-group persistence}, capturing the extent to which cross-group heterogeneity contributes to overall dependence. For example, in intergenerational income studies, $\rho_C$ reflects the association between parent and child income ranks within group defined by gender or family background, while $\rho$ reflects the overall association; their gap can be attributed to differences across such group.

To estimate $\rho_C$, one needs to estimate the conditional ranks $U$ and $V$, i.e., the conditional distribution functions $F_{Y\mid X}$ and $F_{W\mid X}$, and then plug the estimates into \eqref{eq:spearman_rho_C} or \eqref{eq:half} to obtain a sample estimator $\widehat{\rho}_C$. \cite{chernozhukov2024conditional} propose estimating conditional ranks via distribution regression (DR). Specifically, for a grid of thresholds $y$, construct binary responses $\mathbb{I}\{Y\le y\}$ and fit $\mathbb{P}(Y\le y\mid X)$ using logit or probit links, thereby approximating the entire conditional CDF by threshold-by-threshold fitting. Concretely, to estimate the conditional ranks for $Y$, choose quantile-based thresholds $\{y^{(m)}\}_{m=1}^M$ (e.g., an equally spaced grid from the 2\% to 98\% quantiles), and for each $y^{(m)}$ fit a logit/probit model for $\mathbb{P}(Y\le y^{(m)}\mid X)$ with response $\mathbb{I}(Y\le y^{(m)})$ and covariates $X$. Conditional ranks for values of $y$ between adjacent grid points are obtained by linear interpolation. To ensure reasonable tail behavior, local extrapolation is applied outside the grid range: if $y$ is below the minimum threshold, extrapolate using the fitted model at the smallest threshold and map the result back to $[0,1]$ through the link function; similarly extrapolate above the maximum threshold using the model at the largest threshold. Applying the same procedure to $W$ yields $\widehat V$. Substituting $\widehat U$ and $\widehat V$ into \eqref{eq:spearman_rho_C} or \eqref{eq:half} yields the following three sample estimators:
\begin{align}
\label{eq:ols}
\widehat \rho_C
&:=\frac{\sum_{i=1}^n\left(\widehat{U}_i-\overline{\widehat{U}}\right)\left(\widehat{V}_i-\overline{\widehat{V}}\right)}
{\sum_{i=1}^n\left(\widehat{V}_i-\overline{\widehat{V}}\right)^2},\\
\label{eq:corr}
\widetilde \rho_C
&:=\frac{\sum_{i=1}^n\left(\widehat{U}_i-\overline{\widehat{U}}\right)\left(\widehat{V}_i-\overline{\widehat{V}}\right)}
{\sqrt{\sum_{i=1}^n\left(\widehat{U}_i-\overline{\widehat{U}}\right)^2\,
\sum_{i=1}^n\left(\widehat{V}_i-\overline{\widehat{V}}\right)^2}},\\
\label{eq:corr_1_2}
\breve{\rho}_C
&:=\frac{12}{n}\sum_{i=1}^n\left(\widehat{U}_i-\tfrac{1}{2}\right)\left(\widehat{V}_i-\tfrac{1}{2}\right).
\end{align}
Here \eqref{eq:ols} is the OLS-slope form, while \eqref{eq:corr} and \eqref{eq:corr_1_2} are correlation-type forms. The three estimators are equivalent if and only if $Y$ and $W$ are continuous. When $Y$ and $W$ are discrete, we take the OLS-slope estimator $\widehat\rho_C$ in \eqref{eq:ols} as the default.

However, because DR estimates the CDF by threshold-by-threshold fitting, it implicitly assumes correct model specification at each threshold. While DR can deliver consistent distribution estimates under favorable conditions and sufficient data, misspecification risk can arise in complex settings with high-dimensional $X$, nonlinear interactions, and discrete ordered outcomes; this is also confirmed in our simulation studies. Therefore, we need alternative conditional-rank estimation methods that are more flexible, robust, accurate, and broadly applicable.

\subsection{Deep Conditional Transformation Model} \label{sec:dctm}
The deep conditional transformation model (DCTM) is a general method for directly modeling conditional distributions by combining classical transformation models with modern deep neural networks \citep{baumann2021deep,kook2024estimating}. It provides a flexible framework for estimating the conditional CDF $F_{Y\mid X}(y\mid x)$ without imposing stringent parametric assumptions on the distributional shape, while allowing architectural design to enforce the required monotonicity and boundedness properties of a valid distribution function.

Transformation models form a general framework that includes many common statistical models as special cases, such as linear regression and logistic regression. Historically, the development of transformation models can be traced from the Box-Cox transformation paradigm \citep{box1964analysis}, to the most likely transformation (MLT) framework \citep{hothorn2018most}, to conditional transformation models (CTM) for modeling conditional distributions \citep{hothorn2014conditional}, and finally to DCTM by integrating CTM with modern deep learning \citep{baumann2021deep,kook2024estimating}. In particular, the conditional transformation model proposed by \cite{hothorn2014conditional} is built on the idea that there exists a transformation function $h(y;x)$ that is nondecreasing in $y$ and maps the response $Y$ (given covariates $X=x$) to a latent variable $Z$ with a known baseline distribution:
\[
h(Y;x)=Z,\qquad Z\sim F_0,
\]
where $F_0$ is a chosen baseline CDF, e.g., $Z\sim\mathcal{N}(0,1)$. This assumption implies
\begin{equation*}
\begin{aligned}
\mathbb{P}(Y\le y\mid X=x)
&=\mathbb{P}\!\bigl(h(Y;x)\le h(y;x)\mid X=x\bigr)\\
&=\mathbb{P}\!\bigl(Z\le h(y;x)\bigr)
=F_0\!\bigl(h(y;x)\bigr).
\end{aligned}
\end{equation*}
In other words, instead of modeling the CDF of $Y\mid X$ directly, one models the transformation function $h(y;x)$, i.e., how $Y$ is transformed to follow the baseline distribution. It is crucial to impose appropriate constraints so that, for each fixed $x$, $h(y;x)$ is nondecreasing in $y$. The function $h(y;x)$ can be estimated, for example, via maximum likelihood, and the estimator of $F_{Y\mid X}$ follows from the identity above.

Classical conditional transformation models nonetheless have practical boundaries. When the covariate dimension is high, the relationship between variables exhibits strong nonlinearity and high-order interactions, or covariates include unstructured information such as images and text, the estimators relying on strong structural assumptions may be inadequate. The DCTM proposed by \cite{baumann2021deep} is designed for conditional distribution modeling in such complex scenarios by combining the interpretable structure of transformation models with the flexibility of deep learning. The DCTM can be summarized as
\[
h(y;x)=T\bigl(y;f(x)\bigr),
\]
where $T$ is a transformation function that is nondecreasing in $y$, and its form is governed by the neural network output $f(x)$. In practice, $T$ is typically modeled as either the sum or the product of a $y$-function and an $x$-dependent shift component. For instance, in the no-interaction case one may use $h(y;x)=\psi(y)-m(x)$, where $\psi(y)$ is a baseline transformation depending only on $y$, and $m(x)$ captures covariate effects. More generally, DCTM allows for interactions, meaning that covariates can affect not only location but also the shape of the transformation:
\[
h(y;x)=h_1+h_2
=\boldsymbol{a}(y)^{\top}\boldsymbol{\vartheta}(x)+\beta(x),
\]
where $\boldsymbol{a}(y)$ is a pre-specified basis function $\boldsymbol{a}:\Xi\mapsto\mathbb{R}^{M+1}$ with $\Xi$ the sample space, and $\boldsymbol{\vartheta}:\chi_{\vartheta}\mapsto\mathbb{R}^{M+1}$ is a parameter function defined on $\chi_{\vartheta}\subseteq \chi$. The function $\boldsymbol{\vartheta}$ governs the shape of the $y$-transformation, while $\beta(x)$ is a covariate-dependent distributional shift; both are learned and output by neural networks.

To implement this idea, DCTM typically uses two network modules: a \emph{shift} prediction module that predicts the transformation shift $\beta(x)$, and an \emph{interaction} prediction module that outputs $\boldsymbol{\vartheta}(x)$, which is multiplied by the basis $\boldsymbol{a}(y)$ to adjust the transformation shape. Benefiting from deep learning’s representation power, the network can automatically learn feature selection and cross-feature interactions in $X$. A schematic DCTM architecture is shown in Figure~\ref{fig:dctm_fig}.
\begin{figure}[htbp]
    \centering
    \includegraphics[width=\textwidth]{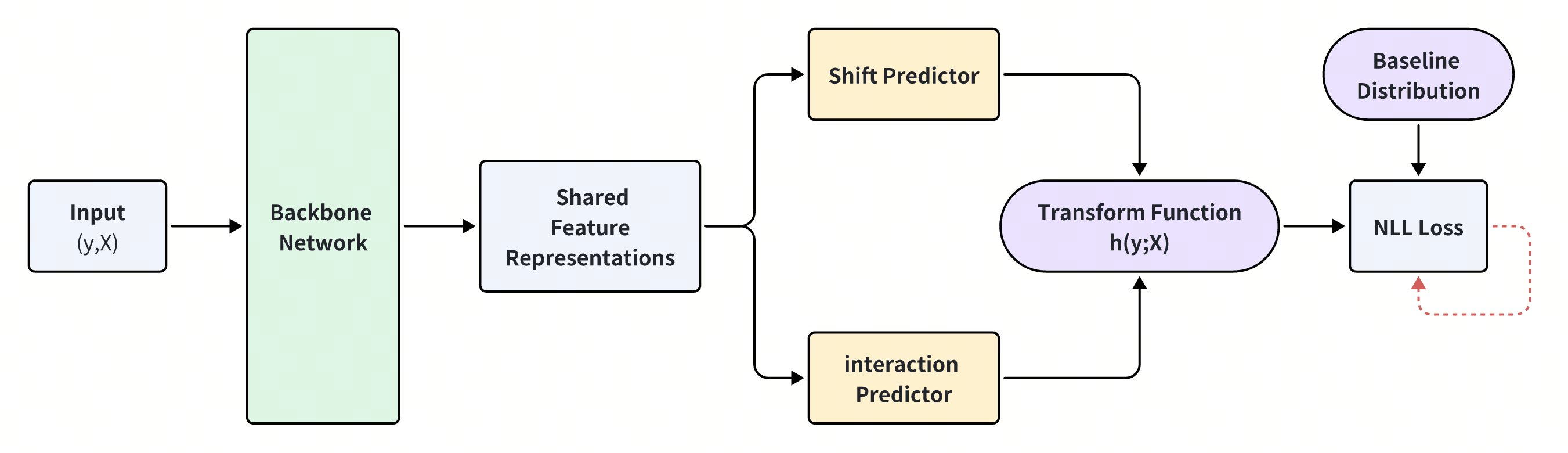}
    \caption[]{Schematic architecture of the DCTM network.}
    \label{fig:dctm_fig}
\end{figure}

Moreover, to ensure that for each fixed $x$ the transformation function $h(y;x)$ is nondecreasing in $y$, one needs to impose constraints on the network architecture or parameters. For example, one may enforce ordering constraints on output-layer weights or nonnegativity constraints on certain parameters. Such constraints guarantee structurally that the estimated $\widehat F_{Y\mid X}$ is always a valid distribution function satisfying the basic probability axioms.

DCTM is trained by maximizing the log-likelihood or, equivalently, minimizing the negative log-likelihood loss in \eqref{eq:nll}, where $\boldsymbol{\omega}$ denotes learnable network parameters and $\ell\bigl(\boldsymbol{\omega};y_i,\boldsymbol{x}_i\bigr)$ denotes the single-observation log-likelihood function \citep{baumann2021deep,kook2024estimating}. Since $F_{Y\mid X}(y\mid x)=F_0(h(y;x))$, the log-likelihood is straightforward to write down, and the model can be trained using standard optimizers (e.g., SGD \citep{bottou2012stochastic}, Adam \citep{kingma2014adam}, AdamW \citep{loshchilov2017decoupled}):
\begin{equation}\label{eq:nll}
\mathrm{NLL}\bigl(\boldsymbol{\omega};y_i,x_i\bigr)
:=-\frac{1}{n}\sum_{i=1}^{n}\ell\bigl(\boldsymbol{\omega};y_i,x_i\bigr).
\end{equation}

\section{Methodology and Model Construction}

\subsection{Estimating conditional ranks via DCTM}
A key step of conditional rank-rank regression (CRRR) is the estimation of \emph{conditional ranks}.
Accurate conditional-rank estimation directly improves the estimation of the target parameter $\rho_C$.
The classical CRRR approach relies on \emph{distribution regression} (DR): one selects a grid
$\{y^{(m)}\}_{m=1}^M$ and fits $P(Y\le y^{(m)}\mid X)$ separately for each $m$, and then obtains
conditional-rank estimates for all observations via interpolation and local extrapolation.

DR suffers from several limitations:
(i) fitting many pointwise binary regressions is computationally expensive and does not guarantee that the
resulting estimated distribution satisfies basic probabilistic constraints (e.g., monotonicity in $y$) or
global coherence;
(ii) boundary handling is complicated and often requires ad local extrapolation;
(iii) parametric link functions may be misspecified and can fail to adapt to nonlinearities, strong interactions,
and discrete/heavy-tailed outcomes;
(iv) extensive manual feature engineering is often required, which is costly and inefficient.

We replace DR by \emph{deep conditional transformation models} (DCTM) to estimate conditional ranks.
DCTM resolves the above issues because: a single end-to-end model outputs the entire conditional distribution
without repeated pointwise fits and is structurally constrained to yield a valid CDF; boundary points do not
require special treatment; neural networks provide strong representation power for high-dimensional and complex
data types without manual feature engineering. Moreover, DCTM is a unified framework: depending on the response
type, one may flexibly design the transformation structure, neural architecture, and monotonicity constraints.
Below we describe our DCTM designs for continuous and discrete outcomes; the network structures used in our
numerical experiments follow this design.

\subsubsection{Continuous outcomes}
The core idea of DCTM is to learn, via a neural network, a monotone nondecreasing transformation
$T(y,x)$ such that the transformed variable $Z=T(Y,X)\mid X$ approximately follows a chosen baseline distribution.
In the continuous-outcome simulations, we set the baseline distribution to $\mathcal{N}(0,1)$. For the child outcome $Y$, we model
\begin{equation}\label{eq:dctm_continuous_cdf}
F_{Y\mid X}(y\mid x)=\Phi\!\big(T(y,x)\big),\qquad
T(y,x)=\sum_{j=0}^{J}\beta_j(x)\,B_{j,J}\!\big(y_{01}\big),
\end{equation}
where $\Phi(\cdot)$ is the CDF of $\mathcal{N}(0,1)$, $\{\beta_j(\cdot)\}_{j=0}^J$ are learned functions of $x$, $\{B_{j,J}(\cdot)\}_{j=0}^J$ are Bernstein basis functions, and $y_{01}$ is a normalized version of $y$ that maps $\mathbb{R}$ to $[0,1]$:
\begin{equation}\label{eq:y01_def}
y_{01}=\sigma\!\left(\frac{y-m}{s}\right)=\frac{1}{1+\exp\!\left(-\frac{y-m}{s}\right)},
\end{equation}
with fixed constants $m$ and $s>0$, and $\sigma(\cdot)$ the sigmoid function.

\paragraph{Monotonicity via a two-head network.}
To ensure that, for each fixed $x$, the map $y\mapsto T(y,x)$ is nondecreasing, we design two output heads:
one head predicts a baseline value $c(x)=\beta_0(x)$, and the other head predicts increments
$\{\Delta_k(x)\}_{k=1}^J$. We then set
\begin{equation}\label{eq:beta_softplus}
\beta_j(x)=c(x)+\sum_{k=1}^{j}\mathrm{softplus} \big(\Delta_k(x)\big),
\qquad
\mathrm{softplus} (a)=\log(1+e^a),
\end{equation}
which implies the ordered constraint
\begin{equation}\label{eq:beta_increase}
\beta_0(x)\le \beta_1(x)\le \cdots \le \beta_J(x)\qquad \text{for all }x.
\end{equation}

\paragraph{Why \eqref{eq:beta_increase} implies monotonicity of $T(\cdot,x)$.}
Recall the Bernstein basis on $[0,1]$:
\[
B_{k,J}(u)=\binom{J}{k}u^k(1-u)^{J-k},\qquad u\in[0,1],\quad k=0,\ldots,J.
\]
A standard identity is
\begin{equation}\label{eq:bern_derivative_identity}
\frac{d}{du}B_{k,J}(u)=J\Big(B_{k-1,J-1}(u)-B_{k,J-1}(u)\Big),
\end{equation}
with the conventions $B_{-1,J-1}\equiv 0$ and $B_{J,J-1}\equiv 0$.
Using \eqref{eq:bern_derivative_identity}, for $u\in[0,1]$,
\[
\begin{aligned}
\frac{\partial}{\partial u}T(u;x)
&=\sum_{k=0}^J \beta_k(x)\frac{d}{du}B_{k,J}(u) \\
&=J\sum_{k=0}^J \beta_k(x)\Big(B_{k-1,J-1}(u)-B_{k,J-1}(u)\Big) \\
&=J\sum_{k=0}^{J-1}\Big(\beta_{k+1}(x)-\beta_k(x)\Big)B_{k,J-1}(u),
\end{aligned}
\]
where the last step follows by index shifting and the endpoint conventions.
Let $\widetilde{\Delta}_k(x):=\beta_{k+1}(x)-\beta_k(x)\ge 0$ by \eqref{eq:beta_increase}.
Since $B_{k,J-1}(u)\ge 0$ and $\sum_{k=0}^{J-1}B_{k,J-1}(u)=1$ for all $u\in[0,1]$, we obtain
\[
\frac{\partial}{\partial u}T(u;x)=J\sum_{k=0}^{J-1}\widetilde{\Delta}_k(x)\,B_{k,J-1}(u)\ge 0.
\]
Hence $u\mapsto T(u;x)$ is nondecreasing. Because $y\mapsto y_{01}$ in \eqref{eq:y01_def} is strictly increasing,
it follows that $y\mapsto T(y,x)=T(y_{01}(y);x)$ is also nondecreasing. Therefore, the architecture enforces
monotonicity structurally, ensuring that $\widehat F_{Y\mid X}(y\mid x)=\Phi(T(y,x))$ is a valid CDF.

\paragraph{Training objective.}
DCTM is trained by minimizing the negative log-likelihood (NLL). For the standard normal baseline,
the (population) objective can be written as
\begin{equation}\label{eq:dctm_nll_continuous}
\mathcal{L}
=
-\mathbb{E}\!\left[
\log \phi\!\big(T(Y,X)\big)
+
\log\!\left(\frac{\partial T(Y,X)}{\partial y}\right)
\right],
\end{equation}
where $\phi(\cdot)$ is the $\mathcal{N}(0,1)$ density. Minimizing \eqref{eq:dctm_nll_continuous} encourages the
model-implied CDF $\Phi(T(y,x))$ to approximate the true conditional distribution of $Y\mid X$.

Similarly, for the parent outcome $W$, we independently train a DCTM of the same form to estimate $F_{W\mid X}$
(and the conditional rank $V:=F_{W\mid X}(W\mid X)$ in the continuous case). Compared with DR (multiple pointwise
fits), DCTM fits the entire conditional distribution end-to-end and guarantees distributional validity by
construction, without any post-hoc monotonicity corrections.

\subsubsection{Discrete outcomes}\label{sec:discrete}
The DCTM framework also applies to ordinal discrete responses (denoted by dDCTM). We introduce the notation:
let $X\in\mathbb{R}^p$ be covariates, and let the child and parent ordinal outcomes be
\[
Y\in\{0,1,\ldots,K_Y-1\},\qquad
W\in\{0,1,\ldots,K_W-1\},
\]
representing ordered categories (e.g., occupational status, education level). Define the conditional CDFs
\begin{align}
F_k^{Y}(x) &:= \mathbb{P}(Y\le k\mid X=x),\quad k=0,\ldots,K_Y-1, \label{eq:cdf_defY}\\
F_\ell^{W}(x) &:= \mathbb{P}(W\le \ell\mid X=x),\quad \ell=0,\ldots,K_W-1, \label{eq:cdf_defW}
\end{align}
where $F_{K_Y-1}^{Y}(x)=F_{K_W-1}^{W}(x)=1$.

Unlike the continuous DCTM, the discrete dDCTM takes only $X$ as input and outputs the entire vector of
conditional CDF values across categories. We choose the logistic distribution as the baseline and use a
simple cumulative monotone construction. For $Y$, define
\begin{equation}\label{eq:ddctm_monotone_scores}
s_k^{Y}(x)=c^{Y}(x)+\sum_{j=0}^{k}\widetilde{\Delta}_j^{Y}(x),
\qquad
\widetilde{\Delta}_j^{Y}(x)=\mathrm{softplus} \big(\Delta_j^{Y}(x)\big)\ge 0,
\end{equation}
and
\begin{equation}\label{eq:ddctm_cdf}
F_k^{Y}(x)=\sigma\!\big(s_k^{Y}(x)\big),\qquad k=0,\ldots,K_Y-2,
\end{equation}
where $\sigma(t)=1/(1+e^{-t})$ is the logistic CDF, and $c^Y(\cdot)$ and $\{\Delta_j^Y(\cdot)\}$ are network outputs.

\paragraph{Network structure.}
Following the general DCTM architecture in Figure~\ref{fig:dctm_fig}, we use a backbone network
$h_\theta:\mathbb{R}^p\to\mathbb{R}^d$ to extract features from $X$, and then feed them into two heads that output
$c^Y$ and $\{\Delta_j^Y\}$. The cumulative construction and $\mathrm{softplus} (\cdot)$ ensure
\[
s_0^{Y}(x)\le s_1^{Y}(x)\le \cdots \le s_{K_Y-2}^{Y}(x),
\]
and since $\sigma(\cdot)$ is increasing, we obtain
\[
F_0^{Y}(x)\le F_1^{Y}(x)\le \cdots \le F_{K_Y-2}^{Y}(x),
\]
i.e., monotonicity across categories holds by structural design, without post-hoc rearrangement.

\paragraph{From CDF to category probabilities.}
Given $(F_0^{Y}(x),\ldots,F_{K_Y-2}^{Y}(x))$, the conditional category probabilities are obtained by differences:
\[
\begin{aligned}
p_0^{Y}(x)&=F_0^{Y}(x),\\
p_y^{Y}(x)&=F_y^{Y}(x)-F_{y-1}^{Y}(x),\qquad 1\le y\le K_Y-2,\\
p_{K_Y-1}^{Y}(x)&=1-F_{K_Y-2}^{Y}(x).
\end{aligned}
\]
We train dDCTM by minimizing the negative log-likelihood
\begin{equation}\label{eq:ddctm_nll}
\mathcal{L}_Y(\theta)
=
-\mathbb{E}\!\left[\log p_Y^{Y}(X)\right]
=
-\frac{1}{n}\sum_{i=1}^{n}\log p_{y_i}^{Y}(x_i).
\end{equation}
The construction for $W$ is analogous. One may train the two networks separately or use a multi-task architecture
(e.g., MMoE \citep{Ma2018MMoE}) to learn them jointly. To address class imbalance, one can also introduce class-weighted NLL (e.g.,
inverse-frequency weights for tail categories).

\paragraph{Comparison with DR in the discrete case.}
Under DR, one typically fits $K_Y-1$ separate binary regressions for $\mathbb{I}\{Y\le k\}\mid X$ to obtain
$\widehat F_k^{Y}(x)$ and then enforces monotonicity across $k$. This approach (i) does not share information
across thresholds, (ii) is prone to misspecification and crossing CDF curves when $k$ and $x$ interact strongly,
and (iii) can be particularly fragile in the tails. In contrast, dDCTM shares the backbone representation and
uses a structured monotone increment design, enabling joint fitting across all thresholds with fewer degrees of
freedom and guaranteed global coherence.

Compared with the classical DR approach, using DCTM (and dDCTM) to estimate conditional CDFs has the following
advantages:
\begin{enumerate}
\item \textbf{Broad applicability.} DCTM can handle a wide range of response types, including continuous,
discrete/ordinal, and survival outcomes. By selecting an appropriate baseline distribution and designing the
corresponding transformation, one can cover multiple output types within a unified framework. While the concrete
form of the transformation varies across tasks, the training objective, optimization pipeline, and downstream
inference steps can remain largely consistent, enabling a ``one workflow, multiple outputs'' implementation.

\item \textbf{Distribution-free with interpretability.} DCTM is often described as \emph{distribution-free} in
the sense that it does not impose a parametric family for $Y\mid X$; instead, it uses flexible transformations
to map $Y\mid X$ toward a baseline distribution.

\item \textbf{Monotonicity and global coherence.} By enforcing monotonicity structurally through the network,
DCTM produces CDF estimates that satisfy basic probabilistic properties. Unlike DR's pointwise fitting, DCTM
fits the entire conditional distribution end-to-end, thereby improving global coherence.

\item \textbf{Scalability and extensibility with deep learning.} With neural networks as function approximators,
DCTM can automatically learn rich feature representations (e.g., higher-order nonlinear interactions), improving
pattern discovery and scalability to high dimensions \citep{sick2021deep,kook2022deep}. Related methods have been
implemented in the \textsf{\textbf{deeptrafo}} package \citep{kook2024estimating}, which allows flexible choices of
network architecture, loss functions, optimization strategies, hyperparameter tuning, and diagnostics.
\end{enumerate}

In summary, DCTM provides a powerful and flexible tool for conditional distribution estimation. In the CRRR
setting, we use DCTM to estimate $F_{Y\mid X}$ and $F_{W\mid X}$, and then compute conditional ranks. This yields
reliable conditional-rank estimates (bounded and monotone), improves accuracy, and adapts to complex data
distributions.

\subsection{Cross-Fitting}\label{sec:crossfitting}
To improve robustness and avoid in-sample overfitting bias (training and prediction on the same data),
we adopt a cross-fitting strategy. Specifically, we split the sample into $K$ folds. For each fold, we train DCTM
on the other $K-1$ folds and then compute out-of-fold (OOF) conditional ranks on the held-out fold. We then pool
the OOF ranks across all folds and compute the sample estimator of $\rho_C$. The procedure is summarized in
Algorithm~1.

\medskip 
\noindent\textbf{Algorithm 1 (Cross-Fitting Estimator of $\rho_C$).} \begin{enumerate} 
\item Randomly split the index set $\{1,\ldots,n\}$ into $K$ disjoint subsets $\mathcal{I}_1,\ldots,\mathcal{I}_K$. Let $\mathcal{I}_k^c=\{1,\ldots,n\}\setminus \mathcal{I}_k$ denote the complement of the $k$th fold. 
\item For $k=1,\ldots,K$, do: 
\begin{enumerate} 
\item \textbf{Fit fold-specific conditional CDFs.} Using the training sample $\{(Y_i,X_i)\}_{i\in\mathcal{I}_k^c}$, fit a DCTM and obtain $\widehat F_{Y\mid X}^{(-k)}(\cdot\mid\cdot)$. Similarly, using $\{(W_i,X_i)\}_{i\in\mathcal{I}_k^c}$, fit a DCTM and obtain $\widehat F_{W\mid X}^{(-k)}(\cdot\mid\cdot)$. 
\item \textbf{Compute OOF conditional ranks.} For each $i\in\mathcal{I}_k$, compute \[ \widehat U_i=\widehat F_{Y\mid X}^{(-k)}(Y_i\mid X_i), \qquad \widehat V_i=\widehat F_{W\mid X}^{(-k)}(W_i\mid X_i). \] 
If $Y$ and/or $W$ are discrete, conditional ranks require the tie parameter $\omega$; see Section~\ref{sc:discrete}. 
\end{enumerate} 
\item \textbf{Compute the OLS-slope estimator of $\rho_C$.} \[ \widehat\rho_C = \frac{\sum_{i=1}^{n}(\widehat U_i-\overline U)(\widehat V_i-\overline V)} {\sum_{i=1}^{n}(\widehat V_i-\overline V)^2}, \qquad \overline U=\frac{1}{n}\sum_{i=1}^n \widehat U_i,\quad \overline V=\frac{1}{n}\sum_{i=1}^n \widehat V_i. \] If $Y$ and $W$ are continuous, two equivalent forms (defined earlier as \eqref{eq:corr} and \eqref{eq:corr_1_2}) are \[ \widetilde \rho_C := \frac{\sum_{i=1}^{n}(\widehat U_i-\overline{\widehat U})(\widehat V_i-\overline{\widehat V})} {\sqrt{\sum_{i=1}^{n}(\widehat U_i-\overline{\widehat U})^2\ \sum_{i=1}^{n}(\widehat V_i-\overline{\widehat V})^2}}, \qquad \breve\rho_C := \frac{12}{n}\sum_{i=1}^{n}\left(\widehat U_i-\frac{1}{2}\right)\left(\widehat V_i-\frac{1}{2}\right). \] 
\end{enumerate}
\subsection{CRRR with Discrete Outcomes}\label{sc:discrete}
In many empirical applications, the outcomes are not continuous but discrete/ordinal, such as occupational status
\citep{ward2022internal}, education level \citep{asher2024intergenerational}, and human capital
\citep{croix2024nepotism}. Discreteness induces many ties. Importantly, without changing the overall ordering,
tied observations can be assigned any rank value within an interval. Specifically, if
$\mathbb{P}(Y=y_0\mid X=x)=p_0>0$, then
$F_{Y\mid X}(y_0\mid x)-F_{Y\mid X}^{-}(y_0\mid x)=p_0$, and observations with $Y=y_0$ may be assigned ranks
anywhere in
\[
\big[F_{Y\mid X}^{-}(y_0\mid x),\ F_{Y\mid X}(y_0\mid x)\big]
\]
without affecting the induced ordering. Since $\rho_C$ depends on the ranks, we must handle ties carefully to
ensure identification and interpretability. Note also that, in the discrete case, the three estimators
\eqref{eq:ols}--\eqref{eq:corr_1_2} are no longer equivalent; we take the OLS-slope estimator \eqref{eq:ols} as
the default.

In the context of rank-rank regression, \citet{chetverikov2023inference} emphasize three key differences in the
discrete case: (i) the meaning of the target parameter depends on whether ties exist; (ii) different tie-breaking
rank definitions imply different target parameters; (iii) the sampling properties of the estimator also depend
on the rank definition. Hence, the discrete setting requires a separate treatment of parameter definition and
inference \citep{chetverikov2023inference}. Similarly, CRRR faces analogous challenges when $Y$ and/or $W$ are
discrete. Existing CRRR work focuses on the continuous case and does not address discreteness
\citep{chernozhukov2024conditional}. We therefore propose a parametric family of conditional-rank definitions for
the discrete case and investigate how $\widehat\rho_C$ changes with the rank definition.

\subsubsection{Conditional-rank definition with a tie parameter}
Let $F_{Y\mid X}(y\mid X)=\mathbb{P}(Y\le y\mid X)$ and $F_{Y\mid X}^{-}(y\mid X)=\mathbb{P}(Y<y\mid X)$; define
$F_{W\mid X}(w\mid X)$ and $F_{W\mid X}^{-}(w\mid X)$ analogously. For any $\omega\in[0,1]$ and fixed $X=x$, define
the conditional ranks by
\begin{align}
R_{Y\mid X=x}(y)
&:=
\omega F_{Y\mid X}(y\mid x) + (1-\omega)F_{Y\mid X}^{-}(y\mid x), \label{eq:rankY_omega_en}\\
R_{W\mid X=x}(w)
&:=
\omega F_{W\mid X}(w\mid x) + (1-\omega)F_{W\mid X}^{-}(w\mid x). \label{eq:rankW_omega_en}
\end{align}
For comparability, we recommend using the same $\omega$ for both $Y$ and $W$. Fixing $\omega$ yields well-defined
conditional ranks $U_\omega=R_{Y\mid X}(Y)$ and $V_\omega=R_{W\mid X}(W)$. Intuitively, $\omega$ controls where the
rank of a tied group falls within the interval between the smallest possible rank and the largest possible rank:
$\omega=1$ assigns the largest rank $F_{Y\mid X}(y\mid X)$; $\omega=0$ assigns the smallest rank
$F_{Y\mid X}^{-}(y\mid X)$; and $\omega=0.5$ assigns the midpoint, i.e., the \emph{mid-rank} (see, e.g.,
\citealp{hoeffding1992class}). Table~\ref{tab:ranks-three-defs_en} illustrates the three choices.

\setlength{\heavyrulewidth}{1.2pt}
\setlength{\lightrulewidth}{0.6pt}
\setlength{\belowcaptionskip}{4pt}
\renewcommand{\arraystretch}{1.15}
\setlength{\tabcolsep}{4pt}
\begin{table}[htbp]
\centering
\caption{An example of three tie-breaking rank definitions (adapted from \cite{chetverikov2023inference}).}
\label{tab:ranks-three-defs_en}
\begin{tabular}{l*{10}{c}}
\toprule
$i$   & 1 & 2 & 3 & 4 & 5 & 6 & 7 & 8 & 9 & 10 \\
$Y_i$ & 3 & 4 & \textbf{7} & \textbf{7} & 10 & 11 & \textbf{15} & \textbf{15} & \textbf{15} & \textbf{15} \\
\midrule
Smallest rank ($\omega=0$)   & 0.1 & 0.2 & \textbf{0.3}  & \textbf{0.3}  & 0.5 & 0.6 & \textbf{0.7}  & \textbf{0.7}  & \textbf{0.7}  & \textbf{0.7}  \\
Mid-rank ($\omega=0.5$)      & 0.1 & 0.2 & \textbf{0.35} & \textbf{0.35} & 0.5 & 0.6 & \textbf{0.85} & \textbf{0.85} & \textbf{0.85} & \textbf{0.85} \\
Largest rank ($\omega=1$)    & 0.1 & 0.2 & \textbf{0.4}  & \textbf{0.4}  & 0.5 & 0.6 & \textbf{1}    & \textbf{1}    & \textbf{1}    & \textbf{1}    \\
\bottomrule
\end{tabular}
\end{table}

In the continuous case, since $U,V\sim \mathrm{Unif}(0,1)$, the CRRR slope $\rho_C$ equals the conditional
Spearman's rank correlation $\rho_S$. In the discrete case, however, $U_\omega$ and $V_\omega$ are generally not
uniform, and $\rho_C$ is no longer equal to $\rho_S$. Instead,
\[
\rho_C
=
\frac{\operatorname{Cov}(U,V)}{\operatorname{Var}(V)}
=
\frac{\operatorname{Cov}(U,V)}{\sqrt{\operatorname{Var}(U)\operatorname{Var}(V)}}
\times
\frac{\sqrt{\operatorname{Var}(U)}}{\sqrt{\operatorname{Var}(V)}}
=
\rho_S \times \frac{\sqrt{\operatorname{Var}(U)}}{\sqrt{\operatorname{Var}(V)}}.
\]
In the continuous case, $\operatorname{Var}(U)=\operatorname{Var}(V)=1/12$, hence $\rho_C=\rho_S$. In the discrete
case, the standard-deviation ratio may deviate from $1$, so $\rho_C$ can even fall outside $[-1,1]$ and should
not be interpreted directly as a correlation coefficient. In empirical work, it is therefore important to
distinguish whether one targets the Spearman-type rank correlation $\rho_S$ or the regression slope $\rho_C$; if
a correlation interpretation is desired, one may rescale $\rho_C$ by the standard-deviation ratio.

\subsubsection{Impact of $\omega$ on $\rho_C$}
With the parametric rank definition above, the estimator becomes $\widehat\rho_C(\omega)$ and varies with the
tie parameter $\omega$. In the discrete ordinal simulations in Section~\ref{ch:num_exps}, we consider 101 grid
points in $\omega\in[0,1]$ and use large-scale Monte Carlo simulation to approximate the population quantities
$\rho_C(\omega)$, $\rho_S(\omega)$, and $\mathrm{sd}(U_\omega)/\mathrm{sd}(V_\omega)$. The resulting curves are
reported in Figure~\ref{fig:rho_omega}. The figure shows that the choice of $\omega$ (i.e., the rank definition)
is crucial for intergenerational mobility measurement with discrete outcomes: different $\omega$ values may lead
to materially different conclusions, sometimes even reversing the qualitative direction. Therefore, when
outcomes are discrete, one must pre-specify and report the rank definition; otherwise, mobility conclusions are invalid.

\begin{figure}[htbp]
    \centering
    \includegraphics[width=\textwidth]{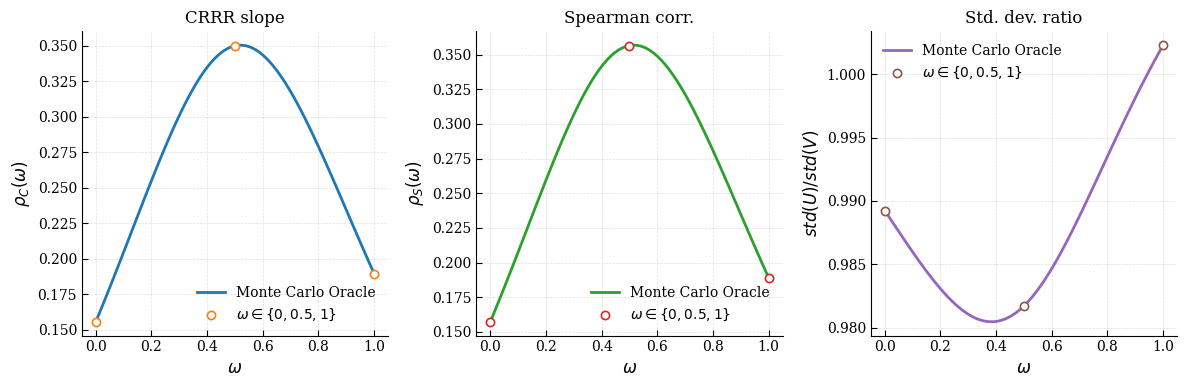}
    \caption{$\rho_C(\omega)$, $\rho_S(\omega)$, and $\mathrm{sd}(U_\omega)/\mathrm{sd}(V_\omega)$ as functions of $\omega$.}
    \label{fig:rho_omega}
\end{figure}

\subsection{Bootstrap Inference}\label{sec:bootstrap}
In Section~\ref{ch:theory}, we establish the consistency and asymptotic normality of
$\widehat\rho_C$, $\widetilde\rho_C$, and $\breve\rho_C$ in the continuous case. Although one can construct
plug-in variance estimators from their asymptotic linear representations, the implementation is often
nontrivial. We therefore follow \citet{chernozhukov2024conditional} and use an \emph{exchangeable bootstrap} to
conduct inference, obtaining standard errors and confidence intervals for
$\widehat\rho_C$, $\widetilde\rho_C$, and $\breve\rho_C$. The validity of the bootstrap procedure is proved in
Section~\ref{ch:theory}.

The exchangeable bootstrap is a general resampling-by-reweighting scheme: by appropriately choosing the
distribution of the weight vector $(\omega_{n1},\ldots,\omega_{nn})$, it can cover many common bootstrap methods,
including the empirical bootstrap, weighted bootstrap, and wild bootstrap; see \citet{van1996weak}. For example,
the empirical bootstrap corresponds to $(\omega_{n1},\ldots,\omega_{nn})$ being multinomial with parameters
$n$ and cell probabilities $(1/n,\ldots,1/n)$.

We now describe how to use the exchangeable bootstrap to compute standard errors and asymptotic confidence
intervals for $\rho_C$. Let $B$ be the number of bootstrap repetitions and $\alpha$ the significance level for
a $(1-\alpha)$ confidence interval (e.g., $B=500$ and $\alpha=0.05$).

\medskip 
\noindent\textbf{Algorithm 2 (Exchangeable Bootstrap Inference for $\rho_C$).} 
\begin{enumerate} 
\item \textbf{Main estimates.} Apply Algorithm~1 with $K$ folds to obtain the main estimates $\widehat\rho_C$, $\widetilde\rho_C$, and $\breve\rho_C$. 
\item \textbf{Generate and normalize weights.} For each bootstrap repetition $b=1,\ldots,B$, generate a weight vector $(\omega_1^{(b)},\ldots,\omega_n^{(b)})$ satisfying Assumption~\ref{ass:bootstrap}, and normalize: \[ \widetilde{\omega}_i^{(b)} := \frac{\omega_i^{(b)}}{\frac{1}{n}\sum_{j=1}^n \omega_j^{(b)}}, \qquad i=1,\ldots,n. \] 
\item \textbf{Weighted DCTM fitting (fold by fold).} For each $R\in\{Y,W\}$ and each fold $k=1,\ldots,K$, fit the DCTM on the training set $\mathcal{I}_k^c$ via weighted maximum likelihood. Let $n_k^c:=|\mathcal{I}_k^c|$. 
Define the weighted NLL: 
\[ 
\mathrm{NLL}^{\mathrm{w}}_{R} = -\frac{1}{n_k^c}\sum_{i\in\mathcal{I}_k^c}\widetilde{\omega}_i^{(b)}\, \ell_R\!\left(\theta;R_i,X_i\right). 
\] 
This yields fold-specific bootstrap CDF estimators $\widehat F_{Y\mid X}^{*(b,-k)}$ and $\widehat F_{W\mid X}^{*(b,-k)}$. For each $i\in\mathcal{I}_k$, compute the fold-$k$ out-of-fold conditional ranks 
\[ 
\widehat U_i^{*(b)}=\widehat F_{Y\mid X}^{*(b,-k)}(Y_i\mid X_i), \qquad \widehat V_i^{*(b)}=\widehat F_{W\mid X}^{*(b,-k)}(W_i\mid X_i). 
\] 
If $Y$ and/or $W$ are discrete, define ranks via \eqref{eq:rankY_omega_en}--\eqref{eq:rankW_omega_en} with the chosen tie parameter $\omega$. \item \textbf{Bootstrap estimates via weighted moments.} Using the normalized weights $\{\widetilde{\omega}_i^{(b)}\}_{i=1}^n$, compute $\widehat\rho_C^{*(b)}$, $\widetilde\rho_C^{*(b)}$, and $\breve\rho_C^{*(b)}$ by replacing all sample moments (means, variances, covariances) in the original estimators with their weighted analogues. For example, \[ \breve\rho_C^{*(b)} = \frac{12}{n}\sum_{i=1}^{n}\widetilde{\omega}_i^{(b)} \left(\widehat U_i^{*(b)}-\frac{1}{2}\right) \left(\widehat V_i^{*(b)}-\frac{1}{2}\right). \] \item \textbf{Repeat.} Repeat Steps~2--4 for $b=1,\ldots,B$ to obtain bootstrap draws $\{\widehat\rho_C^{*(b)}\}_{b=1}^B$, $\{\widetilde\rho_C^{*(b)}\}_{b=1}^B$, and $\{\breve\rho_C^{*(b)}\}_{b=1}^B$. 
\item \textbf{Standard error and confidence interval (example: $\widehat\rho_C$).} Define \[ Z_b^{*}=\sqrt{n}\left(\widehat\rho_C^{*(b)}-\widehat\rho_C\right), \qquad b=1,\ldots,B. \] Let $q_{0.75}$ and $q_{0.25}$ be the empirical $75\%$ and $25\%$ quantiles of $\{Z_b^{*}\}_{b=1}^B$. 
Set \[ \widehat{\sigma}_{\rho} = \frac{q_{0.75}-q_{0.25}}{z_{0.75}-z_{0.25}}, \qquad z_p=\Phi^{-1}(p), \] and estimate the standard error by 
\[ 
\widehat{\mathrm{SE}}(\widehat\rho_C)=\frac{\widehat{\sigma}_{\rho}}{\sqrt{n}}. 
\] 
Next define the studentized statistics \[ T_b=\frac{|Z_b^{*}|}{\widehat{\sigma}_{\rho}}, \qquad b=1,\ldots,B, \] and let $\widehat t_{1-\alpha}$ be the empirical $(1-\alpha)$ quantile of $\{T_b\}_{b=1}^B$. 
Then an asymptotic $(1-\alpha)$ confidence interval is \[ \mathrm{ACI}_{1-\alpha} = \left[ \widehat\rho_C-\widehat t_{1-\alpha}\frac{\widehat{\sigma}_{\rho}}{\sqrt{n}}, \; \widehat\rho_C+\widehat t_{1-\alpha}\frac{\widehat{\sigma}_{\rho}}{\sqrt{n}} \right]. \] The same construction applies to $\widetilde\rho_C$ and $\breve\rho_C$. 
\end{enumerate}

\section{Asymptotic Theory}\label{ch:theory}

This section develops the asymptotic theory for the CRRR slope estimator $\rho_C$ in the \emph{continuous}
outcome case where both $Y$ and $W$ are continuous. In the discrete case, the conditional ranks are step
functions and hence non-differentiable; the resulting inference theory becomes substantially more involved.
We therefore do not pursue the discrete case here and leave it for future research.

As discussed earlier, when $Y$ and $W$ are continuous, $\rho_C$ admits three equivalent representations
(cf.\ \eqref{eq:spearman_rho_C} and \eqref{eq:half}), namely the OLS-slope form $\rho_C^{\mathrm{ols}}$,
the rank-correlation form $\rho_C^{\mathrm{corr}}$, and the covariance form $\rho_C^{\mathrm{cov}}$:
\begin{align}
\rho_C^{\mathrm{ols}}
&=\frac{\mathrm{Cov}(U,V)}{\mathrm{Var}(V)}, \label{eq:ols_1}\\
\rho_C^{\mathrm{corr}}
&=\frac{\mathrm{Cov}(U,V)}{\sqrt{\mathrm{Var}(U)\mathrm{Var}(V)}}, \label{eq:corr_2}\\
    \rho_C^{cov}&=12 \mathbb{E}[(U-\tfrac{1}{2})(V-\tfrac{1}{2})]. \label{eq:cov_3}
\end{align}
Their sample counterparts are $\widehat\rho_C$, $\widetilde\rho_C$, and $\breve\rho_C$ defined in
\eqref{eq:ols}--\eqref{eq:corr_1_2}.

\medskip

For each $R\in\{Y,W\}$, as described in subsection~\ref{sec:dctm}, we first apply a
normalization:
\[
g(r)=\sigma\!\left(\frac{r-m}{s}\right),
\qquad
\sigma(a)=\frac{1}{1+e^{-a}},
\]
where $m$ and $s>0$ are fixed scale constants chosen so that most observations fall into the ``nearly linear''
region of the sigmoid curve, preventing excessive mass from being compressed near $0$ or $1$. Since $g(\cdot)$ is
strictly increasing, this normalization does not change the overall rank ordering nor the resulting $\rho_C$.
For notational convenience, we will still write $R\in\{Y,W\}$ for the normalized variables below.

\medskip

For continuous outcomes, DCTM uses the standard normal distribution as the baseline. The transformation function
is expanded using Bernstein basis functions $\{B_{j,J}\}_{j=0}^{J}$ and is constrained to be monotone
nondecreasing through a structured nondecreasing coefficient sequence $\{\beta_j\}_{j=0}^{J}$:
\[
F_{R\mid X}(r\mid x;\theta_R)=\Phi\!\Big(T(r,x;\theta_R)\Big),
\qquad
T(r,x;\theta_R)=\sum_{j=0}^{J}\beta_j(x;\theta_R)\,B_{j,J}(r),
\]
where $\theta_R$ is the network parameter vector, $\Phi(\cdot)$ is the CDF of $\mathcal{N}(0,1)$, and
$T(\cdot,x;\theta_R)$ is monotone nondecreasing in $r$. Training is carried out by maximum likelihood, i.e.,
minimizing the negative log-likelihood (NLL).

\medskip

In the cross-fitting stage, we randomly partition the sample indices $\{1,\ldots,n\}$ into $K$ disjoint folds
$\{\mathcal I_k\}_{k=1}^K$. For each fold $k$, let $n_k:=|\mathcal I_k|$ and denote the training set by
$\mathcal I_k^c$ with size $n_{ck}:=|\mathcal I_k^c|$. We fit DCTM on $\mathcal I_k^c$ to obtain fold-specific
conditional CDF estimators $\widehat F_{Y\mid X}^{(-k)}$ and $\widehat F_{W\mid X}^{(-k)}$, and then compute
out-of-fold (OOF) conditional ranks for $i\in\mathcal I_k$:
\[
\widehat U_i:=\widehat F_{Y\mid X}^{(-k)}(y_i\mid x_i),
\qquad
\widehat V_i:=\widehat F_{W\mid X}^{(-k)}(w_i\mid x_i).
\]
Finally, substituting $\{(\widehat U_i,\widehat V_i)\}_{i=1}^n$ into \eqref{eq:ols}--\eqref{eq:corr_1_2}
yields $\widehat\rho_C$, $\widetilde\rho_C$, and $\breve\rho_C$.

\medskip

Throughout this section we consider a \emph{fixed-complexity} asymptotic regime: the DCTM architecture (depth,
width, activation functions), the Bernstein order $J$, training hyperparameters, and the number of folds $K$ do
not vary with $n$. Moreover, the $K$ folds are approximately balanced so that $n_k/n\to 1/K$ and
$n_{ck}/n\to 1-1/K$. For each $R\in\{Y,W\}$, let $\Theta_R$ be the parameter space and define the induced class of
conditional CDFs
\[
\mathcal F_R
:=
\Big\{
F(\cdot\mid\cdot)=F_{R\mid X}(\cdot\mid\cdot;\theta_R):\ \theta_R\in\Theta_R
\Big\}.
\]
Because neural networks often have permutation symmetries and scale invariances, $\theta_R$ is typically not
identifiable in a strict statistical sense: different $\theta_R$ may correspond to the same conditional CDF
$F_{R\mid X}$. Therefore, we formulate the asymptotic theory directly at the functional level in terms of
$F_{R\mid X}$, avoiding identifiability issues for $\theta_R$. Proofs of the lemmas and theorems are deferred to
the appendix. We begin with assumptions.

\medskip

\begin{assumption}[Continuous case]\label{ass:data_fun}
The observed sample $\{Z_i\}_{i=1}^n=\{(Y_i,W_i,X_i)\}_{i=1}^n$ is i.i.d., where $Y,W\in[0,1]$ are the
(normalized) continuous outcomes and $X\in\mathcal X\subset\mathbb R^{p}$ is the covariate vector with $\mathcal
X$ compact. For almost every $x\in\mathcal X$, the conditional CDFs $F_{Y\mid X}(\cdot\mid x)$ and
$F_{W\mid X}(\cdot\mid x)$ are continuous and strictly increasing, and admit conditional densities
$f_{Y\mid X}$ and $f_{W\mid X}$ that are continuous on their supports and are locally bounded and strictly
positive almost everywhere.
\end{assumption}

\begin{assumption}[Correct specification of DCTM]\label{ass:spec_fun}
For each $R\in\{Y,W\}$, the true conditional CDF satisfies $F_{R\mid X}\in\mathcal F_R$. Let
$\ell_R(F;R,X)$ denote the single-observation negative log-likelihood loss for $F\in\mathcal F_R$, and define the
population risk
\[
\mathcal Q_R(F):=\mathbb{E} \big[\ell_R(F;R,X)\big],
\qquad F\in\mathcal F_R.
\]
Assume that the minimizer of $\mathcal Q_R(\cdot)$ over $\mathcal F_R$ is unique in the function sense and is
attained by the truth $F_{R\mid X}$: if $\mathcal Q_R(\widetilde F)=\inf_{G\in\mathcal F_R}\mathcal Q_R(G)$, then
$\widetilde F=F_{R\mid X}$ almost everywhere.
\end{assumption}

\begin{assumption}[Empirical-process regularity and functional asymptotic linearity]\label{ass:emp_process}
For each $R\in\{Y,W\}$ and each fold $k$, define the training-sample empirical risk
\[
\mathcal Q^{(-k)}_{R,n}(F)
:=
\frac1{n_{ck}}\sum_{i\in\mathcal I_k^c}\ell_R(F;R_i,X_i),
\qquad F\in\mathcal F_R,
\]
and let the fold-specific estimator be
\[
\widehat F_{R\mid X}^{(-k)}\in\arg\min_{F\in\mathcal F_R}\mathcal Q^{(-k)}_{R,n}(F).
\]
Assume:
\smallskip

\noindent (i) For each fixed $k$,
\[
\sup_{F\in\mathcal F_R}\big|\mathcal Q^{(-k)}_{R,n}(F)-\mathcal Q_R(F)\big|\xrightarrow{p}0.
\]

\smallskip
\noindent (ii) There exists a measurable map
$\varphi_R:[0,1]\times\mathcal X\times\mathcal Z\to\mathbb R$ such that, for each fixed $k$,
\[
\sup_{(r,x)\in[0,1]\times\mathcal X}
\left|
\sqrt{n_{ck}}\Big(\widehat F_{R\mid X}^{(-k)}(r\mid x)-F_{R\mid X}(r\mid x)\Big)
-
\frac1{\sqrt{n_{ck}}}\sum_{j\in \mathcal I_k^c}\varphi_R(r,x;Z_j)
\right|\xrightarrow{p}0,
\]
and $E[\varphi_R(r,x;Z)]=0$ for all $(r,x)$. Moreover, there exists some $\delta>0$ such that
\[
\mathbb{E} \Big[\sup_{(r,x)}|\varphi_R(r,x;Z)|^{2+\delta}\Big]<\infty.
\]
\end{assumption}

\begin{remark}
Assumption~\ref{ass:emp_process} imposes high-level regularity conditions for M-estimation / empirical risk
minimization (ERM) over the function class $\mathcal F_R$. Assumption~\ref{ass:emp_process}(ii) provides a
$\sqrt{n}$-asymptotic linear expansion for $\widehat F_{R\mid X}^{(-k)}$. Under fixed complexity and standard
smoothness/identifiability conditions, the parameter estimator $\widehat\theta_R$ admits a $\sqrt n$ asymptotic
linear representation, and the Delta method transfers this linearization to $F_{R\mid X}(r\mid x)$; see, e.g.,
\citep{newey1994large,van1996weak,van2000asymptotic}.
\end{remark}

\begin{lemma}[Consistency of fold-specific conditional CDF estimators]\label{lem:dctm_consistency_fun}
Under Assumptions~\ref{ass:data_fun}--\ref{ass:emp_process}, for each $R\in\{Y,W\}$ and each fixed fold $k$,
\[
\sup_{(r,x)\in[0,1]\times\mathcal X}
\Big|\widehat F_{R\mid X}^{(-k)}(r\mid x)-F_{R\mid X}(r\mid x)\Big|
\xrightarrow{p}0.
\]
\end{lemma}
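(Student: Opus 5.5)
The quickest route runs through Assumption~\ref{ass:emp_process}(ii). It gives a uniform asymptotically linear representation: uniformly in $(r,x)$,
\[
\widehat F_{R\mid X}^{(-k)}(r\mid x)-F_{R\mid X}(r\mid x)=\frac{1}{n_{ck}}\sum_{j\in\mathcal I_k^c}\varphi_R(r,x;Z_j)+o_p(n_{ck}^{-1/2}).
\]
So it suffices to show that the leading term is $o_p(1)$ uniformly in $(r,x)$. The plan is to bound it by
\[
\sup_{(r,x)}\Big|\frac{1}{n_{ck}}\sum_{j}\varphi_R(r,x;Z_j)\Big|.
\]
This is $n_{ck}^{-1/2}$ times the supremum of a normalized empirical process indexed by $(r,x)$.

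The main obstacle is that the stated moment condition controls the envelope $\Phi_R(Z):=\sup_{(r,x)}|\varphi_R(r,x;Z)|$, which lies in $L^{2+\delta}$. That alone does not make $\{\varphi_R(r,x;\cdot)\}$ a Glivenko--Cantelli class. I would handle this in one of two ways.

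\begin{enumerate}
\item \emph{Read (ii) as implicitly asserting $O_p(1)$.} Assumption~\ref{ass:emp_process}(ii) is meant to deliver a $\sqrt n$-rate, and $\sqrt{n_{ck}}\,\sup|\widehat F-F|=O_p(1)$ is implicit in it whenever the linear term is tight. In the fixed-complexity regime the influence function is $\varphi_R(r,x;Z)=\nabla_\theta F(r\mid x;\theta_R^\ast)^\top \psi(Z)$, where $\psi$ is a finite-dimensional score. The linear term is then bounded by
\[
\sup_{(r,x)}\|\nabla_\theta F(r\mid x;\theta_R^\ast)\|\cdot\Big\|\frac{1}{n_{ck}}\sum_j\psi(Z_j)\Big\|=O(1)\cdot O_p(n_{ck}^{-1/2}).
\]
\item \emph{Argue directly from Assumptions~\ref{ass:spec_fun} and \ref{ass:emp_process}(i).} This is an argmin-consistency argument that avoids (ii) altogether.
 \begin{enumerate}
 \item By (i) and the definition of $\widehat F^{(-k)}$ as an ERM, $\mathcal Q_R(\widehat F^{(-k)})\le \inf_{\mathcal F_R}\mathcal Q_R+2\sup_{\mathcal F_R}|\mathcal Q^{(-k)}_{R,n}-\mathcal Q_R|$. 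The last term is $o_p(1)$.
 \item Equip $\mathcal F_R$ with the sup-norm topology. Under fixed complexity, $\Theta_R$ is compact and $\theta\mapsto F(\cdot\mid\cdot;\theta)$ is continuous from $\Theta_R$ into the space of bounded functions on $[0,1]\times\mathcal X$. Hence $\mathcal F_R$ is compact in this topology.
 \item The population risk $\mathcal Q_R$ is lower semicontinuous on $\mathcal F_R$. Combined with the unique minimizer $F_{R\mid X}$ from Assumption~\ref{ass:spec_fun}, this gives a well-separated minimum: $\inf\{\mathcal Q_R(F):\|F-F_{R\mid X}\|_\infty\ge\epsilon\}>\mathcal Q_R(F_{R\mid X})$ for every $\epsilon>0$.
 \item The standard consistency argument for M-estimators then yields $\|\widehat F^{(-k)}-F_{R\mid X}\|_\infty\xrightarrow{p}0$.
 \end{enumerate}
\end{enumerate}

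Uniqueness in Assumption~\ref{ass:spec_fun} is stated only almost everywhere, while the conclusion is a uniform statement. I would upgrade a.e.\ equality to equality everywhere using continuity of the DCTM-induced CDFs in $(r,x)$, which holds because the Bernstein basis and the network outputs are continuous, together with Assumption~\ref{ass:data_fun}. With that, either route gives the claimed uniform convergence for each fixed $k$ and each $R\in\{Y,W\}$.
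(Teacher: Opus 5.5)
Your second route is essentially the paper's proof. The paper also combines Assumption~\ref{ass:emp_process}(i) with the unique population minimizer of Assumption~\ref{ass:spec_fun} to get convergence of $\widehat F_{R\mid X}^{(-k)}$. It then invokes compactness of $[0,1]\times\mathcal X$ and continuity of the members of $\mathcal F_R$ to make that convergence uniform.

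Your version is the more careful of the two. Pointwise or a.e.\ convergence of continuous functions on a compact set does not become uniform without equicontinuity. You supply this by making $\mathcal F_R$ sup-norm compact (compact $\Theta_R$ together with sup-norm continuity of $\theta\mapsto F(\cdot\mid\cdot;\theta)$), and then running the well-separated-minimum argument directly in the sup metric.

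Compactness of $\Theta_R$ and lower semicontinuity of $\mathcal Q_R$ are extra regularity, not consequences of the stated assumptions. Two points need attention:
\begin{itemize}
\item Justify lower semicontinuity at the parameter level via Fatou's lemma. On a compact $\Theta_R$, $\partial_r T$ is bounded above, so the NLL is bounded below. You cannot argue in the sup norm on $F$ alone, because sup-norm closeness of $F$ does not control the density term in the loss.
\item The a.e.-to-everywhere upgrade requires $\mathcal X$ to coincide with the support of $X$. The paper's proof leaves this requirement implicit as well.
\end{itemize}

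Your first route should not be counted as an alternative proof. The $L^{2+\delta}$ envelope in Assumption~\ref{ass:emp_process}(ii) gives no uniform law of large numbers for $\{\varphi_R(r,x;\cdot)\}$. The form $\varphi_R=\nabla_\theta F^\top\psi$ also presupposes an identified $\theta_R^\ast$, which the paper explicitly declines to assume.
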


\begin{theorem}[Consistency of $\rho_C$ estimators]\label{thm:consistency_rho_fun}
Under Assumptions~\ref{ass:data_fun}--\ref{ass:emp_process}, the three (equivalent) estimators
$\widehat\rho_C$, $\widetilde\rho_C$, and $\breve\rho_C$ satisfy
\[
\widehat\rho_C\xrightarrow{p}\rho_C,
\qquad
\widetilde\rho_C\xrightarrow{p}\rho_C,
\qquad
\breve\rho_C\xrightarrow{p}\rho_C.
\]
\end{theorem}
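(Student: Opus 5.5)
The plan is to compare the feasible estimators with their oracle counterparts built from the true conditional ranks $U_i=F_{Y\mid X}(Y_i\mid X_i)$ and $V_i=F_{W\mid X}(W_i\mid X_i)$. I will show that all the sample moments entering \eqref{eq:ols}--\eqref{eq:corr_1_2} differ from their oracle versions by $o_p(1)$. The oracle moments converge by the weak law of large numbers, and the continuous mapping theorem then finishes the argument.

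\textbf{Step 1 (uniform rank error).} By Lemma~\ref{lem:dctm_consistency_fun}, for each fold $k$ and each $R\in\{Y,W\}$ we have $\Delta_{R,n}^{(k)}:=\sup_{(r,x)}|\widehat F^{(-k)}_{R\mid X}(r\mid x)-F_{R\mid X}(r\mid x)|\xrightarrow{p}0$. Since $K$ is fixed, the maximum over folds $\Delta_n:=\max_k\max_{R}\Delta_{R,n}^{(k)}$ is also $o_p(1)$. Every $i\in\mathcal I_k$ has $(Y_i,X_i)\in[0,1]\times\mathcal X$ under Assumption~\ref{ass:data_fun}, so $\max_{1\le i\le n}\big(|\widehat U_i-U_i|+|\widehat V_i-V_i|\big)\le 2\Delta_n=o_p(1)$. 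Because this bound is uniform over the evaluation points, cross-fitting dependence plays no role: we never need independence between $\widehat F^{(-k)}$ and $Z_i$ for consistency.

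\textbf{Step 2 (moment perturbation).} All quantities lie in $[0,1]$ (both estimated and true CDFs are CDFs). For any products, $|\widehat U_i\widehat V_i-U_iV_i|\le|\widehat U_i-U_i|+|\widehat V_i-V_i|$. Hence
\[
\Big|\tfrac1n\sum_i\widehat U_i\widehat V_i-\tfrac1n\sum_i U_iV_i\Big|\le 2\Delta_n,
\]
and the analogous bounds hold for $\overline{\widehat U}$, $\overline{\widehat V}$, $\tfrac1n\sum\widehat U_i^2$ and $\tfrac1n\sum\widehat V_i^2$. Thus the sample covariance and variances built from $(\widehat U_i,\widehat V_i)$ differ from the oracle ones by $O(\Delta_n)=o_p(1)$, and the same holds for $\tfrac{12}{n}\sum(\widehat U_i-\tfrac12)(\widehat V_i-\tfrac12)$.

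\textbf{Step 3 (LLN and continuous mapping).} The pairs $(U_i,V_i)$ are i.i.d. and bounded, so the WLLN gives oracle sample moments converging to $\mathrm{Cov}(U,V)$, $\mathrm{Var}(U)=\mathrm{Var}(V)=1/12$ and $\mathbb E[(U-\tfrac12)(V-\tfrac12)]$. Here we use that $U,V\sim\mathrm{Unif}(0,1)$ under Assumption~\ref{ass:data_fun} (continuity and strict monotonicity of the conditional CDFs). The denominators converge to $1/12>0$, so the maps $(a,b,c)\mapsto a/c$ and $(a,b,c)\mapsto a/\sqrt{bc}$ are continuous at the limit. The continuous mapping theorem then yields $\widehat\rho_C,\widetilde\rho_C\xrightarrow{p}\mathrm{Cov}(U,V)/(1/12)=\rho_C$, and directly $\breve\rho_C\xrightarrow{p}12\,\mathbb E[(U-\tfrac12)(V-\tfrac12)]=\rho_C$. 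These limits coincide by \eqref{eq:ols_1}--\eqref{eq:cov_3}.

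\textbf{Main obstacle.} The only delicate point is Step 1. It relies on the uniform-in-$(r,x)$ form of Lemma~\ref{lem:dctm_consistency_fun}. Pointwise consistency alone would not let me plug in the random out-of-fold points $(Y_i,X_i)$, nor take a maximum over $n$ observations. Once uniformity over $[0,1]\times\mathcal X$ is available, which is exactly what the lemma states, the remainder is elementary. I would also note explicitly that the denominators are bounded away from zero with probability tending to one, so the ratios are well defined.
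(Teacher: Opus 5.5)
Your proposal is correct and follows essentially the same route as the paper's proof. Both arguments run as follows: uniform consistency from Lemma~\ref{lem:dctm_consistency_fun}, then a maximum over the fixed $K$ folds to bound every out-of-fold rank error, then a boundedness-based perturbation bound that transfers this to the sample moments, and finally the law of large numbers for the oracle ranks $(U_i,V_i)$ plus the continuous mapping theorem. The only cosmetic difference is that you perturb raw moments (means, cross-products, squares), whereas the paper bounds the centered products $(\widehat U_i-\overline{\widehat U})(\widehat V_i-\overline{\widehat V})$ directly; this changes nothing of substance.
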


\medskip

Define the following sample moments:
\[
A_n:=\mathbb P_n\big[(\widehat U-\tfrac12)(\widehat V-\tfrac12)\big],
\qquad
B_{n,U}:=\mathbb P_n\big[(\widehat U-\tfrac12)^2\big],
\qquad
B_{n,V}:=\mathbb P_n\big[(\widehat V-\tfrac12)^2\big],
\]
and their population counterparts:
\[
A:=\mathbb{E} \big[(U-\tfrac12)(V-\tfrac12)\big]=\frac{\rho_C}{12},
\qquad
B_U:=\mathbb{E} \big[(U-\tfrac12)^2\big]=\frac{1}{12},
\qquad
B_V:=\mathbb{E} \big[(V-\tfrac12)^2\big]=\frac{1}{12}.
\]
Under Assumption~\ref{ass:emp_process} (ii), let $\widetilde Z=(\widetilde Y,\widetilde W,\widetilde X)$ be an
independent copy of $Z$, and set $\widetilde U=F_{Y\mid X}(\widetilde Y\mid \widetilde X)$ and
$\widetilde V=F_{W\mid X}(\widetilde W\mid \widetilde X)$. For $R\in\{Y,W\}$, define the following conditional
expectation functions (as functions of $Z$):
\begin{align*}
\gamma_Y(Z)
&:=\mathbb{E} \Big[(\widetilde V-\tfrac12)\,\varphi_Y(\widetilde Y,\widetilde X;Z)\ \Big|\ Z\Big],\\
\gamma_W(Z)
&:=\mathbb{E} \Big[(\widetilde U-\tfrac12)\,\varphi_W(\widetilde W,\widetilde X;Z)\ \Big|\ Z\Big],\\
\delta_Y(Z)
&:=2\,\mathbb{E} \Big[(\widetilde U-\tfrac12)\,\varphi_Y(\widetilde Y,\widetilde X;Z)\ \Big|\ Z\Big],\\
\delta_W(Z)
&:=2\,\mathbb{E} \Big[(\widetilde V-\tfrac12)\,\varphi_W(\widetilde W,\widetilde X;Z)\ \Big|\ Z\Big].
\end{align*}
Since $\mathbb{E} [\varphi_R(r,x;Z)]=0$ for all $(r,x)$, it follows that
$\mathbb{E} [\gamma_Y(Z)]=\mathbb{E} [\gamma_W(Z)]=\mathbb{E} [\delta_Y(Z)]=\mathbb{E} [\delta_W(Z)]=0$.
Moreover, Assumption~\ref{ass:emp_process} (ii) implies these functions have finite second moments.

\medskip

\begin{lemma}[Joint asymptotic linearity of $(A_n,B_{n,U},B_{n,V})$]\label{lem:ABlin_fun}
Under Assumptions~\ref{ass:data_fun}--\ref{ass:emp_process}, there exists a mean-zero, finite-variance influence
function vector $\Psi(Z)=(\psi_A(Z),\psi_U(Z),\psi_V(Z))^\top$ such that
\[
\sqrt n\Big((A_n-A,\ B_{n,U}-B_U,\ B_{n,V}-B_V)^\top\Big)
=
\frac{1}{\sqrt n}\sum_{i=1}^n \Psi(Z_i)+o_p(1),
\]
where one may take
\begin{align*}
\psi_A(Z)
&=\Big((U-\tfrac12)(V-\tfrac12)-A\Big)+\gamma_Y(Z)+\gamma_W(Z),\\
\psi_U(Z)
&=\Big((U-\tfrac12)^2-B_U\Big)+\delta_Y(Z),\\
\psi_V(Z)
&=\Big((V-\tfrac12)^2-B_V\Big)+\delta_W(Z).
\end{align*}
\end{lemma}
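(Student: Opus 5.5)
The plan is to linearize each sample moment around the oracle ranks $U_i=F_{Y\mid X}(Y_i\mid X_i)$ and $V_i=F_{W\mid X}(W_i\mid X_i)$. I then use the cross-fitting structure to turn the estimation error into an i.i.d.\ average. I describe the argument for $A_n$; the terms $B_{n,U}$ and $B_{n,V}$ are handled identically. Write $\Delta^U_i:=\widehat U_i-U_i$ and $\Delta^V_i:=\widehat V_i-V_i$. Then
\[
A_n-A=\Big(\mathbb P_n\big[(U-\tfrac12)(V-\tfrac12)\big]-A\Big)+\mathbb P_n\big[(V-\tfrac12)\Delta^U\big]+\mathbb P_n\big[(U-\tfrac12)\Delta^V\big]+\mathbb P_n\big[\Delta^U\Delta^V\big].
\]
The first bracket is already an i.i.d.\ mean-zero average and gives the oracle part of $\psi_A$. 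For $B_{n,U}$ the analogous expansion has cross term $2\,\mathbb P_n[(U-\tfrac12)\Delta^U]$, which explains the factor $2$ in $\delta_Y$. Its quadratic term is $\mathbb P_n[(\Delta^U)^2]$.

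\textbf{Step 1 (quadratic terms).} I will show $\mathbb P_n[\Delta^U\Delta^V]=o_p(n^{-1/2})$ and likewise for $\mathbb P_n[(\Delta^U)^2]$, via Cauchy--Schwarz. It suffices that $\mathbb P_n[(\Delta^U)^2]=O_p(n^{-1})$. Fix a fold $k$ and $i\in\mathcal I_k$. By Assumption~\ref{ass:emp_process}(ii),
\[
\Delta^U_i=n_{ck}^{-1}\sum_{j\in\mathcal I_k^c}\varphi_Y(Y_i,X_i;Z_j)+o_p(n^{-1/2}),
\]
with the remainder uniform in $(r,x)$. Because $Z_i$ is independent of the training sample, the second moment of the leading term is bounded by $n_{ck}^{-1}\mathbb E[\sup_{(r,x)}|\varphi_Y(r,x;Z)|^2]$, using $\mathbb E\varphi_Y=0$ and independence across $j$. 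This gives $O_p(n^{-1})$ without needing any Donsker-type tightness of the supremum.

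\textbf{Step 2 (cross terms, the main step).} Substituting the linear expansion gives, for fold $k$,
\[
\frac1n\sum_{i\in\mathcal I_k}(V_i-\tfrac12)\Delta_i^U=\frac{1}{n\,n_{ck}}\sum_{i\in\mathcal I_k}\sum_{j\in\mathcal I_k^c}(V_i-\tfrac12)\varphi_Y(Y_i,X_i;Z_j)+o_p(n^{-1/2}).
\]
This is a two-sample statistic with independent index sets, and I would apply a Hoeffding projection to it. The projection onto the training observations is $\frac{n_k}{n\,n_{ck}}\sum_{j\in\mathcal I_k^c}\gamma_Y(Z_j)$. The projection onto the test observations vanishes, since $\mathbb E[\varphi_Y(r,x;Z)]=0$. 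The degenerate remainder has variance of order $1/(n\,n_{ck})$ by the $2+\delta$ moment bound, so it is $o_p(n^{-1/2})$.

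\textbf{Step 3 (reassembling the folds).} Sum over $k$. Each $j$ belongs to exactly $K-1$ complements $\mathcal I_k^c$, and each coefficient satisfies $n_k/n_{ck}\to 1/(K-1)$. Hence
\[
\sum_k\frac{n_k}{n\,n_{ck}}\sum_{j\in\mathcal I_k^c}\gamma_Y(Z_j)=\frac1n\sum_{j=1}^n\gamma_Y(Z_j)+o_p(n^{-1/2}).
\]
The coefficient error is $o(1)$ and multiplies an $O_p(n^{-1/2})$ average. The $W$ cross term yields $\gamma_W$ in the same way, and the $B$ terms yield $\delta_Y$ and $\delta_W$.

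Joint linearity of the vector is immediate because every piece is expanded on the same sample. The influence functions have mean zero and finite variance by the moment condition, as already noted before the statement. The obstacle I anticipate is Step 2. One must control the uniform $o_p$ remainder of Assumption~\ref{ass:emp_process}(ii) after it is averaged over test points that are random. One must also justify the degenerate-part variance bound carefully. The cleanest route is to condition on the training sample $\{Z_j\}_{j\in\mathcal I_k^c}$ and apply a conditional Chebyshev inequality.
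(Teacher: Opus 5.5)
Your proposal is correct. It has the same skeleton as the paper's proof: expand around the oracle ranks, kill the quadratic term by Cauchy--Schwarz, substitute the linear representation from Assumption~\ref{ass:emp_process}(ii) into the cross terms, and reassemble using that each $j$ lies in exactly $K-1$ complements $\mathcal I_k^c$. Where you differ is in how the two key rates are justified, and your versions are the tighter ones.

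For the cross term, the paper swaps sums and invokes a conditional law of large numbers for each fixed $j$, i.e.\ $n_k^{-1}\sum_{i\in\mathcal I_k}(V_i-\tfrac12)\varphi_Y(Y_i,X_i;Z_j)\xrightarrow{p}\gamma_Y(Z_j)$. This controls each error only individually, and each is $O_p(n^{-1/2})$. Without cancellation across $j$, their average multiplied by $\sqrt n$ is merely $O_p(1)$, not $o_p(1)$. Your Hoeffding decomposition supplies the missing cancellation. The projection onto the test points vanishes because $\mathbb E[\varphi_Y(r,x;Z)]=0$ for every $(r,x)$, so what remains is a degenerate two-sample statistic of order $O_p(n^{-1})$.

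Likewise, the paper only asserts $\mathbb P_n[(\widehat U-U)^2]=O_p(n^{-1})$. Deriving it from a uniform bound on $n_{ck}^{-1/2}\sum_j\varphi_Y(r,x;Z_j)$ would require a Donsker-type condition that is not assumed. Your conditional second-moment computation at an independent test point needs only $\mathbb E[\sup_{(r,x)}|\varphi_Y(r,x;Z)|^2]<\infty$. Your handling of the fold weights, with $n_k/n_{ck}-1/(K-1)=o(1)$ multiplying an $O_p(n^{-1/2})$ average, also makes explicit a limit the paper takes silently. The paper's route buys brevity; yours is what actually delivers the $o_p(1)$ remainder.

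The first obstacle you flag is not a real one. The remainder in Assumption~\ref{ass:emp_process}(ii) is a supremum over $(r,x)$, so the linearization error at $(Y_i,X_i)$ is at most $n_{ck}^{-1/2}$ times a single $o_p(1)$ quantity, simultaneously for every test point however random. That is exactly how the paper disposes of it. Only the degenerate part needs your conditioning argument.
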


\begin{theorem}[Asymptotic normality of $\widehat\rho_C$, $\widetilde\rho_C$, and $\breve\rho_C$]
\label{thm:asy_normal_fun}
Under the conditions of Theorem~\ref{thm:consistency_rho_fun} and Lemma~\ref{lem:ABlin_fun},
\[
\sqrt n(\breve\rho_C-\rho_C)\Rightarrow \mathcal{N}(0,\sigma_{\mathrm{cov}}^2),
\qquad
\sqrt n(\widehat\rho_C-\rho_C)\Rightarrow \mathcal{N}(0,\sigma_{\mathrm{ols}}^2),
\qquad
\sqrt n(\widetilde\rho_C-\rho_C)\Rightarrow \mathcal{N}(0,\sigma_{\mathrm{corr}}^2),
\]
where the corresponding influence functions are
\begin{align*}
\psi_{\mathrm{cov}}(Z)
&=12\,\psi_A(Z),\\
\psi_{\mathrm{ols}}(Z)
&=12\Big\{\psi_A(Z)-\rho_C\,\psi_V(Z)\Big\},\\
\psi_{\mathrm{corr}}(Z)
&=12\Big\{\psi_A(Z)-\frac{\rho_C}{2}\big(\psi_U(Z)+\psi_V(Z)\big)\Big\},
\end{align*}
and
\[
\sigma_{\mathrm{cov}}^2=\mathrm{Var} \big(\psi_{\mathrm{cov}}(Z)\big),
\qquad
\sigma_{\mathrm{ols}}^2=\mathrm{Var} \big(\psi_{\mathrm{ols}}(Z)\big),
\qquad
\sigma_{\mathrm{corr}}^2=\mathrm{Var} \big(\psi_{\mathrm{corr}}(Z)\big).
\]
\end{theorem}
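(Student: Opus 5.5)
The plan is to derive all three results from Lemma~\ref{lem:ABlin_fun} using the delta method. The first step is to write each estimator as a smooth function of sample moments. For $\breve\rho_C$ this is immediate: $\breve\rho_C=12A_n$, so $\sqrt n(\breve\rho_C-\rho_C)=12\sqrt n(A_n-A)$, because $\rho_C=12A$.

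For $\widehat\rho_C$ and $\widetilde\rho_C$ the centering is at the sample means $\overline{\widehat U},\overline{\widehat V}$ instead of at $\tfrac12$. The identity
\[
\mathbb P_n\big[(\widehat U-\overline{\widehat U})(\widehat V-\overline{\widehat V})\big]=A_n-(\overline{\widehat U}-\tfrac12)(\overline{\widehat V}-\tfrac12)
\]
handles this, and the analogous identities hold for the two variances. It therefore suffices to show $\sqrt n(\overline{\widehat U}-\tfrac12)=O_p(1)$ and likewise for $\widehat V$. The product of the two mean deviations is then $O_p(n^{-1})=o_p(n^{-1/2})$, and replacing centered moments by $(A_n,B_{n,U},B_{n,V})$ costs only $o_p(n^{-1/2})$.

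The tightness of $\sqrt n(\overline{\widehat U}-\tfrac12)$ follows from the same fold-wise expansion used in Lemma~\ref{lem:ABlin_fun}. For $i\in\mathcal I_k$, write $\widehat U_i-U_i=n_{ck}^{-1}\sum_{j\in\mathcal I_k^c}\varphi_Y(Y_i,X_i;Z_j)+o_p(n^{-1/2})$ uniformly. Averaging over $i$ and exploiting independence across folds gives an $O_p(n^{-1/2})$ term, and the CLT gives $\sqrt n(\mathbb P_nU-\tfrac12)=O_p(1)$ since $U\sim\mathrm{Unif}(0,1)$.

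Next, set $g_{\mathrm{ols}}(a,b_U,b_V)=a/b_V$ and $g_{\mathrm{corr}}(a,b_U,b_V)=a/\sqrt{b_Ub_V}$. Then, up to $o_p(n^{-1/2})$, $\widehat\rho_C=g_{\mathrm{ols}}(A_n,B_{n,U},B_{n,V})$ and $\widetilde\rho_C=g_{\mathrm{corr}}(\cdot)$, with the leftover terms handled by continuity of $g$ near $(A,1/12,1/12)$ together with the consistency in Theorem~\ref{thm:consistency_rho_fun}. Both maps are differentiable at $(A,B_U,B_V)=(\rho_C/12,1/12,1/12)$. The gradients there are:
\begin{itemize}
\item for $g_{\mathrm{ols}}$: $(1/B_V,\,0,\,-A/B_V^2)=(12,0,-12\rho_C)$;
\item for $g_{\mathrm{corr}}$: $\big(1/\sqrt{B_UB_V},\,-\tfrac{A}{2B_U\sqrt{B_UB_V}},\,-\tfrac{A}{2B_V\sqrt{B_UB_V}}\big)=(12,-6\rho_C,-6\rho_C)$.
\end{itemize}
A first-order Taylor expansion combined with Lemma~\ref{lem:ABlin_fun} gives $\sqrt n(\widehat\rho_C-\rho_C)=n^{-1/2}\sum_i\psi_{\mathrm{ols}}(Z_i)+o_p(1)$, and similarly for the other two, with exactly the influence functions stated in the theorem.

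The final step applies the Lindeberg--Lévy CLT to the i.i.d. mean-zero sums $n^{-1/2}\sum_i\psi(Z_i)$, followed by Slutsky's lemma. Finite variance comes from Lemma~\ref{lem:ABlin_fun}: the $(U-\tfrac12)(V-\tfrac12)$ terms are bounded, and the $\gamma,\delta$ terms have second moments by the $2+\delta$ moment bound in Assumption~\ref{ass:emp_process}(ii).

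The main obstacle is the recentering step, namely showing that the sample-mean deviations $\overline{\widehat U}-\tfrac12$ are $O_p(n^{-1/2})$ despite cross-fitting. If the bound on the fold-wise U-statistic-type sum $n^{-1}\sum_k\sum_{i\in\mathcal I_k}n_{ck}^{-1}\sum_{j\in\mathcal I_k^c}\varphi_Y(Y_i,X_i;Z_j)$ turns out delicate, I would bound its second moment directly. Conditional on the training folds, the summands in $i$ are independent, so the variance is $O(1/n)$ by the uniform $L^2$ bound on $\sup|\varphi_Y|$.
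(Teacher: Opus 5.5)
Your proposal is correct and follows the paper's proof essentially step for step. Both write $\breve\rho_C=12A_n$, $\widehat\rho_C\approx A_n/B_{n,V}$ and $\widetilde\rho_C\approx A_n/\sqrt{B_{n,U}B_{n,V}}$. Both apply the delta method to Lemma~\ref{lem:ABlin_fun} with gradients $(12,-12\rho_C)$ and $(12,-6\rho_C,-6\rho_C)$, conclude by the CLT, and absorb the sample-mean recentering through the identity $\mathbb P_n[(\widehat U-\overline{\widehat U})(\widehat V-\overline{\widehat V})]=A_n-(\overline{\widehat U}-\tfrac12)(\overline{\widehat V}-\tfrac12)$.

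The only differences are that you recenter first, and that you sketch why $\overline{\widehat U}-\tfrac12=O_p(n^{-1/2})$ under cross-fitting, which the paper simply asserts. One small point in that sketch: your conditional-variance bound controls only the fluctuation around the conditional mean $n_{ck}^{-1}\sum_{j\in\mathcal I_k^c}\mathbb E[\varphi_Y(\widetilde Y,\widetilde X;Z_j)\mid Z_j]$. That conditional mean is itself $O_p(n^{-1/2})$ by the ordinary CLT, so the step goes through.
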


The asymptotic variance expressions above are relatively complicated, and implementing their plug-in estimators
can be computationally demanding. We therefore follow \citet{chernozhukov2024conditional} and use the
\emph{exchangeable bootstrap} for inference. We now show that, under the weight conditions below, the bootstrap
statistics in Algorithm~2 have the same asymptotic distribution as the corresponding CRRR estimators.

\medskip

\begin{assumption}\label{ass:bootstrap}
For each $n$, let the weight vector $(\omega_{n1},\ldots,\omega_{nn})$ consist of nonnegative
\emph{exchangeable}\footnote{%
A sequence of random variables $X_1,\ldots,X_n$ is called exchangeable if and only if for every finite
permutation $\sigma$ of $\{1,\ldots,n\}$, the permuted sequence $(X_{\sigma(1)},\ldots,X_{\sigma(n)})$ has the
same joint distribution as $(X_1,\ldots,X_n)$.}
random variables, independent of the observed sample. Assume that there exists $\varepsilon>0$ such that
\begin{equation}\label{eq:exch-bootstrap}
\sup_{n}\mathbb{E}\big[\omega_{n1}^{2+\varepsilon}\big] < \infty,
\qquad
n^{-1}\sum_{i=1}^{n}(\omega_{ni}-\bar{\omega}_n)^2 \xrightarrow{p} 1,
\qquad
\bar{\omega}_n \xrightarrow{p} 1,
\end{equation}
where $\bar{\omega}_n = n^{-1}\sum_{i=1}^{n} \omega_{ni}$.
\end{assumption}

\medskip

To establish bootstrap validity, we adopt the notation and definitions in \citet{van1996weak}. Let
$D_n=(Z_1,\ldots,Z_n)$ be the data vector with $Z_i=(Y_i,W_i,X_i)$, and let
$M_n=(\omega_{n1},\ldots,\omega_{nn})$ be the random bootstrap-weight vector generated independently of $D_n$.
Consider a random element $\mathbb Z_n^*=\mathbb Z_n(D_n,M_n)$ taking values in a normed space $\mathbb E$.
If there exists a tight random element $\mathbb Z$ such that
\[
\sup_{h\in \mathrm{BL}_1(\mathbb E)}
\Big|
\mathbb E_{M_n}\!\left[h\!\left(\mathbb Z_n^*\right)\right]-\mathbb E\!\left[h(\mathbb Z)\right]
\Big|
\xrightarrow{p}0,
\]
where $\mathrm{BL}_1(\mathbb E)$ is the set of functions with Lipschitz norm at most $1$ and $\mathbb E_{M_n}$
denotes conditional expectation with respect to $M_n$ given $D_n$, then we say that the bootstrap distribution of
$\mathbb Z_n^*$ consistently estimates the distribution of $\mathbb Z$, and write
$\mathbb Z_n^* \rightsquigarrow_P \mathbb Z$.

\medskip

\begin{lemma}[Exchangeable bootstrap CLT; \citealp{van1996weak}]\label{lem:bootclt}
Let $\{\xi_i\}_{i=1}^n$ be i.i.d.\ random variables with $\mathbb{E} [\xi_1]=0$ and
$\mathbb{E} [\xi_1^2]=\sigma^2\in(0,\infty)$, and assume there exists $\delta>0$ such that
$\mathbb{E} |\xi_1|^{2+\delta}<\infty$. Under Assumption~\ref{ass:bootstrap}, define
$\widetilde\omega_{ni}:=\omega_{ni}/\bar{\omega}_n$ and
\[
S_n^*=\frac1{\sqrt n}\sum_{i=1}^n(\widetilde\omega_{ni}-1)\xi_i.
\]
Then the bootstrap distribution of $S_n^*$ converges to the normal limit:
\[
S_n^* \rightsquigarrow_{P} \mathcal{N}(0,\sigma^2).
\]
\end{lemma}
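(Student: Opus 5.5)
Since the weights are exchangeable and satisfy Assumption~\ref{ass:bootstrap}, the natural approach is to reduce $S_n^*$ to a conditional weighted sum of the fixed data values and apply a conditional CLT for exchangeable weights, essentially the multiplier/exchangeable bootstrap CLT of \citet{van1996weak} (Theorem 3.6.13 there) specialized to a one-point function class. The first step removes the normalization. Write
\[
S_n^*=\frac{1}{\bar\omega_n}\cdot\frac1{\sqrt n}\sum_{i=1}^n(\omega_{ni}-\bar\omega_n)\xi_i
=\frac{1}{\bar\omega_n}\cdot\frac1{\sqrt n}\sum_{i=1}^n(\omega_{ni}-\bar\omega_n)(\xi_i-\bar\xi_n),
\]
using $\sum_i(\omega_{ni}-\bar\omega_n)=0$. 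Because $\bar\omega_n\to_p1$ (independently of the data), Slutsky's lemma in its bounded-Lipschitz conditional form means it suffices to handle $T_n^*:=n^{-1/2}\sum_i(\omega_{ni}-\bar\omega_n)(\xi_i-\bar\xi_n)$.

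Next, condition on the data $D_n$. The values $c_{ni}:=\xi_i-\bar\xi_n$ are then fixed constants summing to zero, with $n^{-1}\sum_i c_{ni}^2\to\sigma^2$ almost surely by the SLLN. Moreover, $\max_i|c_{ni}|/\sqrt n\to0$ almost surely, which follows from $\mathbb E\xi_1^2<\infty$ (Borel--Cantelli). By exchangeability, $T_n^*$ has the same conditional law as $n^{-1/2}\sum_i(\omega_{n,\pi(i)}-\bar\omega_n)c_{ni}$ with $\pi$ a uniform random permutation independent of everything else. This turns the problem into a combinatorial (H\'ajek-type) CLT for linear permutation statistics $n^{-1/2}\sum_i a_{n,\pi(i)}c_{ni}$, where $a_{ni}=\omega_{ni}-\bar\omega_n$. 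The conditional variance equals
\[
\frac{1}{n-1}\sum_i a_{ni}^2\cdot\frac1n\sum_i c_{ni}^2\to_p 1\cdot\sigma^2
\]
by Assumption~\ref{ass:bootstrap}.

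The remaining step is the main obstacle: verifying H\'ajek's Lindeberg-type condition jointly in the two arrays. The standard sufficient condition is
\[
\max_i\frac{|a_{ni}|}{\sqrt n}\cdot\max_i\frac{|c_{ni}|}{\sqrt n}\cdot n\ \text{small in a Lindeberg sense},
\]
or more concretely the Noether/H\'ajek condition
\[
\frac1n\sum_{i,j}a_{ni}^2c_{nj}^2\,\mathbb 1\{|a_{ni}c_{nj}|>\epsilon\sqrt n\}\to0 .
\]
I would control the $a$-array through the moment bound $\sup_n\mathbb E\omega_{n1}^{2+\varepsilon}<\infty$. This gives $\max_i|\omega_{ni}|=O_p(n^{1/(2+\varepsilon)})=o_p(\sqrt n)$ and uniform integrability of $a_{ni}^2$. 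The $c$-array is handled by $\max_i|c_{ni}|=o(\sqrt n)$ together with the existence of the $(2+\delta)$ moment, which gives uniform integrability of $c_{ni}^2$. Splitting the indicator according to whether $|a_{ni}|$ or $|c_{nj}|$ is large, each piece vanishes by uniform integrability of the corresponding array times the bounded average of the other. This yields the conditional CLT in probability, i.e.\ $T_n^*\rightsquigarrow_P\mathcal N(0,\sigma^2)$ in the bounded-Lipschitz sense. Combining with the Slutsky step gives the claim. If direct verification proved messy, I would instead cite \citet[Theorem 3.6.13]{van1996weak} with the class $\{\xi\}$, which is trivially Donsker and is the exact statement under these weight conditions.
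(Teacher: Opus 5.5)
Your proposal is correct and follows essentially the route the paper relies on. The paper gives no proof of this lemma beyond citing \citet{van1996weak}, and your reduction is the standard argument behind the finite-dimensional part of their exchangeable-bootstrap theorem: factor out $\bar\omega_n$ by a conditional Slutsky step (which neatly handles the paper's unnormalized weights in Assumption~\ref{ass:bootstrap}), use exchangeability to write $T_n^*$ as the permutation statistic $n^{-1/2}\sum_i a_{n\pi(i)}c_{ni}$, and apply H\'ajek's combinatorial CLT. One correction: the Lindeberg display must be normalized by the conditional variance, which is of order $n$, so it should read $n^{-2}\sum_{i,j}a_{ni}^2c_{nj}^2\,\mathbb{I}\{|a_{ni}c_{nj}|>\epsilon\sqrt n\}\xrightarrow{p}0$. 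The version with $n^{-1}$ is not implied by the stated moment conditions, whereas the $n^{-2}$ version is exactly what your split into ``uniformly integrable tail of one array times bounded average of the other'' verifies.
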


\begin{theorem}[Bootstrap consistency]\label{thm:boot_valid}
Under Assumptions~\ref{ass:data_fun}--\ref{ass:emp_process}, Theorem~\ref{thm:asy_normal_fun}, and
Assumption~\ref{ass:bootstrap}, suppose that for any estimator form
$\ddot\rho_C\in\{\widehat\rho_C,\widetilde\rho_C,\breve\rho_C\}$ we have
\[
\sqrt n(\ddot\rho_C-\rho_C)\Rightarrow \mathcal{N}(0,\sigma_\rho^2).
\]
Then the corresponding bootstrap statistic $\ddot\rho_C^{*}$ constructed in Algorithm~2 satisfies
\[
\sqrt n(\ddot\rho_C^{*}-\ddot\rho_C)\ \rightsquigarrow_P\ Z_\rho,
\]
where $Z_\rho\sim \mathcal{N}(0,\sigma_\rho^2)$. Consequently, the standard error and confidence interval produced by
Algorithm~2 are asymptotically valid:
\[
\widehat{\sigma}_\rho \xrightarrow{p} \sigma_\rho,
\qquad
\mathbb P\big\{\rho_C \in \mathrm{ACI}_{1-\alpha}(\rho_C)\big\}\to 1-\alpha.
\]
\end{theorem}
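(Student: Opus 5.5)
The plan is to mirror the proof of Theorem~\ref{thm:asy_normal_fun} at the bootstrap level. Every bootstrap estimator is a smooth function of weighted moments, so it suffices to show that the weighted moment vector is bootstrap asymptotically linear with the same influence function $\Psi$ as in Lemma~\ref{lem:ABlin_fun}. Define the bootstrap moments
\[
A_n^*:=\tfrac1n\sum_i\widetilde\omega_{ni}(\widehat U_i^*-\tfrac12)(\widehat V_i^*-\tfrac12),\qquad
B_{n,U}^*:=\tfrac1n\sum_i\widetilde\omega_{ni}(\widehat U_i^*-\tfrac12)^2,
\]
and $B_{n,V}^*$ analogously. The target is the expansion
\[
\sqrt n\big(A_n^*-A_n,\,B_{n,U}^*-B_{n,U},\,B_{n,V}^*-B_{n,V}\big)^\top=\frac1{\sqrt n}\sum_{i=1}^n(\widetilde\omega_{ni}-1)\Psi(Z_i)+o_{p}(1),
\]
where the remainder is small in probability under the joint law of data and weights; by Markov's inequality this gives conditional negligibility in outer probability.

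\textbf{Step 1: linearize the weighted DCTM fits.} For each fold and each $R\in\{Y,W\}$, the weighted-likelihood estimator $\widehat F^{*(-k)}_{R\mid X}$ needs an expansion analogous to Assumption~\ref{ass:emp_process}(ii), of the form
\[
\sqrt{n_{ck}}\big(\widehat F^{*(-k)}_{R\mid X}-F_{R\mid X}\big)=\frac{1}{\sqrt{n_{ck}}}\sum_{j\in\mathcal I_k^c}\widetilde\omega_{nj}\varphi_R(\cdot;Z_j)+o_p(1)
\]
uniformly in $(r,x)$. I would justify this by noting that weighted ERM has the same M-estimation structure: the weights have bounded $2+\varepsilon$ moments and mean tending to one, so the weighted empirical risk still converges uniformly (a multiplier Glivenko--Cantelli argument, as in \citet{van1996weak}). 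The same delta-method argument that yields (ii) then yields its weighted counterpart. Subtracting the unweighted expansion gives
\[
\widehat F^{*(-k)}-\widehat F^{(-k)}=n_{ck}^{-1}\sum_{j\in\mathcal I_k^c}(\widetilde\omega_{nj}-1)\varphi_R(\cdot;Z_j)+o_p(n^{-1/2}).
\]
\textbf{I expect this step to be the main obstacle}, because Assumption~\ref{ass:emp_process} is stated only for unweighted fits. I would either state it explicitly as the natural weighted analogue of that assumption or derive it via the multiplier empirical-process theorems.

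\textbf{Step 2: expand the moments.} Write
\[
\widehat U_i^*-\tfrac12=(\widehat U_i-\tfrac12)+(\widehat U_i^*-\widehat U_i),
\]
and similarly for $V$, then expand $A_n^*-A_n$ into three parts. The first is a pure weight term $n^{-1}\sum(\widetilde\omega_{ni}-1)(\widehat U_i-\tfrac12)(\widehat V_i-\tfrac12)$, in which $\widehat U_i$ may be replaced by $U_i$ using Lemma~\ref{lem:dctm_consistency_fun} and the fact that the weights are bounded in $L^2$. The second consists of cross terms of the form $n^{-1}\sum_i\widetilde\omega_{ni}(\widehat V_i-\tfrac12)(\widehat U_i^*-\widehat U_i)$. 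Into these I substitute Step~1, so each becomes a V-statistic type double sum within folds. Following exactly the projection argument from the proof of Lemma~\ref{lem:ABlin_fun}, with $j$ the training index and $i$ the held-out index, the Hájek projection onto $Z_j$ gives
\[
n^{-1}\sum_j(\widetilde\omega_{nj}-1)\gamma_Y(Z_j).
\]
Summing over folds recombines the fold fractions $n_k/n$ and $n_{ck}/n$ just as in Lemma~\ref{lem:ABlin_fun}. The degenerate remainder is $o_p(n^{-1/2})$ by a second-moment computation, using the $2+\delta$ envelope for $\varphi_R$ and independence of the weights. The third part is quadratic in $\widehat U^*-\widehat U$ and is $O_p(n^{-1})$. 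The terms $B_{n,U}^*$ and $B_{n,V}^*$ are handled the same way, producing $\delta_Y$ and $\delta_W$.

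\textbf{Step 3: conclude.} Apply Lemma~\ref{lem:bootclt} to the scalar sequences $\xi_i=c^\top\Psi(Z_i)$, which have finite $2+\delta$ moments by Assumption~\ref{ass:emp_process}(ii), and use the Cramér--Wold device to obtain
\[
\sqrt n(\text{moments}^*-\text{moments})\rightsquigarrow_P N(0,\mathrm{Var}\,\Psi).
\]
The three estimators are smooth maps $g$ of $(A,B_U,B_V)$ and the centered means, with gradients computed in Theorem~\ref{thm:asy_normal_fun}. The functional delta method for the bootstrap \citep[Thm.~3.9.11]{van1996weak}, combined with consistency of the moments from Theorem~\ref{thm:consistency_rho_fun}, then yields $\sqrt n(\ddot\rho_C^*-\ddot\rho_C)\rightsquigarrow_P N(0,\sigma_\rho^2)$.

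Finally, since the normal limit has a continuous, strictly increasing CDF, the bootstrap quantiles of $Z_b^*$ converge in probability to those of $N(0,\sigma_\rho^2)$ as $B\to\infty$. Hence the interquartile ratio $\widehat\sigma_\rho\to_p\sigma_\rho$, and $\widehat t_{1-\alpha}\to_p z_{1-\alpha/2}$. Coverage of $\mathrm{ACI}_{1-\alpha}$ then follows from Theorem~\ref{thm:asy_normal_fun} and Slutsky's lemma.
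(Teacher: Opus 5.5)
Your proposal is correct and takes essentially the same route as the paper. Both arguments rerun Lemma~\ref{lem:ABlin_fun} and Theorem~\ref{thm:asy_normal_fun} with weighted averages to get the conditional linear representation $\sqrt n(\ddot\rho_C^{*}-\ddot\rho_C)=n^{-1/2}\sum_i(\widetilde\omega_{ni}-1)\psi_\rho(Z_i)+o_{P}(1)$, then finish with the exchangeable bootstrap CLT (Lemma~\ref{lem:bootclt}) and Slutsky; applying the delta method after the CLT rather than before is immaterial. You are more careful than the paper on two points: you flag that the weighted-fit analogue of Assumption~\ref{ass:emp_process}(ii) is an extra requirement, whereas the paper just asserts the structure ``remains valid under reweighting'', and you note that $\widehat\sigma_\rho\xrightarrow{p}\sigma_\rho$ requires $B\to\infty$.
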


\medskip

Finally, we emphasize that the asymptotic theory in this section is developed only for the continuous case and
under the fixed-complexity regime. In a more general nonparametric regime, network complexity and the Bernstein
order $J$ should increase with $n$ to eliminate approximation bias. However, in this case, achieving an
$o(n^{-1/2})$ convergence rate for neural networks typically requires rather stringent conditions, so an important future direction
is to incorporate the Neyman-orthogonality ideas from double/debiased machine learning (DDML) to relax the rate
requirements; see \cite{chernozhukov2018double}. However, since the main goal of this paper is to propose a new
estimation workflow and demonstrate its empirical effectiveness, we leave the nonparametric theoretical
development to future work.

\section{Simulation Studies}\label{ch:num_exps}

This section conducts simulation experiments under four settings --- simple continuous, complex continuous,
simple discrete ordinal, and complex discrete ordinal --- to comprehensively compare our proposed workflow
(DCTM/dDCTM for conditional-rank estimation with cross-fitting) against the traditional CRRR approach
(distribution regression, DR, for conditional-rank estimation). Throughout this section, we focus on the
OLS-slope estimator as the representative target:
\[
\widehat{\rho}_C
:=
\frac{\sum_{i=1}^n\left(\widehat{U}_i-\overline{\widehat{U}}\right)\left(\widehat{V}_i-\overline{\widehat{V}}\right)}
{\sum_{i=1}^n\left(\widehat{V}_i-\overline{\widehat{V}}\right)^2}.
\]

\medskip

\noindent\textbf{Implementation details.}
For different data-generating scenarios, the model design involves several tuning parameters that must be chosen
and adjusted based on the experimental setting.

\smallskip
\noindent\emph{DR.}
For DR, we choose $M$ threshold grid points and fit a logit distribution regression model pointwise. The
thresholds uniformly cover the $[0.01,0.99]$ sample quantiles of $Y$ (or $W$), with step size $0.98/M$, so that
the grid is dense enough to capture the distributional shape. We then obtain conditional-rank estimates
$\widehat U_i$ (or $\widehat V_i$) for all observations via linear interpolation and local extrapolation. See
Algorithm~1 in \citet{chernozhukov2024conditional} for details.

\smallskip
\noindent\emph{DCTM/dDCTM.}
For DCTM (and dDCTM), we train two separate networks for $Y$ and $W$. The architecture follows
subsection~\ref{sec:dctm}. Hyperparameters such as hidden-layer width/depth, learning
rate, and batch size are tuned for each setting. We use Adam for optimization and early stopping based on the
validation NLL to prevent overfitting. Note the architecture differs between continuous and discrete cases.
We set the number of cross-fitting folds to $K=3$; see Algorithm~1 in the previous section for details.

\subsection{Simple Continuous Setting}\label{sec:simple_continuous}
We first consider a simple continuous setting, using the same simulation design as in
\citet{chernozhukov2024conditional}. Let $Y$ be the child's height, $W$ be the father's height, and $X$ indicate
the country/region (Netherlands: $X=0$; Ireland: $X=1$). Conditional on $X$, $(Y,W)$ follows a bivariate normal
distribution whose mean may depend on $X$, while the covariance matrix is constant:
\[
\binom{Y}{W}\,\Big|\,X=x
\sim
N_2\!\left(
\binom{165}{180-\delta x},
\;
4^2
\begin{pmatrix}
1 & 0.6\\
0.6 & 1
\end{pmatrix}
\right).
\]
We set $\mathbb{P}(X=0)=\mathbb{P}(X=1)=1/2$. The parameter $\delta$ represents a negative shock affecting
Ireland. We consider $\delta\in\{0,12\}$. For each case, we generate $n=500{,}000$ observations.

Note that the conditional Pearson correlation is constant, $\rho_P=0.6$. Up to negligible numerical error, the
conditional Spearman rank correlation $\rho_S$ and Pearson correlation are related by \citep{cramer1999mathematical}
\begin{equation}\label{eq:rhoS_rhoP}
\rho_S=\frac{6}{\pi}\arcsin\!\left(\frac{\rho_P}{2}\right).
\end{equation}
Hence the true CRRR slope is
\[
\rho_C=\rho_S=\frac{6}{\pi}\arcsin\!\left(\frac{\rho_P}{2}\right)\approx 0.58192.
\]

In this simple setting --- especially under normality --- most models perform well. We thus expect DR and DCTM to
achieve similarly good conditional-rank estimation. For DR, we use $M=100$ grid quantiles in
$[0.01,0.99]$. For DCTM, we use a Bernstein order $J=32$; other hyperparameters (learning rate, hidden sizes,
batch size) are tuned based on training diagnostics.

\subsubsection{Conditional CDF fitting}
We first evaluate conditional CDF estimation (the first-stage task in CRRR). Figures
\ref{fig:simple_normal_delta0_cdf} and \ref{fig:simple_normal_delta12_cdf} plot the estimated conditional CDF
curves under $\delta=0$ and $\delta=12$. As expected, DR and DCTM produce nearly identical estimates that overlap
almost perfectly with the truth, providing a strong foundation for estimating $\rho_C$.
\begin{figure}[htbp]
\centering
\includegraphics[width=0.65\textwidth]{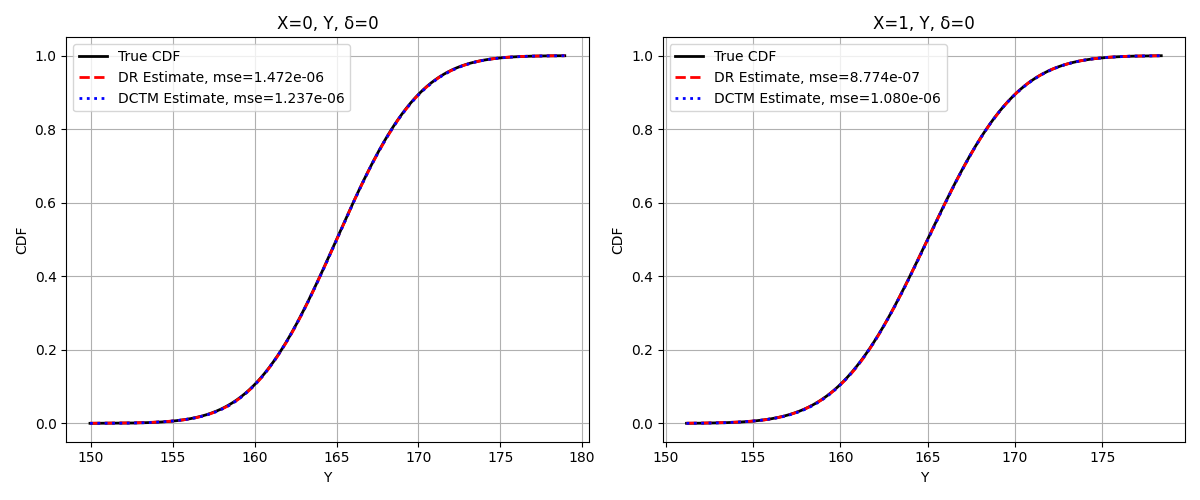}
\includegraphics[width=0.65\textwidth]{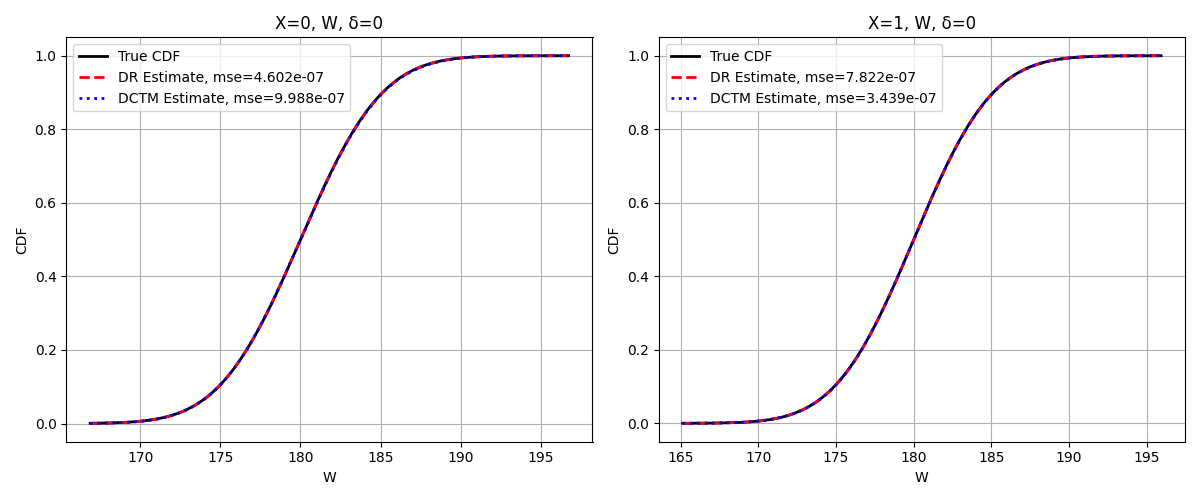}
\caption{Simple continuous setting with $\delta=0$: conditional CDFs estimated by DR and DCTM versus the truth.}
\label{fig:simple_normal_delta0_cdf}
\end{figure}
\begin{figure}[htbp]
\centering
\includegraphics[width=0.65\textwidth]{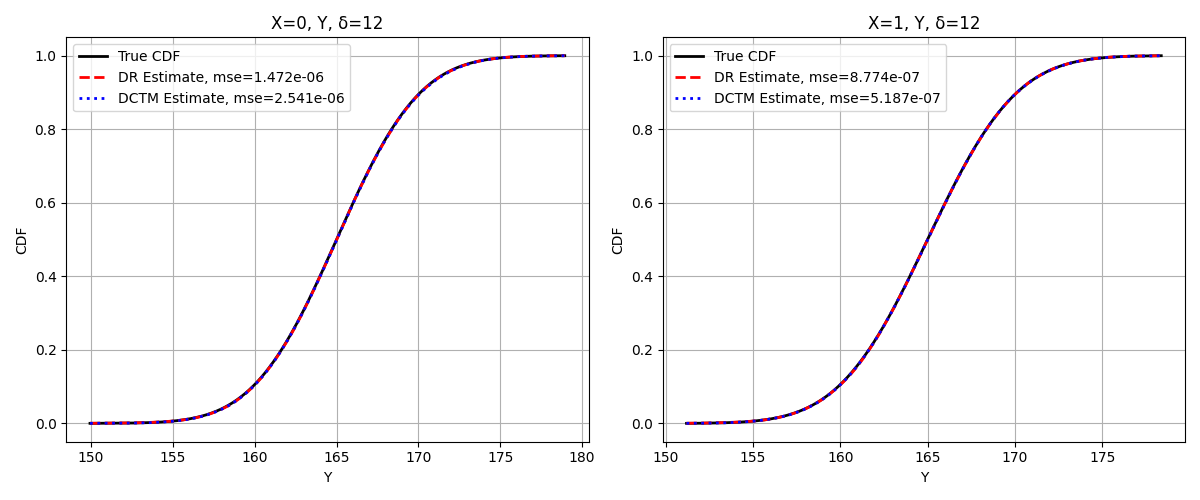}
\includegraphics[width=0.65\textwidth]{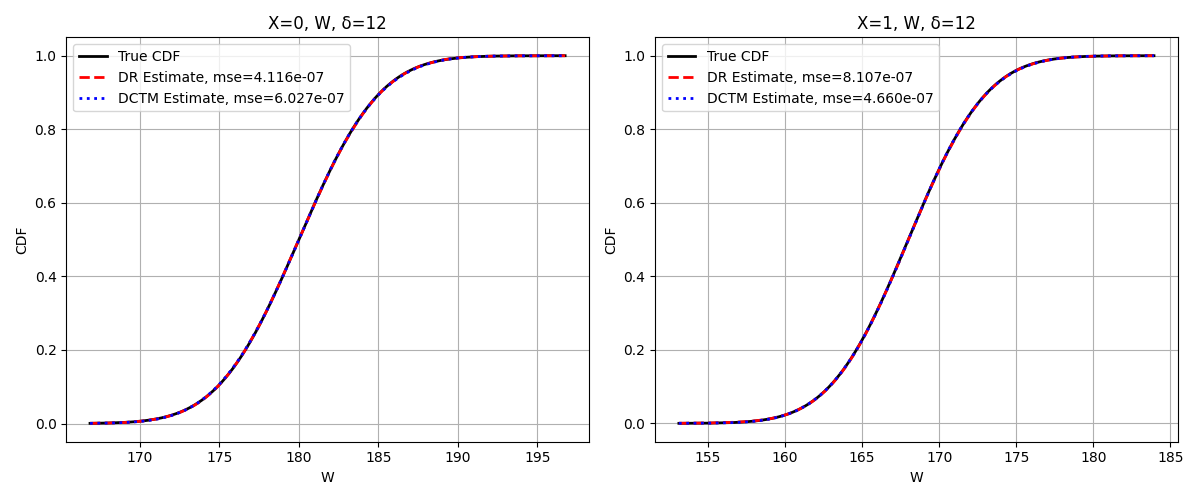}
\caption{Simple continuous setting with $\delta=12$: conditional CDFs estimated by DR and DCTM versus the truth.}
\label{fig:simple_normal_delta12_cdf}
\end{figure}

\subsubsection{Calibration via PIT}
To further assess calibration, we perform probability integral transform (PIT) diagnostics for each combination
of (target variable $Y$ or $W$), ($X=0$ or $1$), and ($\delta=0$ or $12$). If the estimated conditional CDF is
correct, then $\{\widehat U_i\}$ and $\{\widehat V_i\}$ should be approximately i.i.d.\ $\mathrm{Unif}(0,1)$.
Therefore, the PIT distribution reveals systematic bias or over-/under-confidence in conditional distribution
estimation.

Figures \ref{fig:pit_dr} and \ref{fig:pit_dctm} display PIT histograms (with the ideal uniform reference line)
and QQ plots. The PIT histograms for both DR and DCTM are fairly flat, without prominent U-shaped or hump-shaped
patterns. The QQ plots closely follow the 45-degree line, with only mild deviations in extreme quantiles.
Overall, both methods exhibit good calibration with no obvious systematic bias.
\begin{figure}[htbp]
\centering
\includegraphics[scale=0.22]{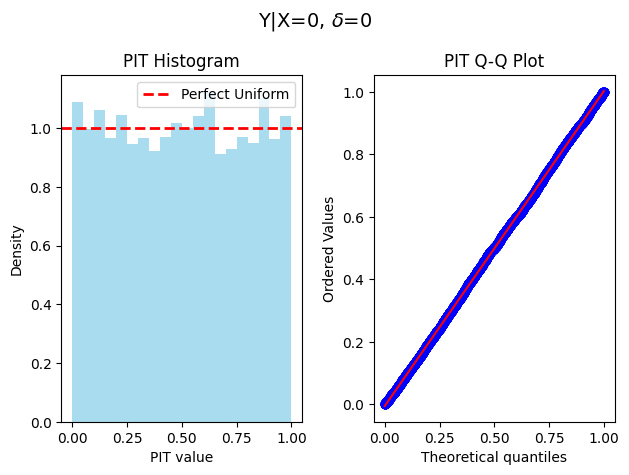}
\includegraphics[scale=0.22]{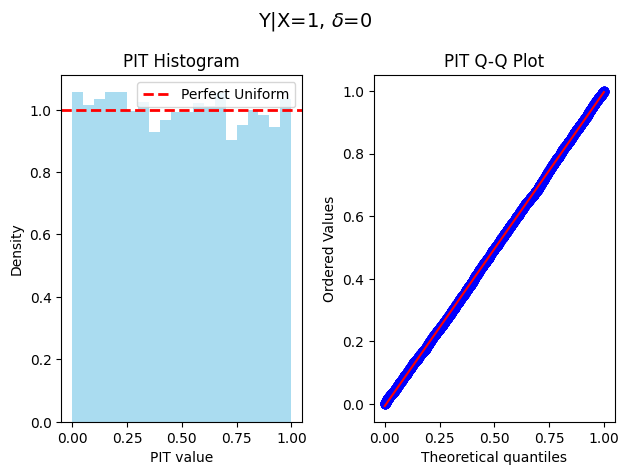}
\includegraphics[scale=0.22]{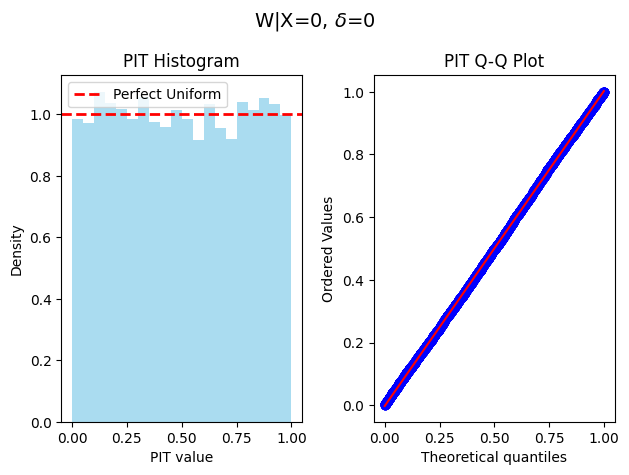}
\includegraphics[scale=0.22]{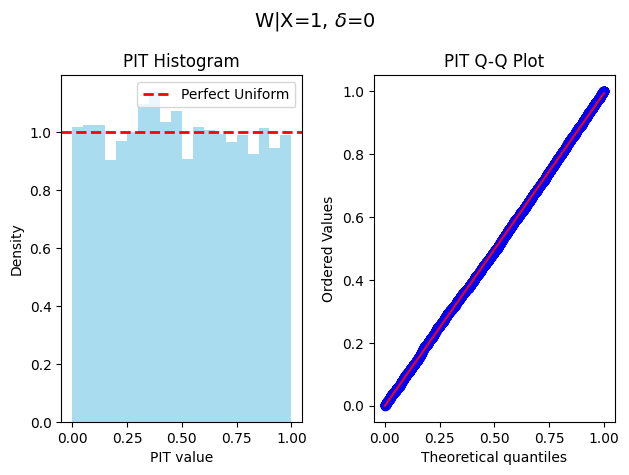}
\includegraphics[scale=0.22]{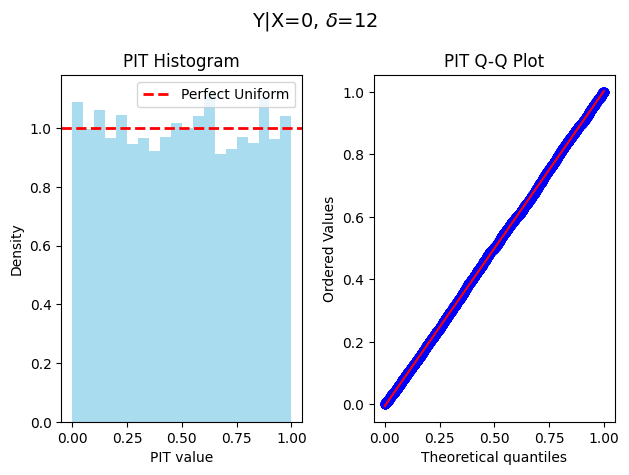}
\includegraphics[scale=0.22]{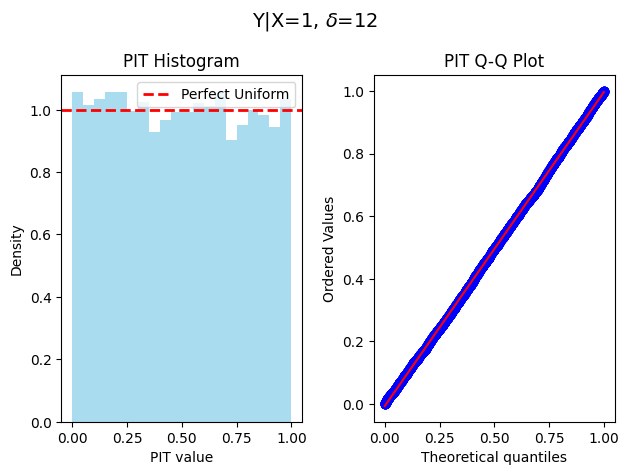}
\includegraphics[scale=0.22]{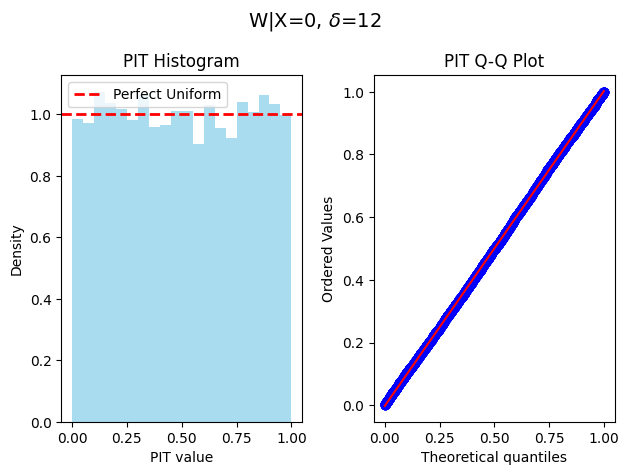}
\includegraphics[scale=0.22]{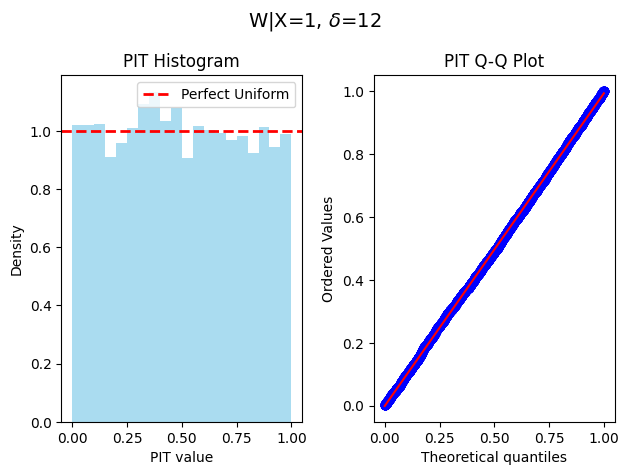}
\caption{PIT diagnostics for DR-based conditional CDF estimation.}
\label{fig:pit_dr}
\end{figure}
\begin{figure}[htbp]
\centering
\includegraphics[scale=0.22]{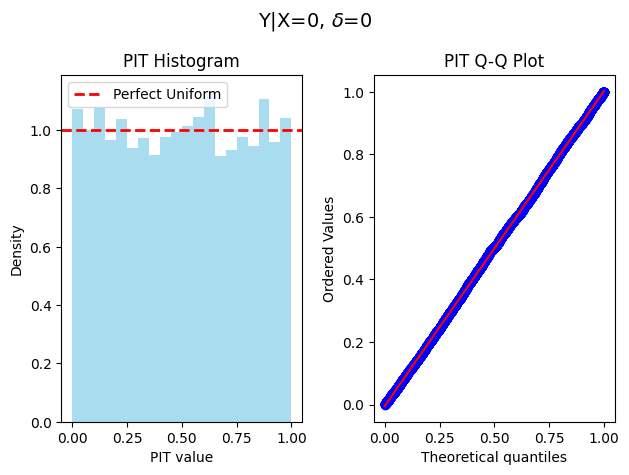}
\includegraphics[scale=0.22]{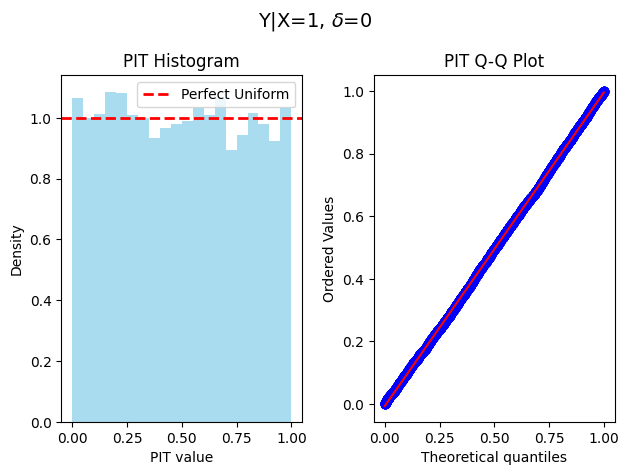}
\includegraphics[scale=0.22]{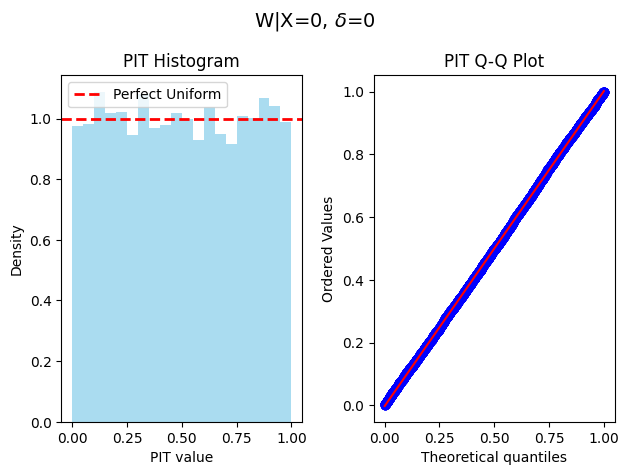}
\includegraphics[scale=0.22]{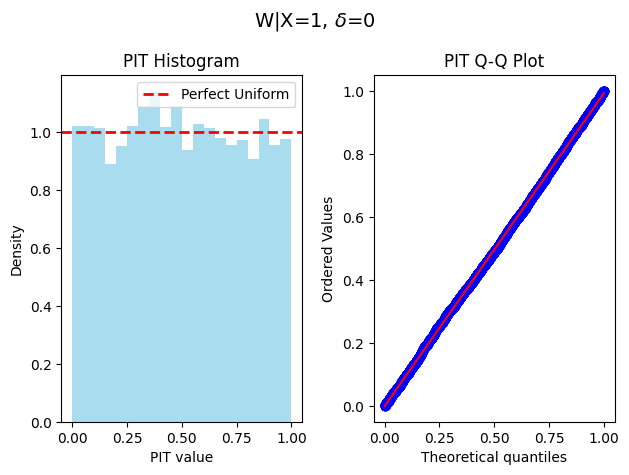}
\includegraphics[scale=0.22]{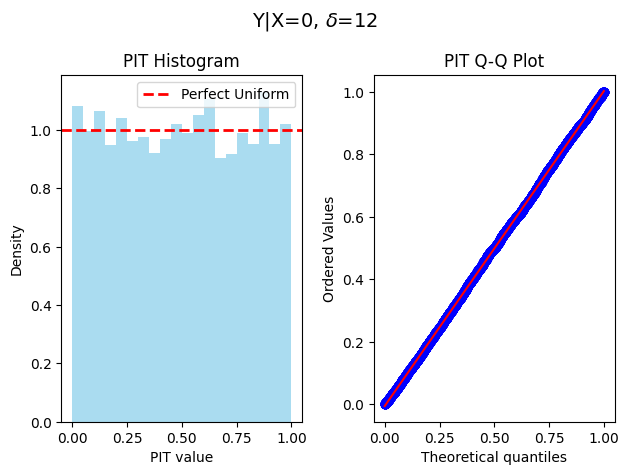}
\includegraphics[scale=0.22]{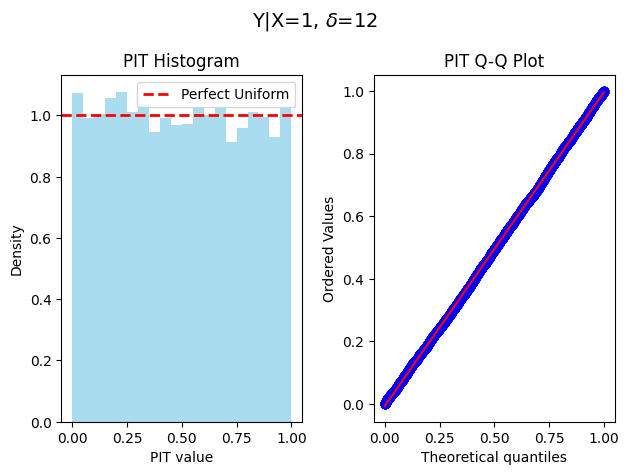}
\includegraphics[scale=0.22]{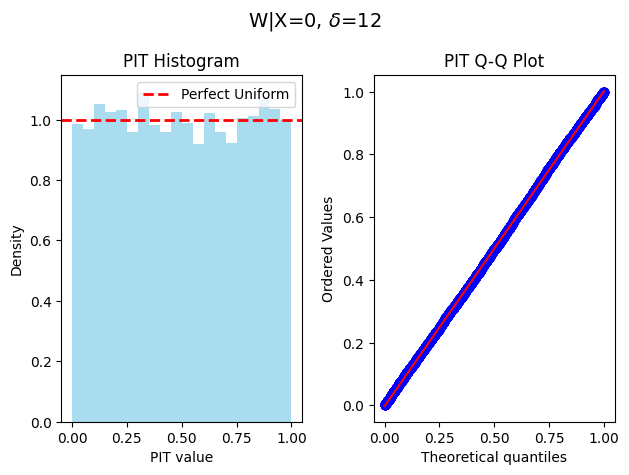}
\includegraphics[scale=0.22]{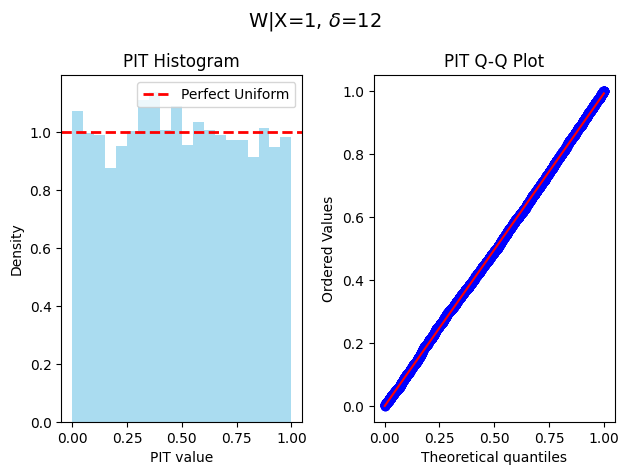}
\caption{PIT diagnostics for DCTM-based conditional CDF estimation.}
\label{fig:pit_dctm}
\end{figure}

\subsubsection{Quantitative PIT checks}
To quantify the graphical diagnostics, we compute (i) the Kolmogorov--Smirnov (KS) test $p$-value,
(ii) the chi-squared ($\chi^2$) goodness-of-fit test $p$-value, and (iii) the sample mean/variance of PIT for each
case. The results are reported in Tables \ref{tab:delta0-ks-chi-table} and \ref{tab:delta12-ks-chi-table}.
Across all cases, the KS and $\chi^2$ test $p$-values are well above $\alpha=0.05$, so we do not reject the null
that PIT is uniform. The PIT means are close to $1/2$ and the PIT variances are close to $1/12$, consistent with
the figures and confirming good calibration in this simple normal setting.

\begin{table}[htbp]
\centering
\caption{Simple continuous setting with $\delta=0$: KS $p$-values, $\chi^2$ $p$-values, and PIT mean/variance.}
\label{tab:delta0-ks-chi-table}
\begingroup
\setlength{\tabcolsep}{6pt}
\begin{tabular}{lccccc}
\toprule
 & \textbf{Method} & \textbf{KS \textit{$p$-value}} & \textbf{$\chi^2$ \textit{$p$-value}} & \textbf{PIT Mean} & \textbf{PIT Var} \\
\midrule
\multirow{2}{*}{$Y\mid X=0$}  & DR   & 0.9024 & 0.4708 & 0.4986 & 0.0845\\
                             & DCTM & 0.8850 & 0.4024 & 0.4988 & 0.0845\\
\cline{2-6}
\multirow{2}{*}{$Y\mid X=1$}  & DR   & 0.3220 & 0.9870 & 0.4948 & 0.0839\\
                             & DCTM & 0.2174 & 0.8696 & 0.4941 & 0.0838\\
\midrule
\multirow{2}{*}{$W\mid X=0$}  & DR   & 0.9180 & 0.9710 & 0.4999 & 0.0841\\
                             & DCTM & 0.9327 & 0.9117 & 0.4997 & 0.0840\\
\cline{2-6}
\multirow{2}{*}{$W\mid X=1$}  & DR   & 0.3118 & 0.6169 & 0.4958 & 0.0824\\
                             & DCTM & 0.3444 & 0.5482 & 0.4961 & 0.0823\\
\bottomrule
\end{tabular}
\endgroup
\end{table}

\begin{table}[htbp]
\centering
\caption{Simple continuous setting with $\delta=12$: KS $p$-values, $\chi^2$ $p$-values, and PIT mean/variance.}
\label{tab:delta12-ks-chi-table}
\begingroup
\setlength{\tabcolsep}{6pt}
\begin{tabular}{lccccc}
\toprule
 & \textbf{Method} & \textbf{KS \textit{$p$-value}} & \textbf{$\chi^2$ \textit{$p$-value}} & \textbf{PIT Mean} & \textbf{PIT Var} \\
\midrule
\multirow{2}{*}{$Y\mid X=0$}  & DR   & 0.9024 & 0.4708 & 0.4986 & 0.0845\\
                             & DCTM & 0.9217 & 0.3770 & 0.4984 & 0.0845\\
\cline{2-6}
\multirow{2}{*}{$Y\mid X=1$}  & DR   & 0.3220 & 0.9870 & 0.4948 & 0.0839\\
                             & DCTM & 0.3127 & 0.9800 & 0.4946 & 0.0838\\
\midrule
\multirow{2}{*}{$W\mid X=0$}  & DR   & 0.9199 & 0.9540 & 0.5000 & 0.0842\\
                             & DCTM & 0.9080 & 0.9522 & 0.5000 & 0.0840\\
\cline{2-6}
\multirow{2}{*}{$W\mid X=1$}  & DR   & 0.3075 & 0.6745 & 0.4958 & 0.0824\\
                             & DCTM & 0.3609 & 0.4394 & 0.4960 & 0.0823\\
\bottomrule
\end{tabular}
\endgroup
\end{table}

\subsubsection{Slope estimation}
Finally, we compare $\rho_C$ estimation accuracy. Table~\ref{tab:normal-crrr-slope-table} reports Monte Carlo
performance for DR and DCTM, based on 30 repetitions. In each repetition we use $n=10^5$ observations and report
the mean, standard deviation (SD), mean absolute deviation (MAD), and root mean squared error (RMSE) of
$\widehat\rho_C$.

\begin{table}[htbp]
\centering
\caption{Simple continuous setting: estimation results for the CRRR slope $\rho_C$.}
\label{tab:normal-crrr-slope-table}
\begingroup
\setlength{\tabcolsep}{6pt}
\begin{tabular}{lccccc}
\toprule
 & \textbf{Method} & \textbf{Mean} & \textbf{SD} & \textbf{MAD} & \textbf{RMSE}\\
\midrule
\multirow{2}{*}{$\delta=0$}   & DR   & 0.58205 & 0.00080 & 0.00066 & 0.00079\\
                             & DCTM & 0.58216 & 0.00091 & 0.00073 & 0.00092\\
\midrule
\multirow{2}{*}{$\delta=12$}  & DR   & 0.58190 & 0.00085 & 0.00068 & 0.00084\\
                             & DCTM & 0.58206 & 0.00112 & 0.00084 & 0.00111\\
\bottomrule
\end{tabular}
\endgroup

\vspace{0.5ex}
\footnotesize Note: errors are computed relative to the theoretical truth $\rho_C\approx 0.58192$.
\end{table}

In summary, under this simple continuous (normal) setting, both DR and DCTM perform similarly well for both
conditional CDF estimation and the slope $\rho_C$, with DR slightly better in some cases. This is expected:
when DR is correctly specified, it can be highly competitive. The natural question is whether differences emerge
under more complex and misspecified settings, which we investigate next.

\subsection{Complex Continuous Setting}\label{sec:continuous}
In socio-economic applications, data are often far more complex than the simple normal setting above, involving
nonlinearities, strong feature interactions, and heteroskedasticity. To assess the advantage of our approach in
such settings, we consider a complex continuous DGP with nonlinearity and high-order interactions.

Let latent variables $(Z_1,Z_2)$ follow
\[
(Z_1,Z_2)\sim \mathcal{N}\!\left(
\begin{pmatrix}0\\0\end{pmatrix},
\begin{pmatrix}
1 & \rho_0\\
\rho_0 & 1
\end{pmatrix}
\right),
\qquad \rho_0\in(0,1).
\]
Let $X=(X_1,\ldots,X_p)$ be $p$-dimensional covariates with i.i.d.\ components
$X_j\sim \mathrm{Unif}[-1,1]$. We generate $Y$ and $W$ by nonlinear transformations:
\begin{align}
Y &= m_Y(X)+s_Y(X)\,g_Y(Z_1), \label{eq:dgp_Y}\\
W &= m_W(X)+s_W(X)\,g_W(Z_2), \label{eq:dgp_W}
\end{align}
where $m_Y(\cdot)$ and $m_W(\cdot)$ are nonlinear mean functions, $s_Y(\cdot)$ and $s_W(\cdot)$ are positive
heteroskedastic scale functions capturing conditional standard deviations, and $g_Y(\cdot)$ and $g_W(\cdot)$ are
strictly monotone nonlinear transforms (e.g., softplus-based). This construction induces both skewness and
heteroskedasticity in $Y\mid X$ and $W\mid X$: the monotone nonlinear transforms distort symmetry, while the
scale functions make dispersion depend on $X$. Such a DGP is closer to many real-world datasets and provides a
stringent test of conditional distribution modeling.

Importantly, since $g_Y$ and $g_W$ are strictly monotone, the Spearman's rank correlation between $Y$ and $W$
conditional on $X$ is primarily determined by the rank dependence of $(Z_1,Z_2)$. For a Gaussian copula, the
approximate relation \eqref{eq:rhoS_rhoP} yields the theoretical CRRR slope (used as the benchmark truth):
\[
\rho_C=\rho_S=\frac{6}{\pi}\arcsin\!\left(\frac{\rho_0}{2}\right).
\]

\subsubsection{DGP specification}
We generate a sample $\{(Y_i,W_i,X_i)\}_{i=1}^n$ with $n=5\times 10^5$ and $p=8$. In
\eqref{eq:dgp_Y}--\eqref{eq:dgp_W}, we set
\[
\begin{aligned}
m_Y(X) &= 6\sin(\pi X_1X_2)+2(X_3^2-X_4^2),\\
m_W(X) &= 4\cos(\pi X_1)+3X_2X_3,\\
s_Y(X) &= \exp\!\left(0.5+0.6X_5+0.5X_6X_7-0.3X_8\right),\\
s_W(X) &= \exp\!\left(0.3+0.5X_4-0.5X_5X_6\right),\\
g_Y(z) &= z+\alpha_Y\,\mathrm{softplus}(\beta_Y z),\\
g_W(z) &= z+\alpha_W\,\mathrm{softplus}(\beta_W z),
\end{aligned}
\]
with $\alpha_Y=0.8$, $\beta_Y=1.2$, $\alpha_W=0.6$, and $\beta_W=1.0$.

We set $\rho_0=0.6$. Plugging into \eqref{eq:rhoS_rhoP} yields the theoretical truth
\[
\rho_C=\frac{6}{\pi}\arcsin(0.6/2)\approx 0.58192.
\]
That is, regressing ranks of $Y$ on ranks of $W$ should yield a slope around $0.58$.

\subsubsection{Conditional CDF comparisons at representative covariates}
In the DGP design described above, distributional variation with respect to \(X\) is jointly driven by interaction terms, curvature terms, and nonlinear components. Accordingly, we construct the following one-dimensional index \(s(X)\) to characterize the complexity of covariate-group heterogeneity:
\[
s(X)=0.8 X_1 X_2+0.6\left(X_3^2-X_4^2\right)+0.8 \sin \left(\pi X_5\right)+0.6 X_6 X_7,
\]
where \(X_1X_2\) and \(X_6X_7\) capture high-dimensional interaction effects, \(X_3^2-X_4^2\) represents polynomial-type nonlinear curvature, and \(\sin(\pi X_5)\) captures a smooth non-polynomial nonlinearity. The coefficients are chosen to balance the relative contribution of each component to the variation in the simulated sample. We then select covariate groups around the $10\%$ and $90\%$ percentiles of \(s(X)\) as representative groups and plot their estimated conditional CDF curves, as shown in Figure~\ref{fig:complex_cdf_compare}. Importantly, \(s(X)\) is used solely to select representative groups for visualization and does not enter the estimation or inference of \(\rho_C\).
\begin{figure}[htbp]
\centering
\includegraphics[scale=0.43]{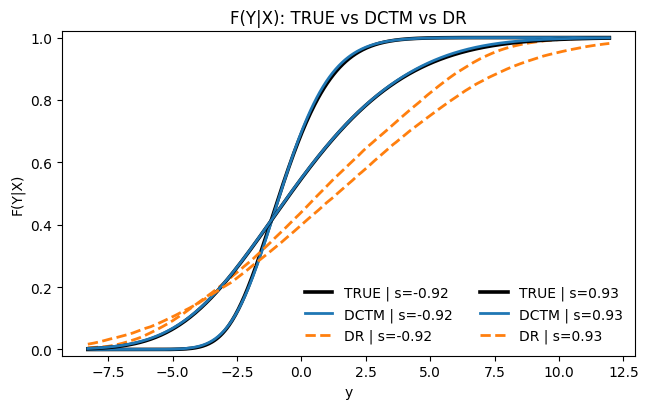}
\includegraphics[scale=0.43]{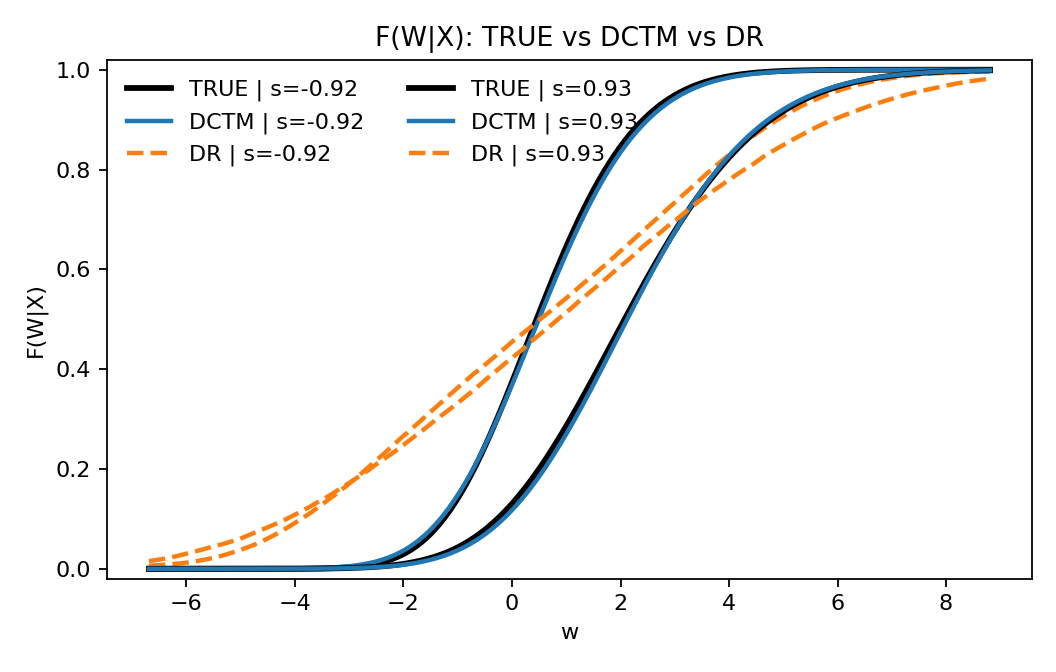}
\caption{Complex continuous setting: DR, DCTM, and true conditional CDFs for two representative covariate groups.}
\label{fig:complex_cdf_compare}
\end{figure}

Figure~\ref{fig:complex_cdf_compare} shows that DCTM tracks the true distributional shape much more closely:
across both central and tail regions, the DCTM curves almost overlap with the truth, yielding small global
shape error. In contrast, DR exhibits clear systematic deviations, indicating limited expressive power under
this nonlinear, interaction-rich DGP. Therefore, DCTM produces more accurate first-stage ranks, which in turn
supports more accurate slope estimation.

\subsubsection{Slope estimation}
Table~\ref{tab:continuous-slope-table} reports the Monte Carlo performance for estimating $\rho_C$, based on 30
repetitions with $n=10^5$ observations per repetition (Mean, SD, MAD, RMSE).
\setlength{\heavyrulewidth}{1.2pt}
\setlength{\lightrulewidth}{0.6pt}
\begin{table}[htbp]
\centering
\caption{Complex continuous setting: estimation results for the CRRR slope $\rho_C$.}
\label{tab:continuous-slope-table}
\begingroup
\setlength{\tabcolsep}{8pt}
\begin{tabular}{lcccc}
\toprule
\textbf{Method} & \textbf{Mean} & \textbf{SD} & \textbf{MAD} & \textbf{RMSE}\\
\midrule
DR   & 0.14761 & 0.00158 & 0.43432 & 0.43432\\
DCTM & 0.57646 & 0.00119 & 0.00547 & 0.00559\\
\bottomrule
\end{tabular}
\endgroup

\vspace{0.5ex}
\footnotesize Note: 30 repetitions; errors are computed relative to the theoretical truth $\rho_C\approx 0.58192$.
\end{table}

The DR-based estimator exhibits severe bias: its mean is about $0.148$, far from the truth, with MAD and RMSE
around $0.43$. This indicates substantial model misspecification in DR for conditional CDF estimation under this
DGP; the resulting structural rank errors propagate directly into $\widehat\rho_C$, causing serious
underestimation. In contrast, DCTM achieves near-oracle first-stage rank estimation and fits the complex
conditional distributions well, yielding a slope mean close to the truth (about $0.576$) and dramatically
smaller MAD/RMSE (around $0.005$).

Overall, this nonlinear continuous simulation demonstrates that DCTM substantially outperforms DR in both
conditional CDF fitting and $\rho_C$ estimation, expanding the applicability and robustness of CRRR for complex
continuous economic outcomes.

\subsection{Simple Discrete Ordinal Setting}
In research on intergenerational mobility, discrete ordinal outcomes (e.g., years of schooling, occupational class) arise frequently. To the best of our knowledge, existing studies of CRRR focus on continuous outcomes. We therefore consider CRRR modeling with discrete ordinal outcomes and develop an associated estimation strategy, which is verified through numerical simulation experiments below.

First, consider the simple discrete ordinal case. We start from the continuous DGP in Section~\ref{sec:simple_continuous} to generate continuous $Y$ and $W$, and
then discretize them into 8 ordered categories $Y_d$ and $W_d$ using equally spaced quantile cutoffs
$\{c_k\}$. The cutoffs are computed under the baseline distribution at $\delta=0$
($Y\sim \mathcal{N}(165,4^2)$ and $W\sim \mathcal{N}(180,4^2)$) and are then reused for $\delta=12$.

In the discrete case, the rank must be explicitly defined as in \eqref{eq:rankY_omega_en}--\eqref{eq:rankW_omega_en}.
Here we illustrate three representative choices $\omega\in\{0.0,0.5,1.0\}$.

\subsubsection{Model specification}
For DR, for each category $k$ we fit a logit model for the binary outcome $\mathbb{I}\{Y_d\le k\}$ to estimate
$F_k(x)=\mathbb P(Y_d\le k\mid X=x)$, and then impose monotonicity across $k$ by rearrangement to ensure
nondecreasing cumulative probabilities. For dDCTM, the network outputs all category probabilities end-to-end; see
subsection~\ref{sec:dctm} (discrete case). Hyperparameters are tuned as needed; we use
Adam with early stopping based on validation NLL.

\subsubsection{Conditional CDF fitting}
Figures \ref{fig:simple_discrete_cdf_delta0} and \ref{fig:simple_discrete_cdf_delta12} show the estimated and
true step-function CDFs for $\delta=0$ and $\delta=12$, with $X\in\{0,1\}$. In this simple discrete setting, DR
and dDCTM achieve very similar and accurate CDF estimates, almost overlapping with the truth.

\begin{figure}[htbp]
\centering
\includegraphics[scale=0.4]{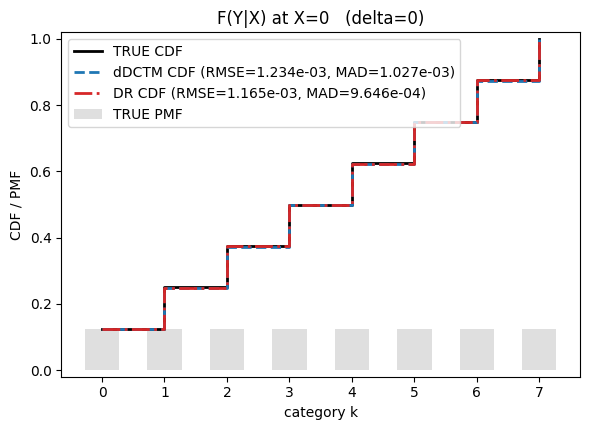}
\includegraphics[scale=0.4]{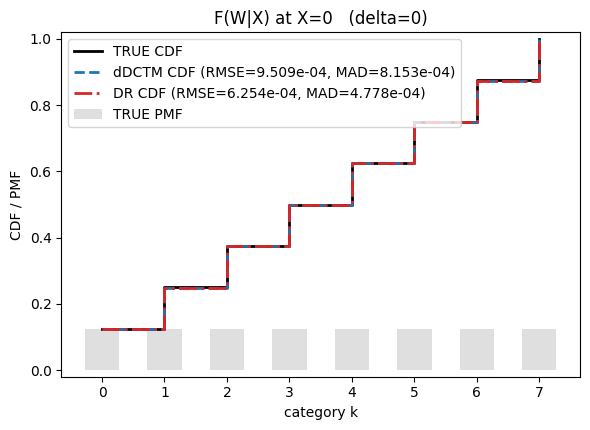}
\includegraphics[scale=0.4]{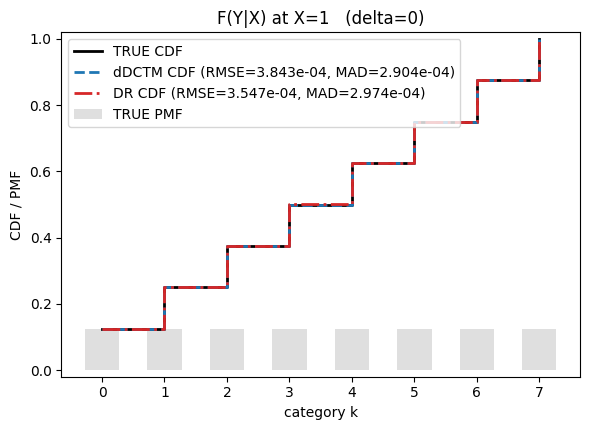}
\includegraphics[scale=0.4]{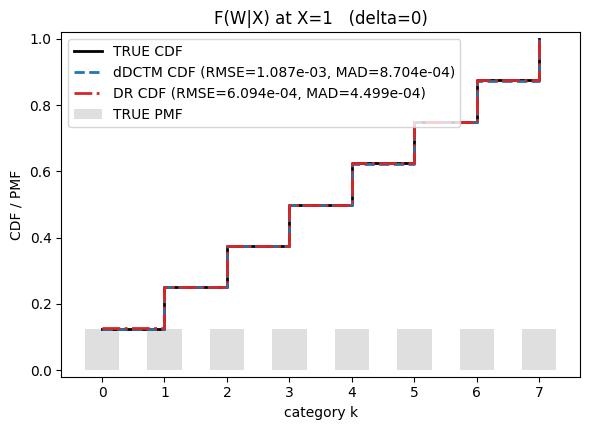}
\caption{Simple discrete ordinal setting with $\delta=0$ and $X\in\{0,1\}$: estimated vs.\ true step CDFs for $F(Y_d\mid X)$ and $F(W_d\mid X)$.}
\label{fig:simple_discrete_cdf_delta0}
\end{figure}

\begin{figure}[htbp]
\centering
\includegraphics[scale=0.4]{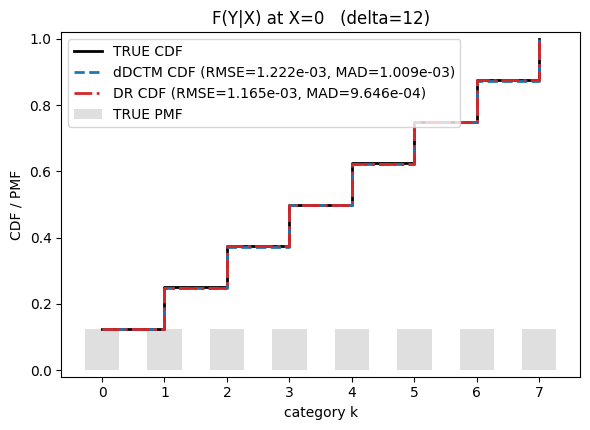}
\includegraphics[scale=0.4]{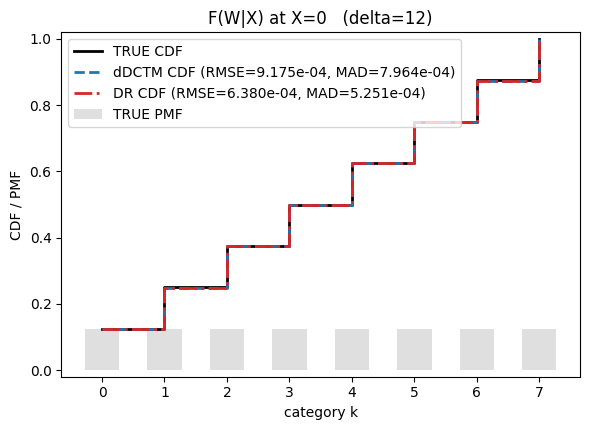}
\includegraphics[scale=0.4]{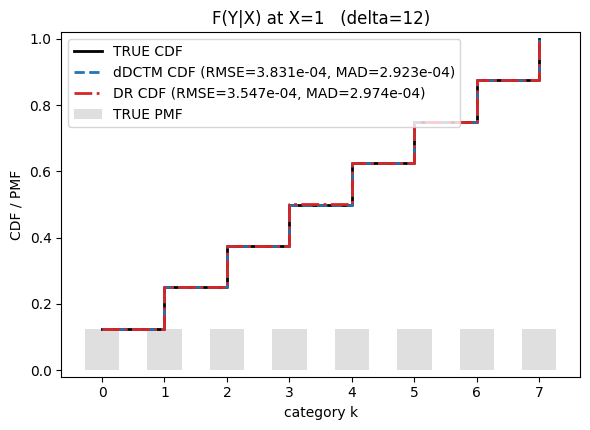}
\includegraphics[scale=0.4]{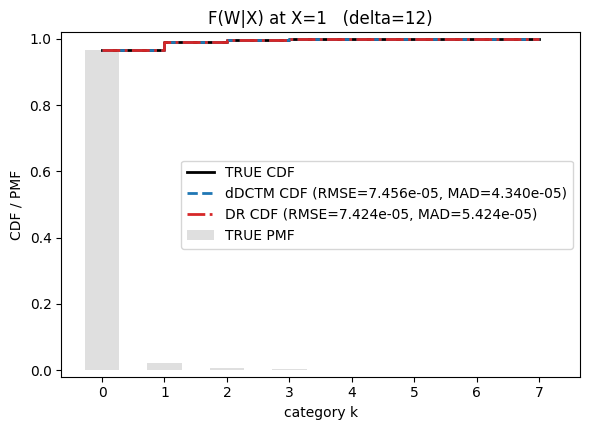}
\caption{Simple discrete ordinal setting with $\delta=12$ and $X\in\{0,1\}$: estimated vs.\ true step CDFs for $F(Y_d\mid X)$ and $F(W_d\mid X)$.}
\label{fig:simple_discrete_cdf_delta12}
\end{figure}

\subsubsection{Slope estimation under different $\omega$}
Table~\ref{tab:simple-omega-slope-table} reports Monte Carlo performance for $\omega\in\{0.0,0.5,1.0\}$, based on
30 repetitions with $n=100{,}000$ per repetition. We report the Mean, SD, MAD, and RMSE of $\widehat\rho_C$.
The ``true'' slope $\rho_C^{\mathrm{True}}$ for each $\omega$ is computed as follows: for each category $k$ we
know its interval $(c_k,c_{k+1}]$, and using the normal CDF we can obtain the population values of
$F_{Y_d\mid X}^{-}(k\mid x)$ and $F_{Y_d\mid X}(k\mid x)$. Applying the rank definition for the given $\omega$
then yields $\rho_C^{\mathrm{True}}$.

\begin{table}[htbp]
\centering
\renewcommand{\arraystretch}{1.5}
\caption{Simple discrete ordinal setting: CRRR slope $\rho_C$ estimation for $\omega\in\{0.0,0.5,1.0\}$.}
\label{tab:simple-omega-slope-table}
\begingroup
\setlength{\tabcolsep}{4pt}
\begin{tabular}{lccccccc}
\toprule
 &  & \textbf{Method} & \textbf{$\rho_C^{\mathrm{True}}$} & \textbf{Mean$(\widehat\rho_C)$} & \textbf{SD$(\widehat\rho_C)$} & \textbf{MAD} & \textbf{RMSE}\\
\midrule
\multirow{4}{*}{$\omega=0.0$}
& \multirow{2}{*}{$\delta=0$}  & DR    & 0.56920 & 0.56767 & 0.00255 & 0.00234 & 0.00293\\
&                              & dDCTM & 0.56920 & 0.56762 & 0.00252 & 0.00234 & 0.00293\\
\cline{3-8}
& \multirow{2}{*}{$\delta=12$} & DR    & 0.29507 & 0.29415 & 0.00179 & 0.00166 & 0.00199\\
&                              & dDCTM & 0.29507 & 0.29413 & 0.00210 & 0.00195 & 0.00227\\
\midrule
\multirow{4}{*}{$\omega=0.5$}
& \multirow{2}{*}{$\delta=0$}  & DR    & 0.56920 & 0.56760 & 0.00250 & 0.00241 & 0.00293\\
&                              & dDCTM & 0.56920 & 0.56759 & 0.00251 & 0.00242 & 0.00294\\
\cline{3-8}
& \multirow{2}{*}{$\delta=12$} & DR    & 0.58077 & 0.57855 & 0.00339 & 0.00330 & 0.00401\\
&                              & dDCTM & 0.58077 & 0.57848 & 0.00348 & 0.00341 & 0.00412\\
\midrule
\multirow{4}{*}{$\omega=1.0$}
& \multirow{2}{*}{$\delta=0$}  & DR    & 0.56920 & 0.56752 & 0.00250 & 0.00249 & 0.00297\\
&                              & dDCTM & 0.56920 & 0.56756 & 0.00254 & 0.00251 & 0.00299\\
\cline{3-8}
& \multirow{2}{*}{$\delta=12$} & DR    & 0.28573 & 0.28446 & 0.00180 & 0.00181 & 0.00218\\
&                              & dDCTM & 0.28573 & 0.28430 & 0.00198 & 0.00197 & 0.00241\\
\bottomrule
\end{tabular}
\endgroup

\vspace{0.5ex}
\footnotesize Note: 30 repetitions; sample size per repetition is $n=100{,}000$.
\end{table}

Table~\ref{tab:simple-omega-slope-table} highlights two points. First, $\rho_C$ differs substantially across
$\omega$, confirming that $\rho_C$ is sensitive to the tie-handling rule in discrete outcomes; empirical studies
must pre-specify and report the rank definition for meaningful interpretation. Second, in this simple discrete
ordinal setting, both DR and dDCTM achieve good estimation accuracy for $\rho_C$.

\subsection{Complex Discrete Ordinal Setting}
The previous section considered a simple discrete scenario. To better emulate real-world complexity, we now
consider a complex discrete ordinal DGP.

We begin with the complex continuous DGP in Section~\ref{sec:continuous} (see \eqref{eq:dgp_Y}--\eqref{eq:dgp_W})
to generate continuous $Y$ and $W$. We then discretize them into 8 ordered categories $Y_d$ and $W_d$ using
pre-specified cut points $\{c_k\}$, keeping all other DGP components unchanged. In the discrete case, ties are
prevalent, and the regression slope depends on the tie parameter $\omega$ in \eqref{eq:rankY_omega_en} and
\eqref{eq:rankW_omega_en}; see Figure~\ref{fig:rho_omega}. We again report results for
$\omega\in\{0.0,0.5,1.0\}$, with hyperparameters tuned as needed.

To visualize conditional distribution fitting in this complex discrete ordinal setting, we randomly select a
fixed covariate value $X=x$ and plot estimated versus true step-function CDFs for $F(Y_d\mid X=x)$ and
$F(W_d\mid X=x)$. As shown in Figure~\ref{fig:discrete_cdf_compare}, dDCTM nearly overlaps with the truth,
indicating high accuracy, while DR exhibits noticeable deviations.

\begin{figure}[htbp]
\centering
\includegraphics[scale=0.48]{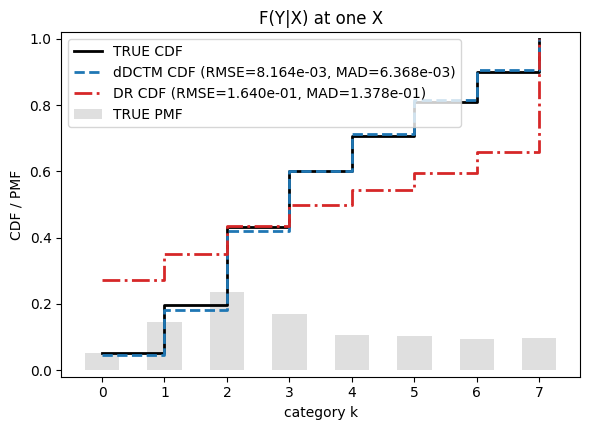}
\includegraphics[scale=0.48]{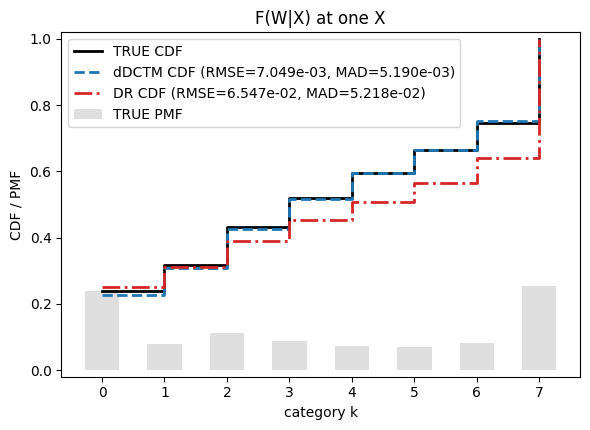}
\caption{Complex discrete ordinal setting: estimated vs.\ true step CDFs for $F(Y_d\mid X=x)$ and $F(W_d\mid X=x)$ at a randomly selected covariate value $X=x$.}
\label{fig:discrete_cdf_compare}
\end{figure}

Next, we examine $\widehat\rho_C$ for DR and dDCTM across $\omega\in[0,1]$, as shown in
Figure~\ref{fig:discrete_omega_all}. We use 11 equally spaced grid points in $[0,1]$. For each $\omega$, the
``true'' slope $\rho_C^{\mathrm{True}}$ is obtained via an auxiliary large-sample Monte Carlo approximation that
is separate from the main simulation loop. Specifically, for each category $k$ we know its cut points and
interval boundaries $(c_k,c_{k+1}]$, so
\[
\begin{aligned}
F_{Y_d\mid X}^{-}(k\mid x)
&=\mathbb{P}(Y_d<k\mid X=x)
=\mathbb{P}(Y\le c_k\mid X=x),\\
F_{Y_d\mid X}(k\mid x)
&=\mathbb{P}(Y_d\le k\mid X=x)
=\mathbb{P}(Y\le c_{k+1}\mid X=x).
\end{aligned}
\]
Using the DGP in \eqref{eq:dgp_Y}--\eqref{eq:dgp_W}, one can compute these probabilities (and similarly for $W_d$)
and then apply the rank definition at the given $\omega$ to compute the regression slope, yielding the Monte
Carlo approximation of $\rho_C^{\mathrm{True}}$.

\begin{figure}[t]
\centering
\includegraphics[scale=0.5]{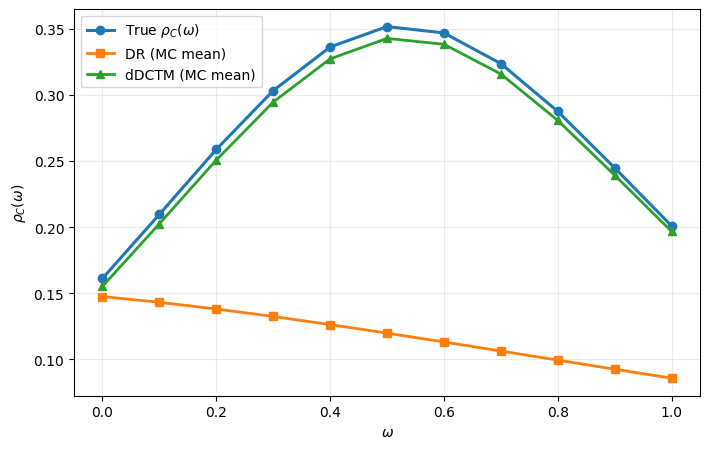}
\caption{Complex discrete ordinal setting: estimated $\rho_C(\omega)$ by traditional CRRR (DR) and our method (dDCTM) versus the truth.}
\label{fig:discrete_omega_all}
\end{figure}

Figure~\ref{fig:discrete_omega_all} shows that $\rho_C$ indeed varies with $\omega$, and its sampling behavior
also changes accordingly. Moreover, DR exhibits substantial bias relative to the truth. This is driven by
systematic underestimation in DR-based conditional CDFs; as $\omega$ increases, rank bias becomes more
pronounced. In contrast, dDCTM remains close to the truth for all $\omega$, illustrating its robustness to the
rank definition in this complex discrete setting.

Finally, Table~\ref{tab:omega-crrr-slope-table} reports detailed Monte Carlo results for the three common rank
definitions $\omega\in\{0.0,0.5,1.0\}$, based on 30 repetitions with $n=10^5$ per repetition.

\begin{table}[t]
\centering
\caption{Complex discrete ordinal setting: CRRR slope $\rho_C$ estimation for $\omega\in\{0.0,0.5,1.0\}$.}
\label{tab:omega-crrr-slope-table}
\begingroup
\setlength{\tabcolsep}{6pt}
\begin{tabular}{lcccccc}
\toprule
 & \textbf{Method} & \textbf{$\rho_C^{\mathrm{True}}$} & \textbf{Mean} & \textbf{SD} & \textbf{MAD} & \textbf{RMSE}\\
\midrule
\multirow{2}{*}{$\omega=0.0$}  & DR    & 0.15556 & 0.14742 & 0.00312 & 0.00814 & 0.00870\\
                              & dDCTM & 0.15556 & 0.15517 & 0.00324 & 0.00039 & 0.00321\\
\midrule
\multirow{2}{*}{$\omega=0.5$}  & DR    & 0.35002 & 0.11976 & 0.00336 & 0.23026 & 0.23028\\
                              & dDCTM & 0.35002 & 0.35066 & 0.00360 & 0.00169 & 0.00398\\
\midrule
\multirow{2}{*}{$\omega=1.0$}  & DR    & 0.18949 & 0.08571 & 0.00351 & 0.10378 & 0.10384\\
                              & dDCTM & 0.18949 & 0.19670 & 0.00294 & 0.00722 & 0.00777\\
\bottomrule
\end{tabular}
\endgroup

\vspace{0.5ex}
\footnotesize Note: 30 repetitions; sample size per repetition is $n=10^5$.
\end{table}

Table~\ref{tab:omega-crrr-slope-table} confirms that, under all three rank definitions, the substantial bias in
DR-based conditional CDF estimation propagates into $\widehat\rho_C$, producing much larger MAD and RMSE than
dDCTM. In contrast, dDCTM maintains good accuracy across $\omega\in\{0.0,0.5,1.0\}$. Therefore, in this complex
discrete ordinal DGP, dDCTM clearly outperforms DR both in conditional distribution fitting and in $\rho_C$
estimation.


\section{Empirical Studies}\label{sec:empirical}

\subsection{Intergenerational Income Mobility in the PSID}\label{subsec:psid_income}
In this section, we study intergenerational income mobility using the PSID-SHELF dataset from the U.S.
Panel Study of Income Dynamics (PSID).

\subsubsection{The PSID-SHELF Dataset}\label{subsubsec:psid_shelf}
The Panel Study of Income Dynamics (PSID)\footnote{PSID website: \url{https://psidonline.isr.umich.edu/GettingStarted.aspx}}
is the longest-running nationally representative household panel survey in the world. It was originally
launched to study the dynamics of income and poverty and to evaluate President Lyndon Johnson's War on Poverty.
Over the past five decades, the scope of PSID has expanded substantially and now covers a wide range of topics,
including health, wealth, consumption, charitable giving, child development, and the transition to adulthood.
The PSID has been collecting data for more than 55 years, has surveyed over 85,000 individuals, and traces
family lineages for as many as seven generations. Its scientific value has long surpassed the original focus on
poverty dynamics and has attracted researchers across disciplines \citep{solon1992intergenerational,corcoran1992association,lee2009trends,mazumder2015estimating,pfeffer2018generations,fisher2023intergenerational}.

Despite PSID's efforts to maintain longitudinal consistency, many variables change across waves. The intricate
coding (e.g., ER30000) and shifting cross-year indices make it time-consuming to construct an intergenerational
mobility dataset directly from the raw PSID. The Panel Study of Income Dynamics-Social, Health, and Economic
Longitudinal File (PSID-SHELF)\footnote{Daumler, Davis, Esther Friedman, and Fabian T. Pfeffer. 2025.
\textit{PSID-SHELF User Guide and Codebook, 1968--2021, Beta Release}. PSID-SHELF Data Documentation 2025-01.
Ann Arbor, MI: Survey Research Center, Institute for Social Research, University of Michigan.
DOI:10.7302/25205.}
provides a unified and user-friendly longitudinal file.

A key advantage of PSID-SHELF is that it offers a longitudinal file containing the complete multi-generational
PSID sample. The current version includes 42 survey waves spanning 1968 -- 2021 and integrates all variables for
each sample member across the entire period. All individuals ever interviewed in the PSID core study are
included. The dataset covers more than 8,000 families and contains over 900,000 observations from about 53,000
individuals. In addition, PSID-SHELF replaces the raw PSID variable codes with meaningful names, greatly
reducing data-processing costs.   A second advantage is that PSID-SHELF provides a unified set of key indicators
covering broad substantive themes: (i) social characteristics (e.g., demographics, family type, education, race
and ethnicity); (ii) health characteristics (e.g., chronic conditions, COVID-19, dementia, disability); and (iii)
economic characteristics (e.g., earnings, family income, occupation, wealth).

In this empirical analysis, we use the individual-level longitudinal PSID-SHELF data to study intergenerational
income mobility. We first obtain the FIMS genealogical identifier (GID) file from PSID to construct a
father-child ID mapping, and then link it to PSID-SHELF to build a one-to-one father-child matched dataset.
To ensure comparability between generations, we use total labor income measured at age 30 for both the child and
the father. 

We therefore restrict the sample to individuals born between 1939 and 1990, avoiding cases where
income is outside the PSID-SHELF coverage. We drop pairs with missing income for either generation, yielding a
final matched sample size of 2,360. The covariates include: whether the father has higher education; whether the
family has at least two children; total household expenditure; and the father's age at the child's birth.
Missing covariate values are imputed using the median. Because child income, father income, and household
expenditure are highly right-skewed, we apply the transformation $\log(1+x)$ to these three variables for
robustness.

Table~\ref{tab:pisd-desc-table} reports descriptive statistics. By gender, sons' average income is higher than daughters', while fathers' average income is not significantly different between the son and daughter samples. The proportion of families with higher education is approximately 53\%, slightly lower in the father-daughter sample than in the father-son sample. 75\% of families have at least two children. Overall, the covariate differences between the father-son and father-daughter samples are not significant.
\begin{table}[htbp]
\centering
\caption{Descriptive statistics}
\label{tab:pisd-desc-table}
\setlength{\tabcolsep}{6pt}
\begin{threeparttable}
\begin{tabular}{lcccccc}
\toprule
& \multicolumn{2}{c}{Father-Child} & \multicolumn{2}{c}{Father-Son}
& \multicolumn{2}{c}{Father-Daughter} \\
\cmidrule(lr){2-3}\cmidrule(lr){4-5}\cmidrule(lr){6-7}
& Mean & SD & Mean & SD & Mean & SD \\
\midrule
log (Child Income)   & 8.90   & 3.24 & 9.58 & 2.59 & 8.30  & 3.63 \\
log (Father Income)  & 5.36   & 4.09 & 5.42 & 4.09 & 5.30  & 4.10 \\
Higher Education     & 0.53 & 0.50 & 0.55 & 0.50  & 0.51 & 0.50 \\
At least 2 children  & 0.75 & 0.43 & 0.75 & 0.44  & 0.76 & 0.43 \\
log (Expand)         & 9.94  & 0.07 & 9.94 & 0.06 & 9.94  & 0.08 \\
Age at birth         & 23.79 & 4.82 & 24.00 & 4.85 & 23.60 & 4.78 \\
\bottomrule
\end{tabular}

\begin{tablenotes}[flushleft]
\footnotesize
\item \textit{Notes:} The sample size is 1,107 for father-son pairs and 1,253 for father-daughter pairs.
\end{tablenotes}
\end{threeparttable}
\end{table}

Figures~\ref{fig:income_trans_metric_fc}--\ref{fig:income_trans_metric_fc_daughter} display the heatmaps of the marginal-rank transition matrices for the Father-Child, Father-Son, and Father-Daughter samples, respectively. In the left-hand matrix, the entry in row $i$ and column $j$ represents the conditional probability that the child falls into the $j$th income decile given that the father is in the $i$th income decile. Under complete intergenerational independence, each cell probability should be close to 10\%, while the right-hand matrix reports the corresponding percentage deviations from this benchmark. Overall, all three matrices depart markedly from the benchmark of perfect mobility, indicating that the father's income position exerts a substantial influence on the child's income position. The most salient common feature is top-end persistence: when the father belongs to the 10th decile, the probability that the child also remains in the 10th decile reaches 22.0\%, 23.4\%, and 24.6\% in the father-child, father-son, and father-daughter samples, respectively, all of which are substantially above the benchmark value of 10\%. This pattern indicates strong intergenerational persistence among high-income families. In addition, the father-son matrix exhibits a more pronounced concentration around adjacent deciles, suggesting that income transmission between fathers and sons is characterized by a more continuous inheritance of relative rank positions. By contrast, the father-daughter matrix is relatively more diffuse in the middle deciles, but displays stronger persistence at the top of the distribution.

\begin{figure}[htbp]
  \centering
  \includegraphics[scale=0.4]{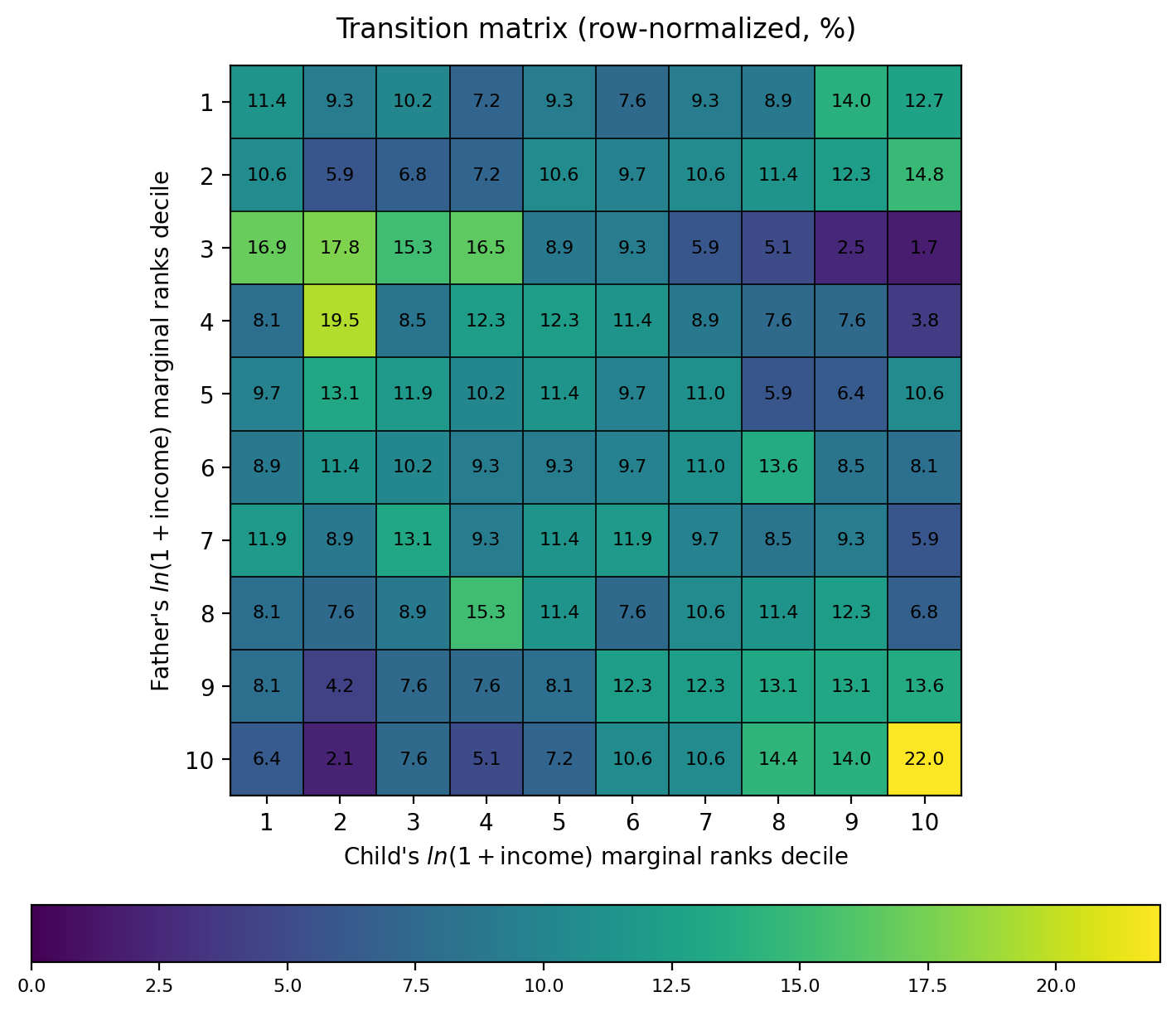}
  \includegraphics[scale=0.4]{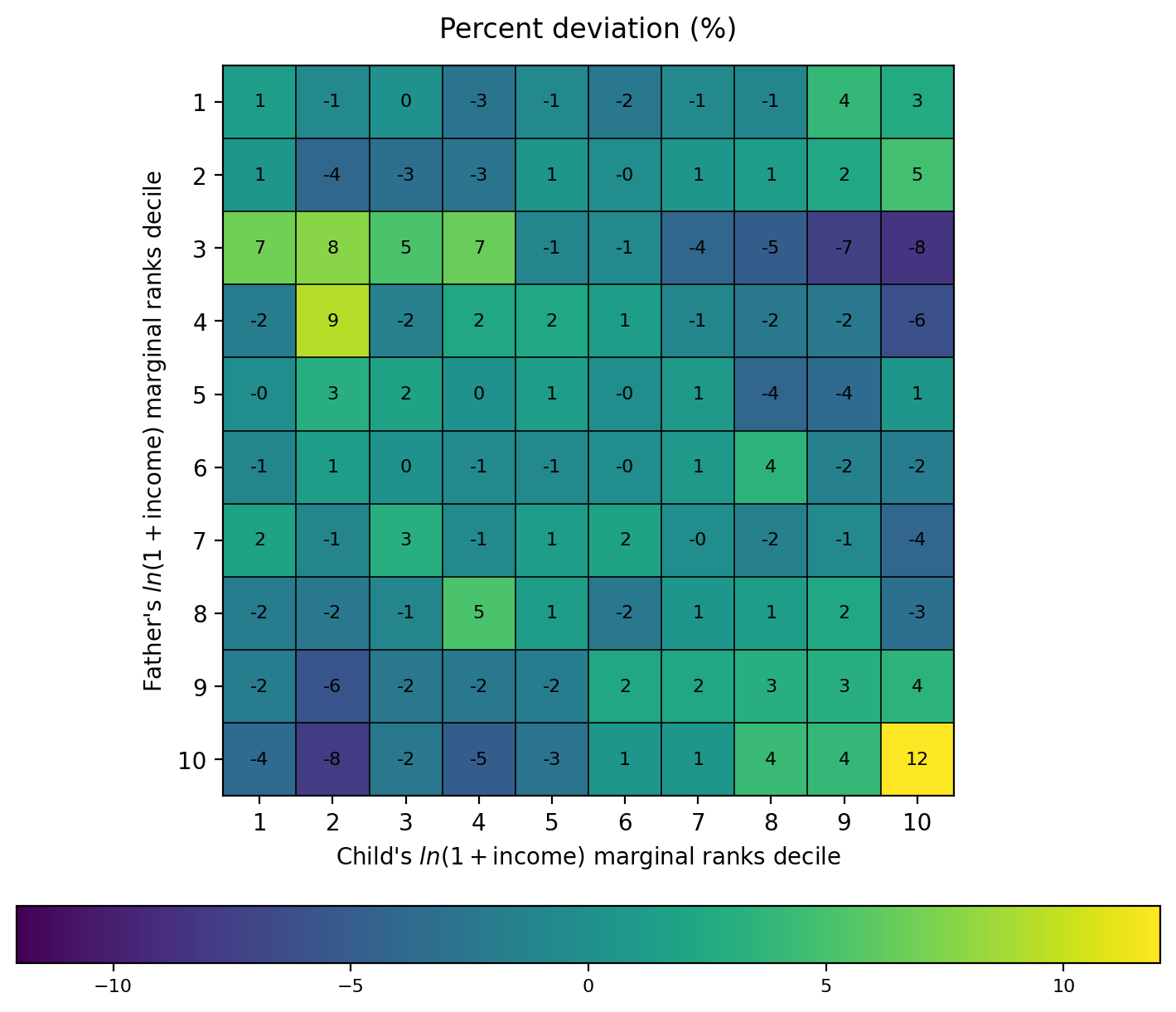}
  \caption{Transition matrices: Father-Child}
  \label{fig:income_trans_metric_fc}
\end{figure}

\begin{figure}[htbp]
  \centering
  \includegraphics[scale=0.4]{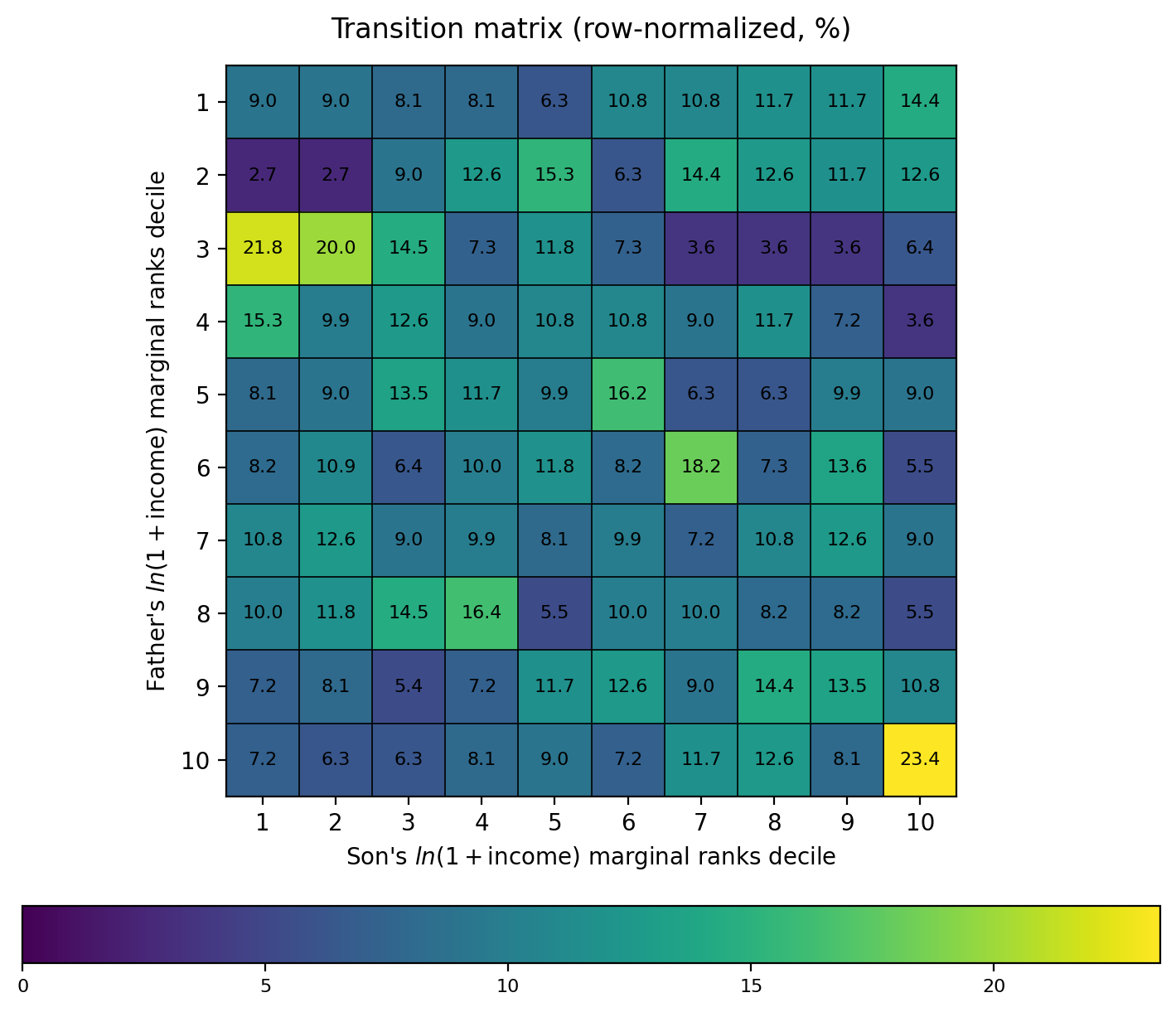}
  \includegraphics[scale=0.4]{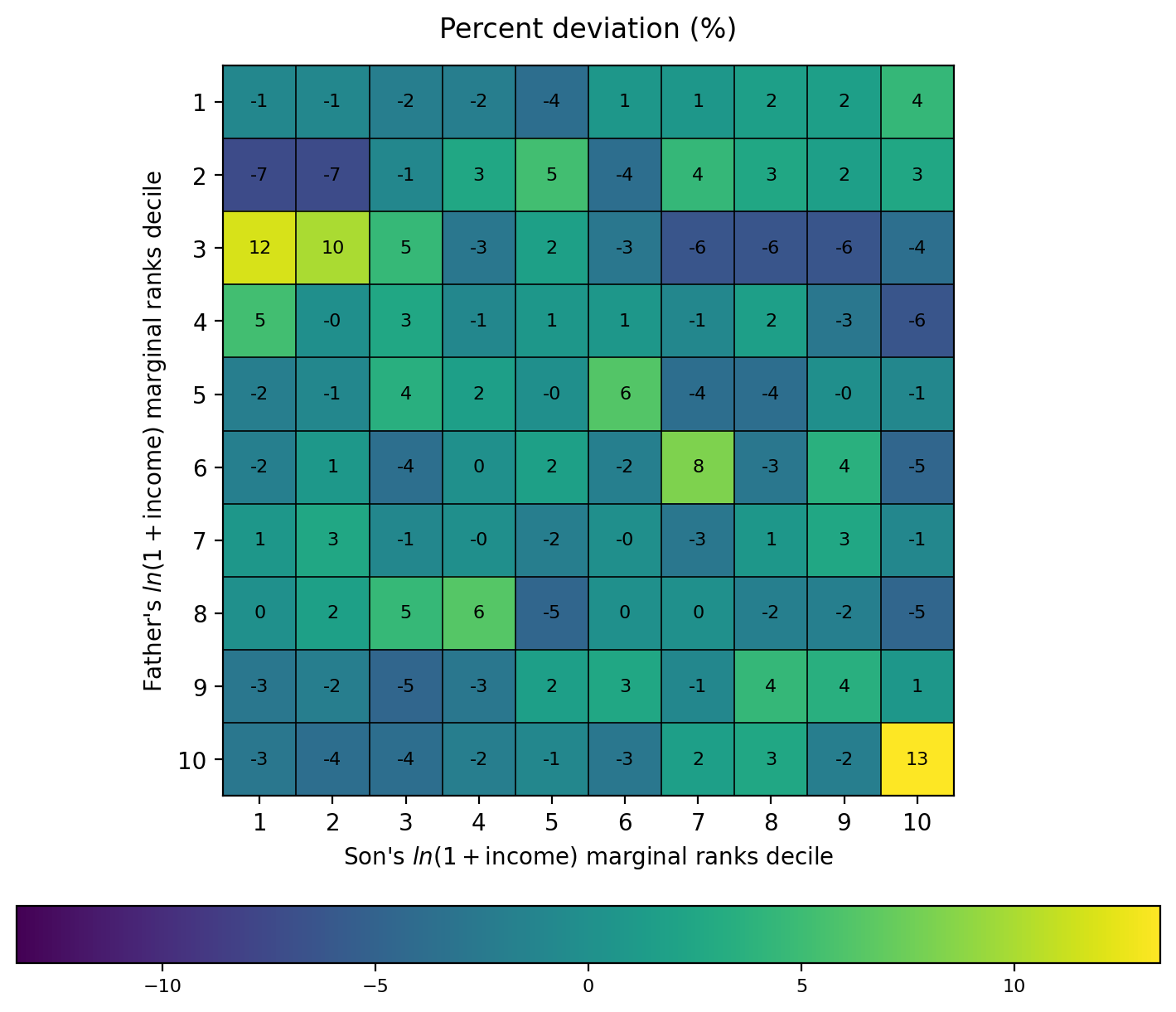}
  \caption{Transition matrices: Father-Son}
  \label{fig:income_trans_metric_fc_son}
\end{figure}

\begin{figure}[htbp]
  \centering
  \includegraphics[scale=0.4]{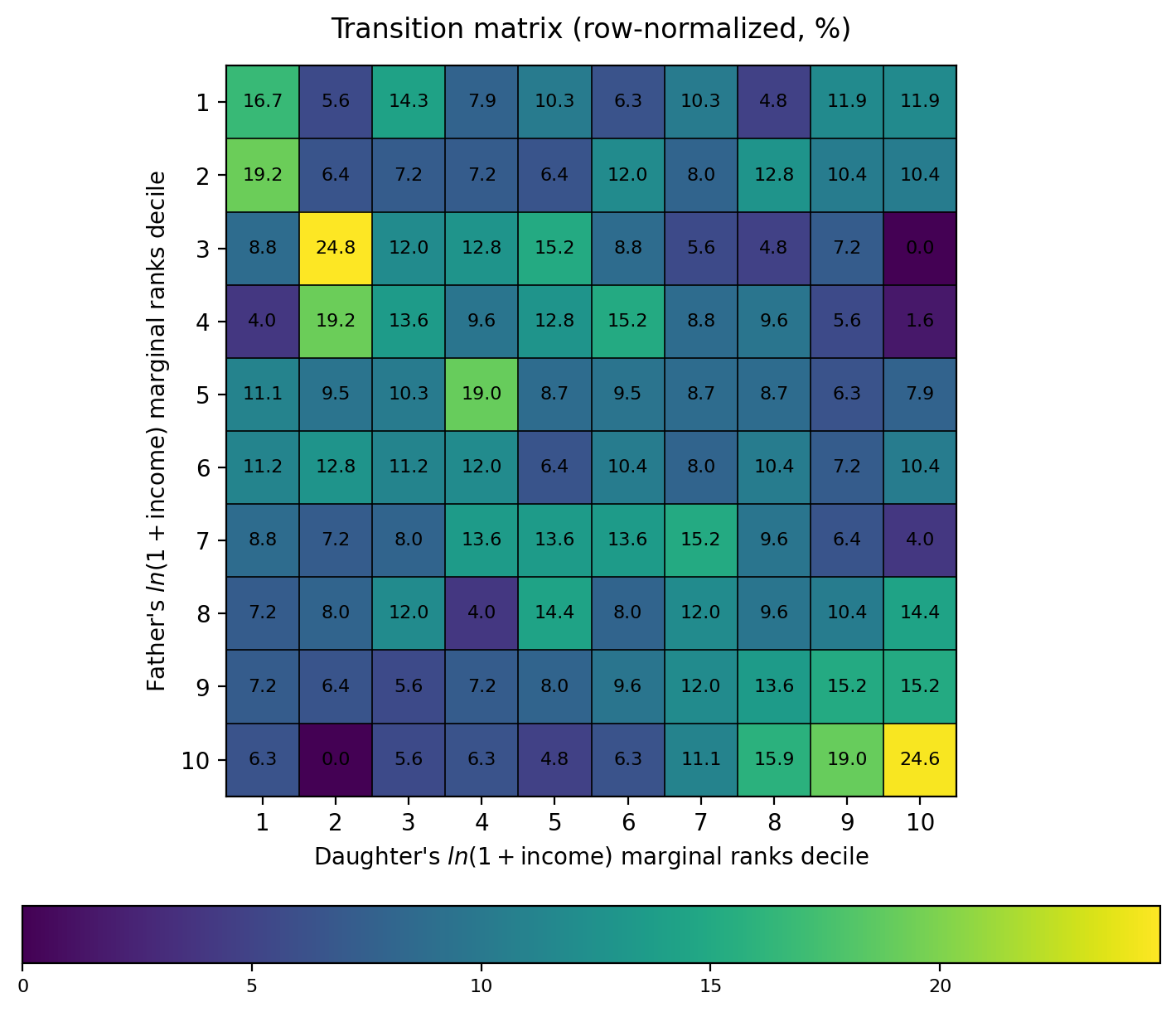}
  \includegraphics[scale=0.4]{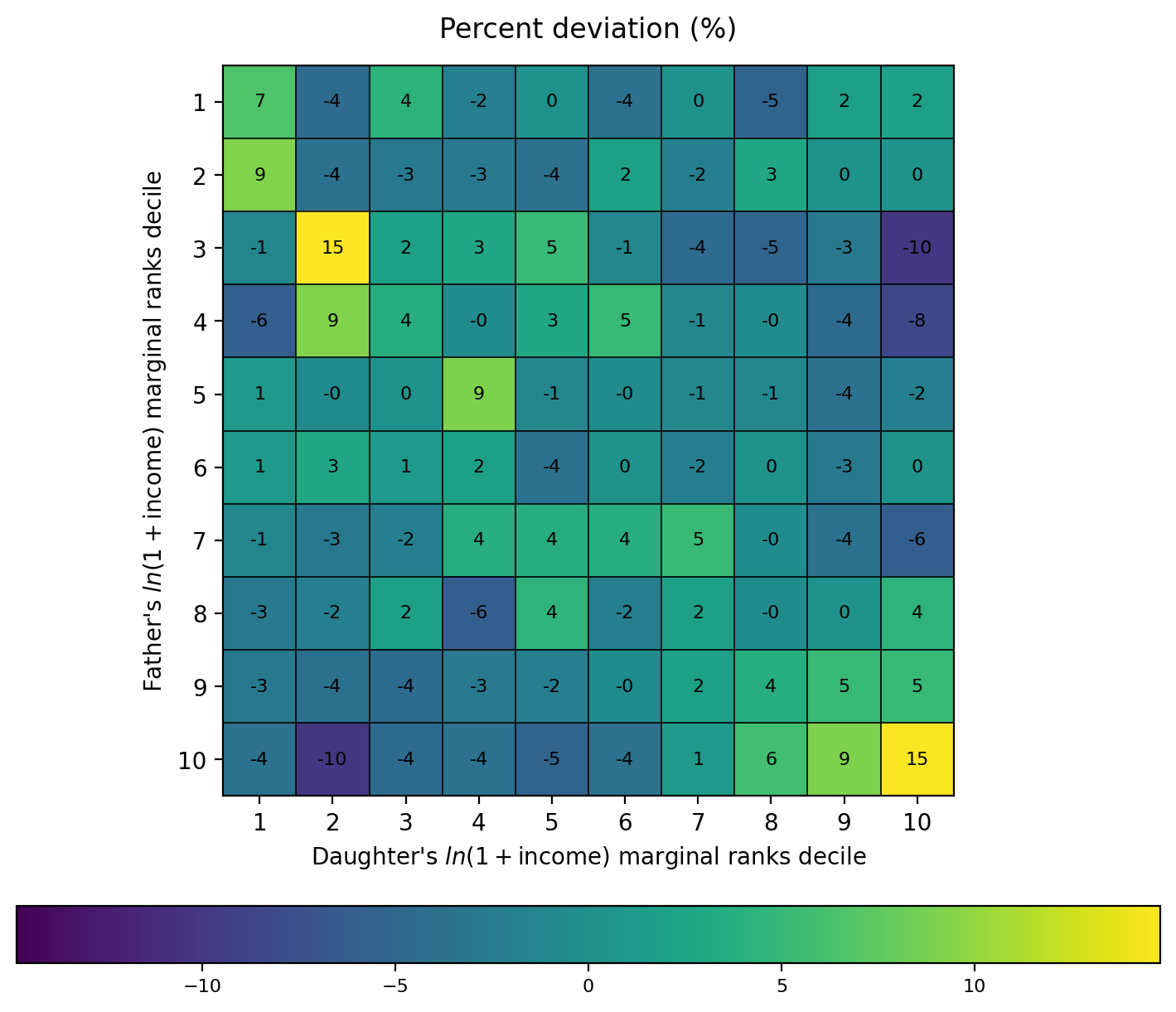}
  \caption{Transition matrices: Father-Daughter}
  \label{fig:income_trans_metric_fc_daughter}
\end{figure}

\subsubsection{Income mobility analysis}\label{subsubsec:psid_income_analysis}
Income is continuous. We estimate conditional ranks using DCTM and then compute the OLS-slope estimate
$\widehat\rho_C$. A larger $\widehat\rho_C$ indicates stronger intergenerational persistence and hence lower
mobility.

Table~\ref{tab:pisd-income-mobility-table} reports RRR, CRRR, and subgroup analyses by father characteristics
for father-child pairs. Standard errors and confidence intervals are obtained via Algorithm~2 using the
empirical bootstrap with 200 resamples. The results show a significant positive persistence effect in the
father-child income relationship both with and without covariate adjustment. The RRR estimate is 0.180, while
CRRR decreases to 0.121 after controlling for covariates. This suggests that the within-group average
intergenerational persistence is about 0.12, while the remaining persistence can be attributed to between-group
differences in covariates ($\rho_{RRR} - \rho_{C}\approx 0.059$). Subgroup results indicate higher income mobility in
families with higher paternal education, paternal age at birth above 26, and larger number of children.

\begin{table}[htbp]
\centering
\caption{Father-Child intergenerational income mobility}
\label{tab:pisd-income-mobility-table}
\setlength{\tabcolsep}{6pt}
\begin{threeparttable}
\begin{tabular}{lccc}
\toprule
& \multicolumn{3}{c}{Father-Child} \\
\cmidrule(lr){2-4}
& $\widehat\rho_C$ & SE & 95\% CI \\
\midrule
RRR   & 0.180 & 0.022 & [0.142, 0.218] \\
CRRR  & 0.121 & 0.019 & [0.081, 0.160] \\
CRRR, by Father:  &  &  & \\
\quad Higher Education     & 0.073 & 0.030 & [0.018, 0.129] \\
\quad Age $>26$ at birth   & 0.108 & 0.034 & [0.039, 0.178] \\
\quad At least 2 children  & 0.104 & 0.021 & [0.059, 0.148] \\
\bottomrule
\end{tabular}
\end{threeparttable}
\end{table}

To study gender differences, we analyze father-son and father-daughter pairs separately; see
Table~\ref{tab:pisd-income-mobility-by-sex}. The RRR estimate is 0.099 and the CRRR estimate is 0.062 for
father-son pairs, implying that within-group persistence accounts for about 62\% of total persistence. For
father-daughter pairs, the RRR estimate is 0.248 and the CRRR estimate is 0.180, implying that within-group
persistence accounts for about 73\% of total persistence. Thus, even after controlling for covariates, the
intergenerational persistence in daughters' income remains substantially higher than that in sons' income,
suggesting that daughters' income ranks are more strongly tied to family background. Subgroup analyses show that
father-daughter persistence exceeds father-son persistence across all subgroups, indicating a pronounced
gender gap: daughters' income appears more constrained by fathers' experiences and backgrounds.

\begin{table}[htbp]
\centering
\caption{Father-Son and Father-Daughter intergenerational income mobility}
\label{tab:pisd-income-mobility-by-sex}
\setlength{\tabcolsep}{6pt}
\begin{threeparttable}
\begin{tabular}{lcccccc}
\toprule
& \multicolumn{3}{c}{Father-Son} & \multicolumn{3}{c}{Father-Daughter} \\
\cmidrule(lr){2-4}\cmidrule(lr){5-7}
& $\widehat\rho_C$ & SE & 95\% CI & $\widehat\rho_C$ & SE & 95\% CI \\
\midrule
RRR   & 0.099 & 0.031 & [0.033, 0.165] & 0.248 & 0.026 & [0.188, 0.308] \\
CRRR  & 0.062 & 0.029 & [0.007, 0.117] & 0.180 & 0.028 & [0.124, 0.237] \\
CRRR, by Father: &  &  &  &  &  & \\
\quad Higher Education     & -0.027 & 0.037 & [-0.092, 0.039] & 0.179 & 0.045 & [0.101, 0.258] \\
\quad Age $>26$ at birth   & 0.036  & 0.054 & [-0.078, 0.151] & 0.171 & 0.047 & [0.059, 0.283] \\
\quad At least 2 children  & 0.047  & 0.033 & [-0.012, 0.107] & 0.161 & 0.034 & [0.093, 0.230] \\
\bottomrule
\end{tabular}
\end{threeparttable}
\end{table}

\subsection{Intergenerational Educational Mobility in the IHDS}\label{subsec:ihds_edu}
In this section, we analyze intergenerational educational mobility between fathers and children in India using
  the 2012 India Human Development Survey (IHDS).

\subsubsection{The IHDS Dataset}\label{subsubsec:ihds_data}
The IHDS is a nationally representative, multi-topic survey in India with two waves:
2004 -- 2005 (IHDS-I\footnote{Desai, Sonalde, Vanneman, Reeve, and National Council of Applied Economic Research,
New Delhi. India Human Development Survey (IHDS), 2005. Inter-university Consortium for Political and Social
Research [distributor], 2018-08-08. \url{https://doi.org/10.3886/ICPSR22626.v12}}) and
2011 -- 2012 (IHDS-II\footnote{Desai, Sonalde, and Vanneman, Reeve. India Human Development Survey-II (IHDS-II),
2011--12. Inter-university Consortium for Political and Social Research [distributor], 2018-08-08.
\url{https://doi.org/10.3886/ICPSR36151.v6}}). Our empirical analysis uses IHDS-II.

IHDS-I interviewed 41,554 households in 2004 -- 2005, including 26,734 rural and 14,820 urban households. In
2011 -- 2012, IHDS-II re-interviewed about 83\% of the original IHDS-I households (some households were replaced
when interviewers could not locate them), yielding a final sample of 42,152 households (27,579 rural and 14,573
urban). These households cover 33 states and union territories, 384 districts, 1,420 villages, and 1,042 urban
neighborhoods. The survey includes modules on health, education, employment, economic status, marriage,
fertility, caste, and household assets, among others, and can be linked to IHDS-I, providing unique value for
studying patterns of social and economic change in India \citep{asher2017estimating,chetverikov2023inference,asher2024intergenerational,emran2025gender,ahsan2025ranks}.

IHDS-II provides questionnaires organized at the individual, household, and eligible-women levels. To construct
father-child linkages, we follow the data-processing procedure in \citet{asher2024intergenerational}. We merge the three data sources with $\mathrm{householdID}$ and $\mathrm{personID}$ as the primary keys, and infer father-child relationships.
Our outcome of interest is the \emph{years of education}. We drop observations with missing outcome value for either fathers or
children, and then aggregate the original education variable into the most widely used education definition in
the India literature, yielding 7 ordered discrete levels (Edu Level), from illiterate to higher education,
coded as 1 -- 7.

To explain structural heterogeneity in intergenerational educational association, we include household-level
covariates: social group (Group coded as Scheduled Castes (SC) / Scheduled Tribes (ST) = 1; Muslim = 2; Others =
3), household size (Nperson), per-capita monthly household consumption expenditure (Mpce), and urban residence (Urban).
Mpce is highly right-skewed, so we use $\log(\text{Mpce}+1)$. Table~\ref{tab:ihds-desc-table} reports descriptive
statistics.

\begin{table}[htbp]
\centering
\caption{Descriptive statistics}
\label{tab:ihds-desc-table}
\setlength{\tabcolsep}{6pt}
\begin{threeparttable}
\begin{tabular}{lcccccc}
\toprule
& \multicolumn{2}{c}{Father-Child} & \multicolumn{2}{c}{Father-Son}
& \multicolumn{2}{c}{Father-Daughter} \\
\cmidrule(lr){2-3}\cmidrule(lr){4-5}\cmidrule(lr){6-7}
& Mean & SD & Mean & SD & Mean & SD \\
\midrule
Child Edu Level    & 3.69 & 1.88 & 4.30 & 1.66 & 3.32 & 1.92 \\
Father Edu Level   & 2.72 & 1.58 & 3.19 & 1.93 & 2.43 & 1.25 \\
Group              & 2.28 & 0.89 & 2.30 & 0.88 & 2.27 & 0.89 \\
Nperson            & 5.88 & 2.50 & 6.29 & 2.63 & 5.64 & 2.38 \\
log (Mpce)         & 9.89 & 0.65 & 9.90 & 0.65 & 9.89 & 0.65 \\
Urban              & 0.36 & 0.48 & 0.37 & 0.48 & 0.36 & 0.48 \\
\bottomrule
\end{tabular}

\begin{tablenotes}[flushleft]
\footnotesize
\item \textit{Notes:} The sample size is 32,099 for father-son pairs and 52,984 for father-daughter pairs.
\end{tablenotes}
\end{threeparttable}
\end{table}

Figures~\ref{fig:trans_metric_fc}--\ref{fig:trans_metric_fc_daughter} display the heatmaps of the transition matrices for the Father-Child, Father-Son, and Father-Daughter samples, respectively. The left panel shows the transition probability matrix, while the right panel reports the percentage deviation from the child's marginal distribution. The results show that the transition matrices in all three samples depart markedly from the independence benchmark, indicating that the father's educational status exerts a substantial influence on the child's educational outcome. Overall, the probabilities are concentrated along the main diagonal and its neighboring cells, suggesting pronounced intergenerational persistence in educational attainment. Top-end persistence is particularly strong: when the father belongs to the highest education grade, the probability that the child also remains in the highest grade is \(36.0\%\), \(39.8\%\), and \(29.0\%\) in the father-child, father-son, and father-daughter samples, respectively. This pattern indicates strong intergenerational persistence among highly educated families, with the effect being most pronounced in the father-son sample. These conclusions are consistent with the RRR coefficient results presented later.

\begin{figure}[htbp]
  \centering
  \includegraphics[scale=0.4]{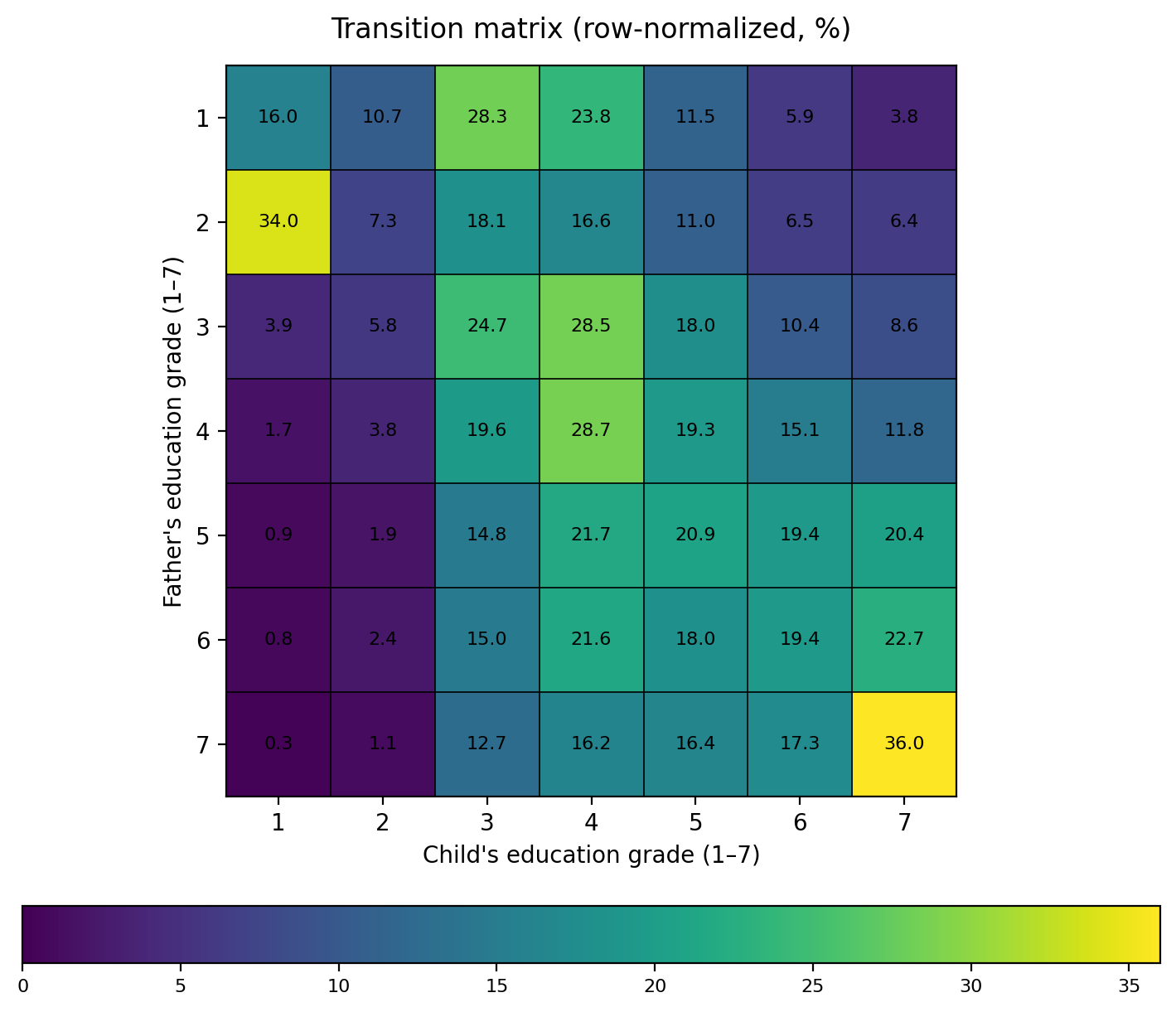}
  \includegraphics[scale=0.4]{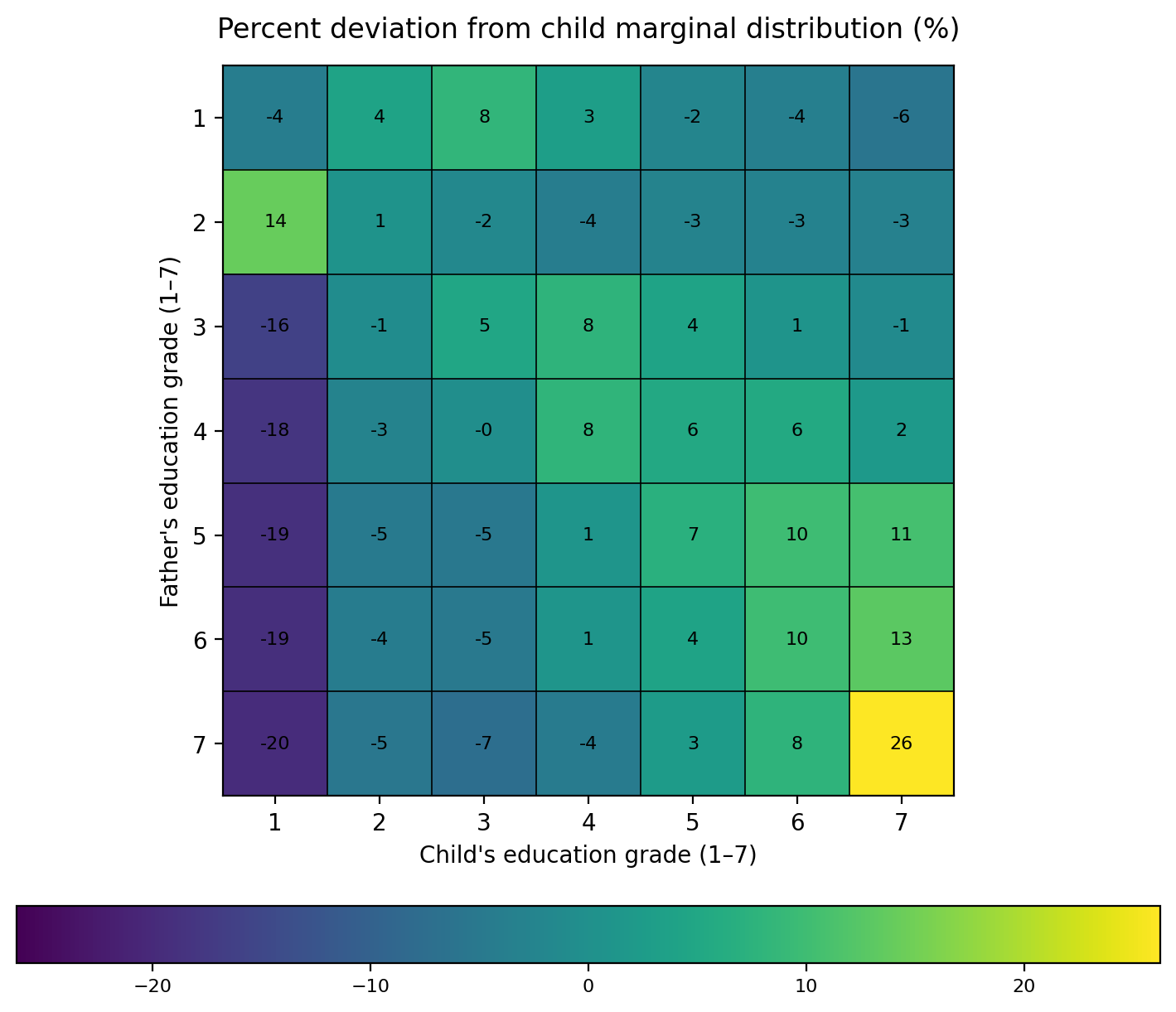}
  \caption{Transition matrices: Father-Child}
  \label{fig:trans_metric_fc}
\end{figure}

\begin{figure}[t]
  \centering
  \includegraphics[scale=0.4]{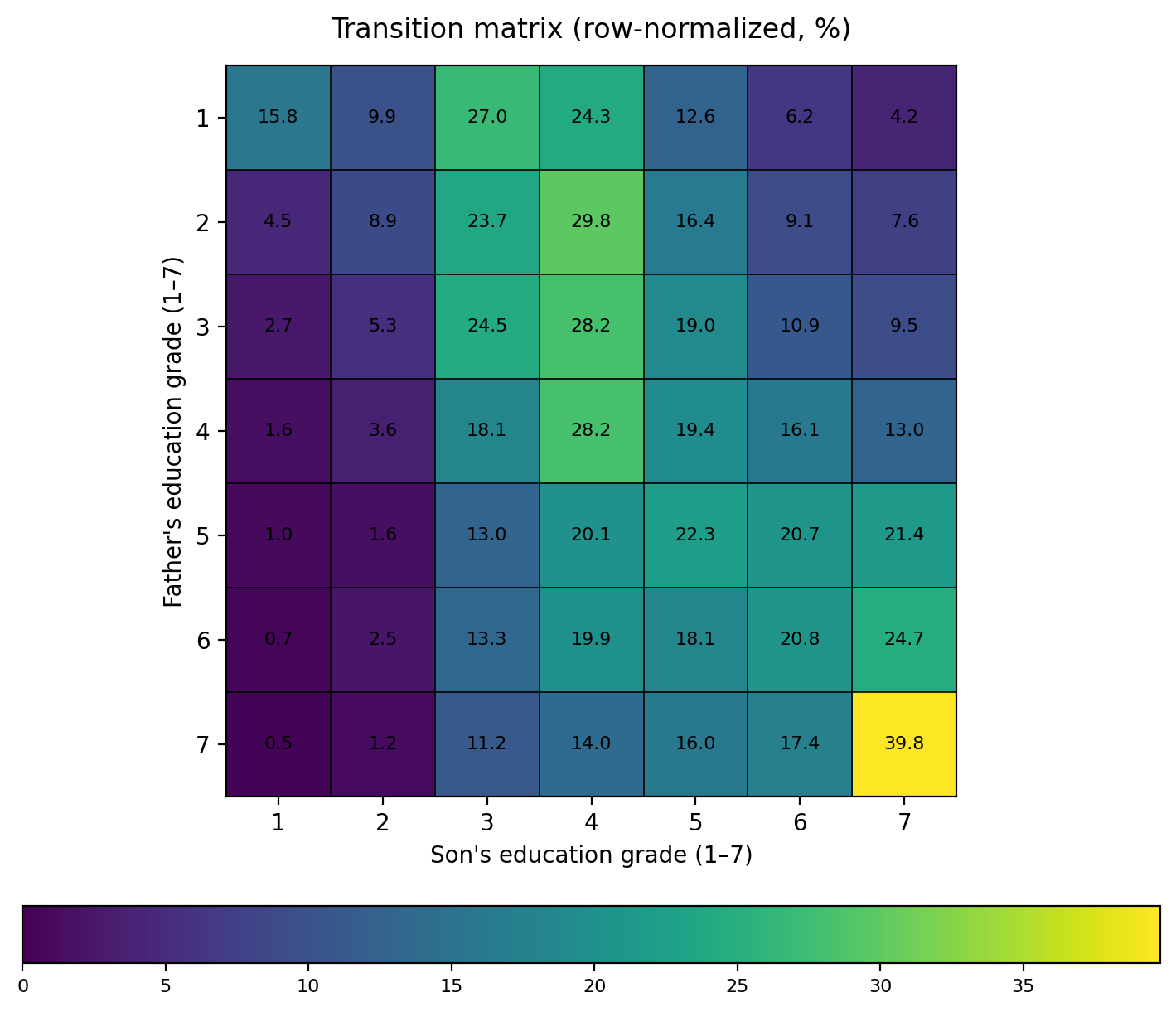}
  \includegraphics[scale=0.4]{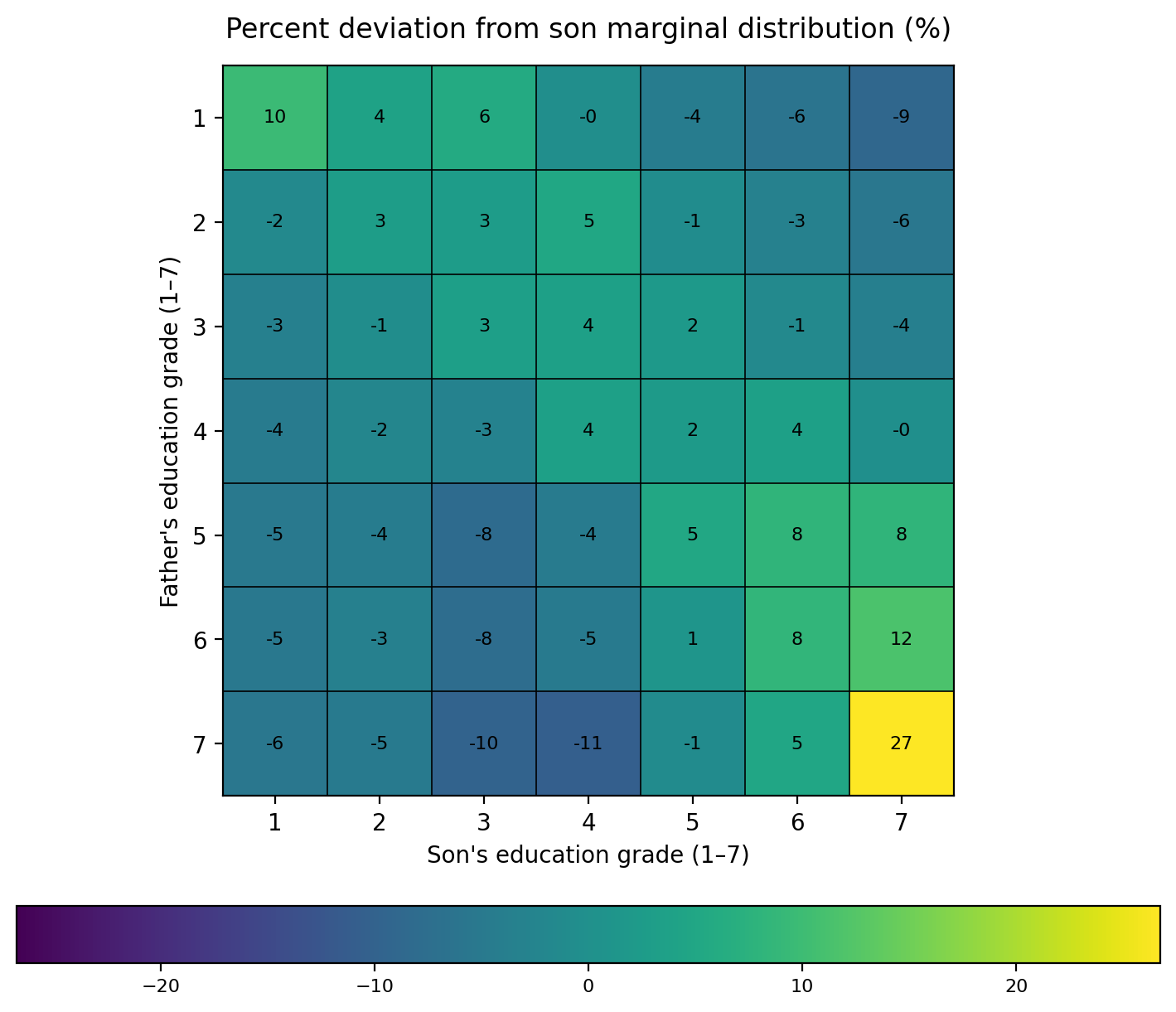}
  \caption{Transition matrices: Father-Son}
  \label{fig:trans_metric_fc_son}
\end{figure}

\begin{figure}[t]
  \centering
  \includegraphics[scale=0.4]{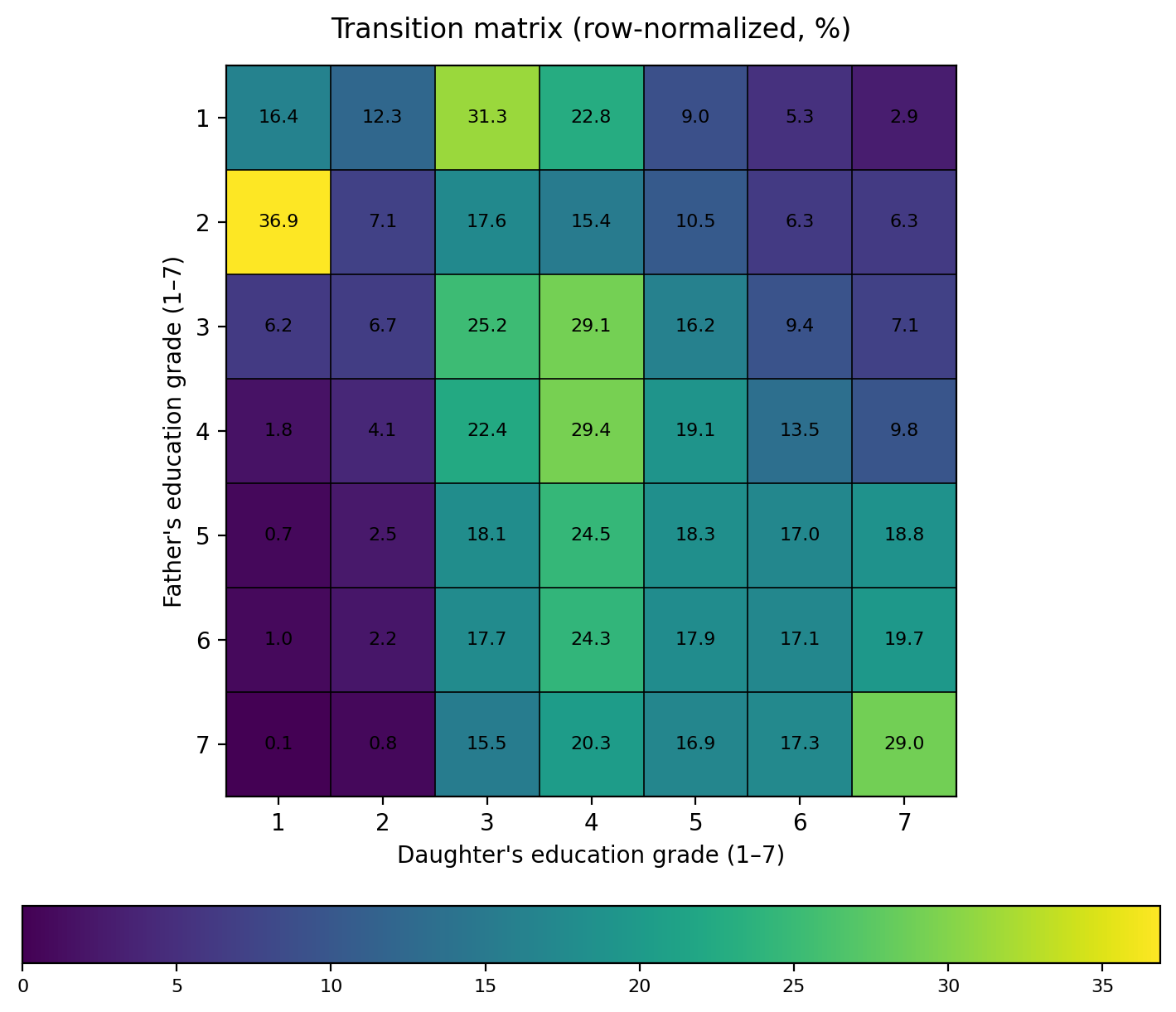}
  \includegraphics[scale=0.4]{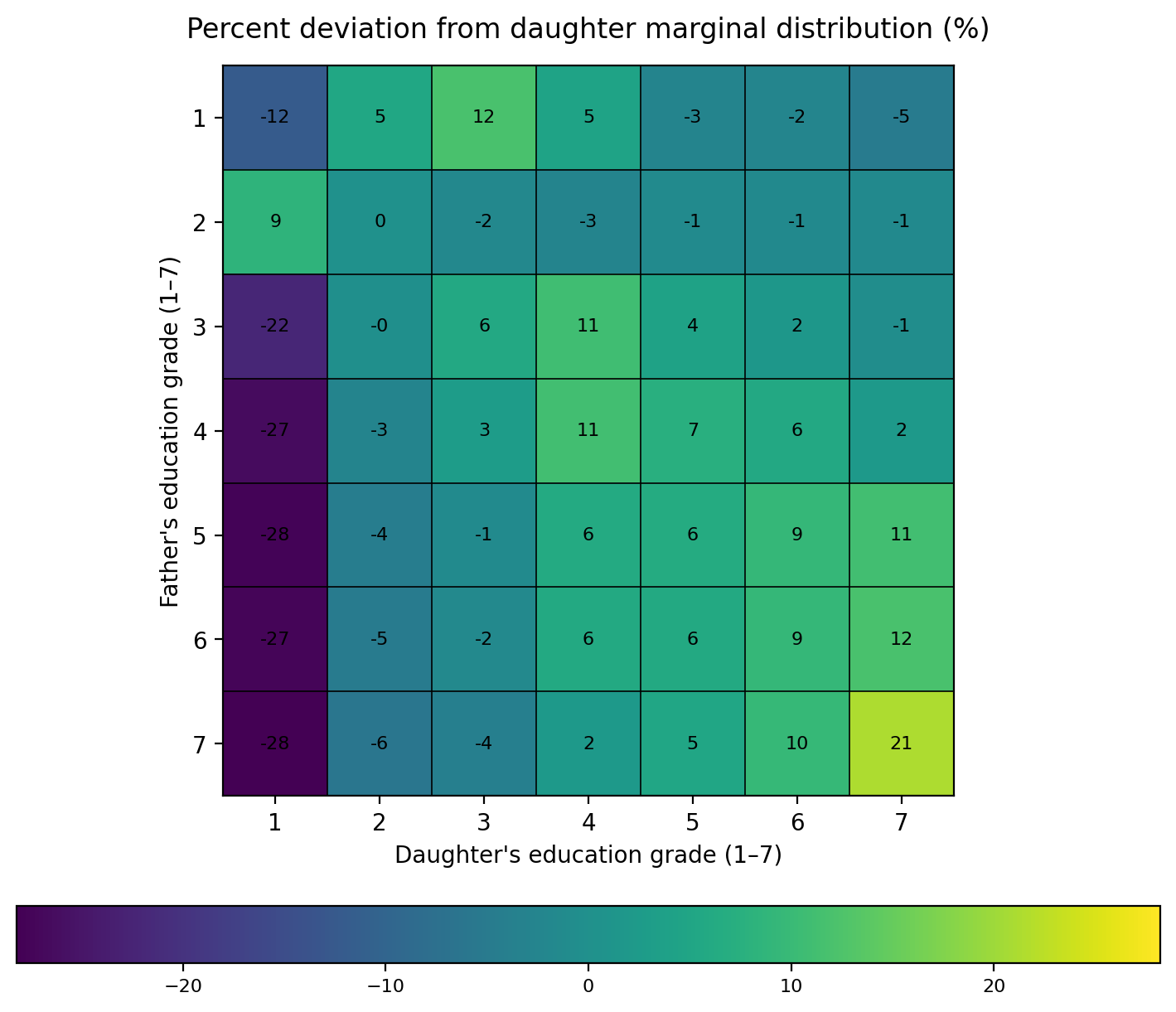}
  \caption{Transition matrices: Father-Daughter}
  \label{fig:trans_metric_fc_daughter}
\end{figure}

\subsubsection{Educational mobility analysis}\label{subsubsec:ihds_edu_analysis}
Education level is a discrete ordinal variable. We therefore use dDCTM to estimate the conditional CDF/PMF,
following the network design described in the discrete-ordinal simulation section. We then plug the estimated
conditional CDF (including the left limit) into the $\omega$-parameterized rank definitions
\eqref{eq:rankY_omega_en} and \eqref{eq:rankW_omega_en} to compute conditional ranks, and finally compute the OLS-slope
estimate $\widehat\rho_C$. 

Figure~\ref{fig:fc_son_dau_rho_omega} plots $\widehat\rho_C(\omega)$ for father-child, father-son, and father-daughter pairs as a function of $\omega$. The figure shows that $\widehat\rho_C$ is highly sensitive to tie handling. For example, under the
smallest-rank convention ($\omega=0.0$), educational mobility appears lower for father-daughter pairs than for
father-son pairs. However, for $\omega\in[0.5,1.0]$, the conclusion reverses: father-daughter pairs exhibit
higher persistence (lower mobility) than father-son pairs, and the gap widens as $\omega$ increases.

Table~\ref{tab:edu-mobility-analysis} reports RRR, CRRR, and subgroup analyses under three rank
definitions $\omega\in\{0.0,0.5,1.0\}$. The results show significant positive persistence in education between fathers and children, both with and without covariate adjustment. From subgroup analyses under the mid-rank definition ($\omega=0.5$), persistence is stronger (mobility lower) in Muslim households and in urban residence groups, while mobility is higher in larger households. To examine gender heterogeneity, we compare father-son and father-daughter pairs. Under $\omega=0.5$, sons exhibit lower mobility in Muslim households, whereas sons in larger households and in urban residence groups show higher mobility. For daughters, mobility is higher in Muslim households and in larger households, while mobility is lower in urban residence groups. Overall, these patterns indicate pronounced gender differences in intergenerational educational mobility in India.
\begin{figure}[htbp]
  \centering
  \includegraphics[scale=0.5]{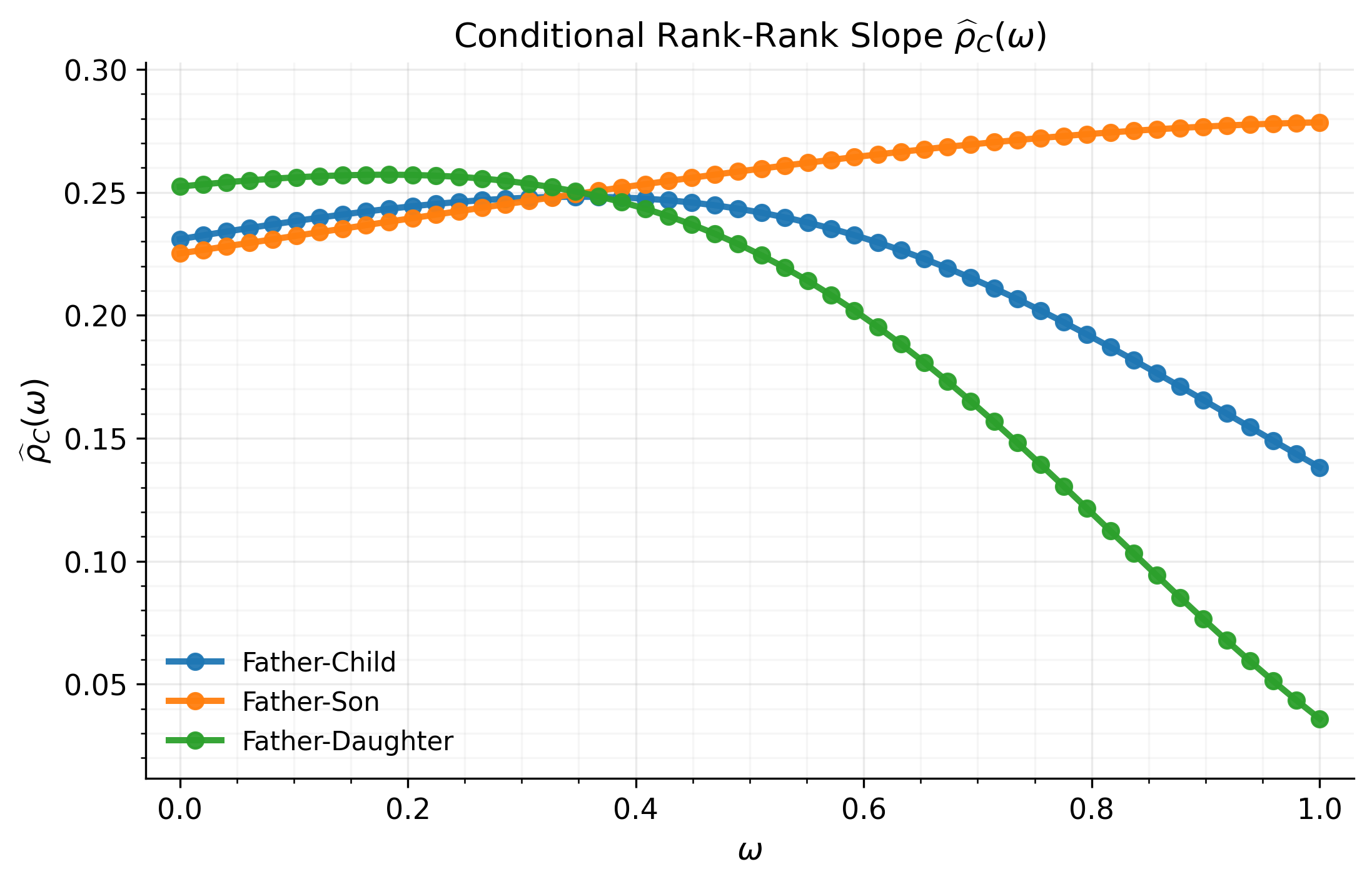}
  \caption{Estimated CRRR slope $\widehat\rho_C(\omega)$}
  \label{fig:fc_son_dau_rho_omega}
\end{figure}

\begin{table}[htbp]
\centering
\caption{Educational mobility analysis for $\omega\in\{0.0,0.5,1.0\}$}
\label{tab:edu-mobility-analysis}
\begingroup
\setlength{\tabcolsep}{6pt}
\begin{tabular}{llccc}
\toprule
& & Father-Child & Father-Son & Father--Daughter \\
\midrule
\multirow{6}{*}{$\omega=0.0$}  & RRR & 0.353 & 0.365 & 0.312 \\
& CRRR & 0.246 & 0.223 & 0.251 \\
& CRRR by:  & & & \\
& Muslim & 0.270 & 0.286 & 0.279 \\
& Nperson $\geq 4$  & 0.241 & 0.221 & 0.252 \\
& Urban & 0.213 & 0.246 & 0.178 \\
\midrule
\multirow{6}{*}{$\omega=0.5$}  & RRR & 0.366 & 0.412 & 0.335 \\
& CRRR & 0.236 & 0.259 & 0.227 \\
& CRRR by:  & & & \\
& Muslim & 0.251 & 0.347 & 0.203 \\
& Nperson $\geq 4$  & 0.231 & 0.257 & 0.224 \\
& Urban & 0.257 & 0.254 & 0.233 \\
\midrule
\multirow{6}{*}{$\omega=1.0$}  & RRR & 0.281 & 0.466 & 0.115 \\
& CRRR & 0.149 & 0.281 & 0.037 \\
& CRRR by:  & & & \\
& Muslim & 0.173 & 0.390 & 0.037 \\
& Nperson $\geq 4$  & 0.158 & 0.280 & 0.034 \\
& Urban & 0.237 & 0.257 & 0.131 \\
\bottomrule
\end{tabular}
\endgroup
\end{table}

\section{Conclusion}
In conclusion, this paper develops a more flexible and robust CRRR estimation framework by replacing DR with a DCTM-based conditional distribution learner and using cross-fitting to reduce overfitting bias. The proposed approach accommodates nonlinear, interaction-rich, heteroskedastic, and discrete ordered outcomes, delivers consistent and asymptotically normal estimation under fixed model complexity, and supports valid inference via an exchangeable bootstrap. Simulations and applications to PSID income and IHDS education data show that, while DR-based CRRR can work well in simple settings, the DCTM-based CRRR yields markedly more stable and accurate estimates of conditional ranks and $\widehat{\rho}_C$ in complex continuous and ordered-discrete scenarios.

Next, we outline several limitations of our work and suggest directions for future research. First, the current asymptotic theory is established under a fixed model-complexity regime. In more general nonparametric settings, eliminating approximation bias typically requires model complexity to grow with sample size, yet neural networks may fail to satisfy weak-convergence conditions for the input process in this case, making the classical CRRR functional inference framework harder to apply. A promising direction is to incorporate ideas from double/debiased machine learning (DDML) by constructing Neyman-orthogonal score functions to reduce the impact of DCTM estimation errors on the target parameter, thereby substantially relaxing rate requirements (e.g., \cite{chernozhukov2018double}). Second, for the discrete extension of CRRR, this paper is among the first to consider parametric rank definitions (via the tie-handling parameter $\omega$) and to study the sensitivity of $\widehat{\rho}_C$ to alternative rank definitions through simulations and empirical work. However, due to the discontinuity and non-differentiability induced by discrete ranks, the associated inference theory is considerably more involved; a formal discrete-asymptotic theory for CRRR is left for future research.

From a practical perspective, DCTM is a neural-network-based algorithm and typically requires sufficiently large sample sizes; when subgroups are defined very finely, estimation quality may deteriorate. Moreover, bootstrap-based inference is computationally expensive because the model must be retrained for each resample. Future work may explore influence-function-based resampling or lightweight retraining schemes (e.g., sharing a backbone network) to improve computational efficiency.

\section{Appendix}
\renewcommand{\thesection}{\Alph{section}.}                 
\renewcommand{\theequation}{\Alph{section}\arabic{equation}} 
\renewcommand{\thefigure}{\Alph{section}\arabic{figure}}     
\renewcommand{\thetable}{\Alph{section}\arabic{table}}       

\setcounter{equation}{0}
\setcounter{figure}{0}
\setcounter{table}{0}

\addtocontents{toc}{\protect\setcounter{tocdepth}{-1}}

\addtocontents{toc}{\protect\setcounter{tocdepth}{2}}

\begin{proof}[Proof of Lemma~\ref{lem:dctm_consistency_fun}]
By Assumption~\ref{ass:emp_process} (i) and the uniqueness of the population-risk minimizer in
Assumption~\ref{ass:spec_fun}, for each fixed $k$, the estimator $\widehat F_{R\mid X}^{(-k)}$ converges to
$F_{R\mid X}$ (almost everywhere). By compactness of $[0,1]\times\mathcal X$ and the continuity structure of
functions in $\mathcal F_R$, the convergence can be strengthened to uniform convergence.
\end{proof}

\begin{proof}[Proof of Theorem~\ref{thm:consistency_rho_fun}]
Define the full-sample means
\[
\overline{\widehat U}:=\mathbb{P}_n(\widehat U)=\frac1n\sum_{i=1}^n\widehat U_i,
\qquad
\overline{\widehat V}:=\mathbb{P}_n(\widehat V)=\frac1n\sum_{i=1}^n\widehat V_i,
\]
and write the three estimators as
\begin{align*}
\widehat\rho_C
:=&\frac{\mathbb{P}_n\!\left[(\widehat U-\overline{\widehat U})(\widehat V-\overline{\widehat V})\right]}
{\mathbb{P}_n\!\left[(\widehat V-\overline{\widehat V})^2\right]},\\
\widetilde\rho_C
:=&\frac{\mathbb{P}_n\!\left[(\widehat U-\overline{\widehat U})(\widehat V-\overline{\widehat V})\right]}
{\sqrt{\mathbb{P}_n\!\left[(\widehat U-\overline{\widehat U})^2\right]\,
\mathbb{P}_n\!\left[(\widehat V-\overline{\widehat V})^2\right]}},\\
\breve\rho_C
:=&12\,\mathbb{P}_n\!\left[(\widehat U-\tfrac12)(\widehat V-\tfrac12)\right].
\end{align*}

Let the true conditional ranks be $U:=F_{Y\mid X}(Y\mid X)$ and $V:=F_{W\mid X}(W\mid X)$, and define
$\overline U:=\mathbb{P}_n(U)$ and $\overline V:=\mathbb{P}_n(V)$. By Assumption~\ref{ass:data_fun},
$\mathbb{E}[U]=\mathbb{E}[V]=1/2$ and $\mathrm{Var}(U)=\mathrm{Var}(V)=1/12$.

For each fold $k$, define the uniform errors
\[
\Delta_{Y,k}
:=
\sup_{(y,x)}
\Big|\widehat F_{Y\mid X}^{(-k)}(y\mid x)-F_{Y\mid X}(y\mid x)\Big|,
\]
and similarly define $\Delta_{W,k}$. By Lemma~\ref{lem:dctm_consistency_fun}, for each fixed $k$,
$\Delta_{Y,k}\xrightarrow{p}0$ and $\Delta_{W,k}\xrightarrow{p}0$. Since $K$ is fixed,
\[
\Delta_Y:=\max_{1\le k\le K}\Delta_{Y,k}\xrightarrow{p}0,
\qquad
\Delta_W:=\max_{1\le k\le K}\Delta_{W,k}\xrightarrow{p}0.
\]
Hence for any $i\in\mathcal I_k$,
\[
|\widehat U_i-U_i|
=
\Big|\widehat F_{Y\mid X}^{(-k)}(y_i\mid x_i)-F_{Y\mid X}(y_i\mid x_i)\Big|
\le \Delta_{Y,k}\le \Delta_Y,
\]
and therefore
\[
\mathbb{P}_n|\widehat U-U|
=
\frac1n\sum_{i=1}^n|\widehat U_i-U_i|
\le \Delta_Y\xrightarrow{p}0.
\]
Similarly, $\mathbb{P}_n|\widehat V-V|\xrightarrow{p}0$. Consequently,
\[
|\overline{\widehat U}-\overline U|
\le \mathbb{P}_n|\widehat U-U|\xrightarrow{p}0,
\qquad
|\overline{\widehat V}-\overline V|
\le \mathbb{P}_n|\widehat V-V|\xrightarrow{p}0.
\]
By the law of large numbers, $\overline U\xrightarrow{p} \mathbb{E}[U]=1/2$ and $\overline V\xrightarrow{p} \mathbb{E}[V]=1/2$,
which implies
\[
\overline{\widehat U}\xrightarrow{p}\tfrac12,
\qquad
\overline{\widehat V}\xrightarrow{p}\tfrac12.
\]

Using boundedness $\widehat U,U,\widehat V,V\in[0,1]$ and the triangle inequality, we obtain the pointwise bound
\[
\big|(\widehat U_i-\overline{\widehat U})(\widehat V_i-\overline{\widehat V})
-(U_i-\overline U)(V_i-\overline V)\big|
\le
C\Big(|\widehat U_i-U_i|+|\widehat V_i-V_i|
+|\overline{\widehat U}-\overline U|+|\overline{\widehat V}-\overline V|\Big),
\]
for some constant $C>0$. Averaging over $i$ yields
\[
\Big|
\mathbb{P}_n\!\left[(\widehat U-\overline{\widehat U})(\widehat V-\overline{\widehat V})\right]
-
\mathbb{P}_n\!\left[(U-\overline U)(V-\overline V)\right]
\Big|
\xrightarrow{p}0.
\]
Moreover, by the law of large numbers,
\[
\mathbb{P}_n\!\left[(U-\overline U)(V-\overline V)\right]\xrightarrow{p}\mathrm{Cov}(U,V),
\qquad
\mathbb{P}_n\!\left[(\widehat U-\overline{\widehat U})(\widehat V-\overline{\widehat V})\right]
\xrightarrow{p}\mathrm{Cov}(U,V).
\]
Similarly,
\[
\mathbb{P}_n\!\left[(\widehat V-\overline{\widehat V})^2\right]\xrightarrow{p}\mathrm{Var}(V)=\frac{1}{12}>0,
\qquad
\mathbb{P}_n\!\left[(\widehat U-\overline{\widehat U})^2\right]\xrightarrow{p}\mathrm{Var}(U)=\frac{1}{12}>0.
\]

Since the denominators converge to positive limits, the continuous mapping theorem implies
\begin{align*}
\widehat\rho_C
&\xrightarrow{p}\frac{\mathrm{Cov}(U,V)}{\mathrm{Var}(V)}=\rho_C,\\
\widetilde\rho_C
&\xrightarrow{p}\frac{\mathrm{Cov}(U,V)}{\sqrt{\mathrm{Var}(U)\mathrm{Var}(V)}}=\rho_C,\\
\breve\rho_C
&\xrightarrow{p}12\,\mathbb{E} \left[(U-\tfrac12)(V-\tfrac12)\right]=\rho_C.
\end{align*}
This completes the proof.
\end{proof}

\begin{proof}[Proof of Lemma~\ref{lem:ABlin_fun}]
We first handle $A_n$. For each observation $i$, let $k(i)$ denote the fold containing $i$. Under
Assumption~\ref{ass:emp_process}(ii), define
\[
\eta_{R,k}
:=
\sup_{(r,x)}
\Bigg|
\sqrt{n_{ck}}\big(\widehat F_{R\mid X}^{(-k)}(r\mid x)-F_{R\mid X}(r\mid x)\big)
-
\frac{1}{\sqrt{n_{ck}}}\sum_{j\in\mathcal I_k^c}\varphi_R(r,x;Z_j)
\Bigg|.
\]
Then for $i\in\mathcal I_k$,
\begin{align}
\widehat U_i-U_i
&=\frac{1}{n_{ck}}\sum_{j\in\mathcal I_k^c}\varphi_Y(Y_i,X_i;Z_j)+r_{Yi}, \label{eq:rank-lin-fun}\\
\widehat V_i-V_i
&=\frac{1}{n_{ck}}\sum_{j\in\mathcal I_k^c}\varphi_W(W_i,X_i;Z_j)+r_{Wi}, \label{eq:rank-lin-fun-1}
\end{align}
where the remainders satisfy $|r_{Yi}|\le \eta_{Y,k}/\sqrt{n_{ck}}$ and $|r_{Wi}|\le \eta_{W,k}/\sqrt{n_{ck}}$.

Averaging over the full sample yields
\[
\mathbb{P}_n(|r_Y|)
=
\frac{1}{n}\sum_{k=1}^K\sum_{i\in\mathcal I_k}|r_{Yi}|
\le
\frac{1}{n}\sum_{k=1}^K\sum_{i\in\mathcal I_k}\frac{\eta_{Y,k}}{\sqrt{n_{ck}}}
=
\sum_{k=1}^K\frac{n_k}{n}\cdot\frac{\eta_{Y,k}}{\sqrt{n_{ck}}}.
\]
Since $K$ is fixed and $n_k/n\to 1/K$ while $n_{ck}\asymp n$, we have
\[
\mathbb{P}_n(|r_Y|)
\le
C\cdot\frac{\max_k \eta_{Y,k}}{\sqrt{n}},
\]
for some constant $C>0$. Multiplying by $\sqrt n$ gives
\[
\sqrt n\,\mathbb{P}_n(|r_Y|)
\le
C\cdot\max_k \eta_{Y,k}.
\]
For each fixed $k$, $\eta_{Y,k}=o_p(1)$, and since $K$ is fixed, $\max_k \eta_{Y,k}=o_p(1)$. Hence
\[
\sqrt n\,\mathbb{P}_n(|r_Y|)=o_p(1).
\]
Similarly, $\sqrt n\,\mathbb{P}_n(|r_W|)=o_p(1)$.

Expand the product for $A_n$:
\begin{equation}\label{eq:An}
\begin{aligned}
(\widehat U_i-\tfrac12)(\widehat V_i-\tfrac12)
=&(U_i-\tfrac12)(V_i-\tfrac12)
+(V_i-\tfrac12)(\widehat U_i-U_i)\\
&+(U_i-\tfrac12)(\widehat V_i-V_i)
+R_{ni},
\end{aligned}
\end{equation}
where $R_{ni}:=(\widehat U_i-U_i)(\widehat V_i-V_i)$.

Using \eqref{eq:rank-lin-fun}--\eqref{eq:rank-lin-fun-1} and $n_{ck}\asymp n$, together with
$\sqrt n\,\mathbb{P}_n(|r_Y|)=o_p(1)$ and $\sqrt n\,\mathbb{P}_n(|r_W|)=o_p(1)$, Assumption~\ref{ass:emp_process}(ii)
implies
\[
\mathbb{P}_n\big[(\widehat U-U)^2\big]=O_p(n^{-1}),
\qquad
\mathbb{P}_n\big[(\widehat V-V)^2\big]=O_p(n^{-1}).
\]
By Cauchy--Schwarz,
\[
\mathbb{P}_n|R_n|
=
\mathbb{P}_n\big(|\widehat U-U|\,|\widehat V-V|\big)
\le
\sqrt{\mathbb{P}_n\big[(\widehat U-U)^2\big]}\,
\sqrt{\mathbb{P}_n\big[(\widehat V-V)^2\big]}
=
O_p(n^{-1}),
\]
and thus
\[
\sqrt n\,\mathbb{P}_n|R_n|=o_p(1).
\]

Apply $\mathbb{P}_n$ to \eqref{eq:An}, subtract $A$, and multiply by $\sqrt n$:
\begin{equation}\label{eq:A_term2_fun}
\begin{aligned}
\sqrt n(A_n-A)
=&\frac1{\sqrt n}\sum_{i=1}^n\Big((U_i-\tfrac12)(V_i-\tfrac12)-A\Big)\\
&+\sqrt n\,\mathbb{P}_n\Big[(V-\tfrac12)(\widehat U-U)\Big]\\
&+\sqrt n\,\mathbb{P}_n\Big[(U-\tfrac12)(\widehat V-V)\Big]
+o_p(1)\\
=&\frac1{\sqrt n}\sum_{i=1}^n\Big((U_i-\tfrac12)(V_i-\tfrac12)-A\Big)+Q_n+P_n+o_p(1).
\end{aligned}
\end{equation}
We only handle $Q_n$; the term $P_n$ is symmetric.

Write
\[
Q_n
=
\sqrt n\sum_{k=1}^K\frac{1}{n}\sum_{i\in\mathcal I_k}(V_i-\tfrac12)(\widehat U_i-U_i).
\]
By \eqref{eq:rank-lin-fun},
\[
Q_n
=
\sqrt n\sum_{k=1}^K\frac{1}{n}\sum_{i\in\mathcal I_k}(V_i-\tfrac12)
\left\{\frac{1}{n_{ck}}\sum_{j\in\mathcal I_k^c}\varphi_Y(Y_i,X_i;Z_j)\right\}
+o_p(1).
\]
Swap sums and rearrange:
\[
Q_n
=
\sqrt n\sum_{k=1}^K \frac{n_k}{n}\cdot\frac{1}{n_{ck}}
\sum_{j\in\mathcal I_k^c}
\left\{\frac{1}{n_k}\sum_{i\in\mathcal I_k}(V_i-\tfrac12)\varphi_Y(Y_i,X_i;Z_j)\right\}
+o_p(1).
\]
For each fixed $k$ and conditional on $Z_j$, the random variables
$\{(V_i-\tfrac12)\varphi_Y(Y_i,X_i;Z_j)\}_{i\in\mathcal I_k}$ are i.i.d.\ so the conditional law of large numbers
yields
\[
\frac{1}{n_k}\sum_{i\in\mathcal I_k}(V_i-\tfrac12)\varphi_Y(Y_i,X_i;Z_j)
\xrightarrow{p}
\mathbb{E} \big[(\widetilde V-\tfrac12)\varphi_Y(\widetilde Y,\widetilde X;Z_j)\mid Z_j\big]
=\gamma_Y(Z_j).
\]
Using $n_k/n\to 1/K$ and $n_{ck}/n\to 1-1/K$,
\[
Q_n
=
\frac{1}{K-1}\cdot\frac{1}{\sqrt n}\sum_{k=1}^K\sum_{j\in\mathcal I_k^c}\gamma_Y(Z_j)+o_p(1).
\]
Each observation $j$ belongs to exactly one fold $\mathcal I_m$ and is included in the double sum
$\sum_{k=1}^K\sum_{j\in\mathcal I_k^c}$ for all $k\neq m$, i.e., exactly $K-1$ times. Hence
\[
\sum_{k=1}^K\sum_{j\in\mathcal I_k^c}\gamma_Y(Z_j)=(K-1)\sum_{j=1}^n\gamma_Y(Z_j),
\]
and therefore
\[
Q_n=\frac{1}{\sqrt n}\sum_{j=1}^n\gamma_Y(Z_j)+o_p(1).
\]
The term $\sqrt n\,\mathbb{P}_n[(U-\tfrac12)(\widehat V-V)]$ is handled similarly, yielding
$\frac{1}{\sqrt n}\sum_{j=1}^n\gamma_W(Z_j)+o_p(1)$.

Substituting back into \eqref{eq:A_term2_fun} gives
\[
\sqrt n(A_n-A)
=
\frac{1}{\sqrt n}\sum_{i=1}^n\Big((U_i-\tfrac12)(V_i-\tfrac12)-A+\gamma_Y(Z_i)+\gamma_W(Z_i)\Big)+o_p(1),
\]
which is the asymptotic linear representation for $A_n$.

For $B_{n,U}$, expanding \eqref{eq:rank-lin-fun} yields
\[
(\widehat U_i-\tfrac12)^2
=
(U_i-\tfrac12)^2+2(U_i-\tfrac12)(\widehat U_i-U_i)+T_{ni},
\]
where $\sqrt n\,\mathbb{P}_n|T_n|=o_p(1)$. By the same argument as for $A_n$,
\[
\sqrt n(B_{n,U}-B_U)
=
\frac{1}{\sqrt n}\sum_{i=1}^n\Big((U_i-\tfrac12)^2-B_U+\delta_Y(Z_i)\Big)+o_p(1).
\]
Similarly,
\[
\sqrt n(B_{n,V}-B_V)
=
\frac{1}{\sqrt n}\sum_{i=1}^n\Big((V_i-\tfrac12)^2-B_V+\delta_W(Z_i)\Big)+o_p(1).
\]
Combining the three expansions yields the joint asymptotic linearity statement, completing the proof.
\end{proof}

\begin{proof}[Proof of Theorem~\ref{thm:asy_normal_fun}]
Note that $\widehat\rho_C$ and $\widetilde\rho_C$ are centered by the sample means, whereas $\breve\rho_C$ is
centered by $1/2$. For convenience, we first derive asymptotic linearity and normality under the $1/2$-centered
moment form and then show that the sample-mean-centered forms are asymptotically equivalent.

By Lemma~\ref{lem:ABlin_fun},
\[
\sqrt n\Big((A_n,B_{n,U},B_{n,V})-(A,B_U,B_V)\Big)
=
\frac1{\sqrt n}\sum_{i=1}^n\Psi(Z_i)+o_p(1),
\]
where $A=\rho_C/12$, $B_U=B_V=1/12$, and $\Psi(Z)=(\psi_A(Z),\psi_U(Z),\psi_V(Z))^\top$ has finite second moments.

The three estimators (in $1/2$-centered moment form) are
\[
\breve\rho_C=12A_n,
\qquad
\widehat\rho_C=\frac{A_n}{B_{n,V}},
\qquad
\widetilde\rho_C=\frac{A_n}{\sqrt{B_{n,U}B_{n,V}}}.
\]
Since $B_U=B_V=1/12>0$ and $(B_{n,U},B_{n,V})\xrightarrow{p}(B_U,B_V)$, the above mappings are differentiable in
a neighborhood of the truth and the Delta method applies.

For $\breve\rho_C$,
\[
\sqrt n(\breve\rho_C-\rho_C)
=
12\cdot \frac1{\sqrt n}\sum_{i=1}^n\psi_A(Z_i)+o_p(1),
\]
so the influence function is $\psi_{\mathrm{cov}}(Z)=12\psi_A(Z)$.

For $\widehat\rho_C=A_n/B_{n,V}$, let $g(a,b)=a/b$. At $(A,B_V)$,
\[
\partial_a g=\frac{1}{b}=12,
\qquad
\partial_b g=-\frac{a}{b^2}=-12\rho_C.
\]
Hence
\[
\sqrt n(\widehat\rho_C-\rho_C)
=
12\cdot \frac1{\sqrt n}\sum_{i=1}^n\Big(\psi_A(Z_i)-\rho_C\,\psi_V(Z_i)\Big)+o_p(1),
\]
so $\psi_{\mathrm{ols}}(Z)=12\{\psi_A(Z)-\rho_C\psi_V(Z)\}$.

For $\widetilde\rho_C=A_n/\sqrt{B_{n,U}B_{n,V}}$, let $h(a,b,c)=a(bc)^{-1/2}$. At $(A,B_U,B_V)$,
\[
\partial_a h=(B_UB_V)^{-1/2}=12,
\qquad
\partial_b h=-6\rho_C,
\qquad
\partial_c h=-6\rho_C.
\]
Therefore,
\[
\sqrt n(\widetilde\rho_C-\rho_C)
=
12\cdot \frac1{\sqrt n}\sum_{i=1}^n
\Big(\psi_A(Z_i)-\tfrac{\rho_C}{2}\big(\psi_U(Z_i)+\psi_V(Z_i)\big)\Big)+o_p(1),
\]
so $\psi_{\mathrm{corr}}(Z)=12\{\psi_A(Z)-\frac{\rho_C}{2}(\psi_U(Z)+\psi_V(Z))\}$.

Finally, the central limit theorem yields asymptotic normality and the variance expressions.

In the main text, $\widehat\rho_C$ and $\widetilde\rho_C$ are centered using sample means
$\overline{\widehat U}$ and $\overline{\widehat V}$. Note that
\[
\mathbb{P}_n\!\left[(\widehat U-\overline{\widehat U})(\widehat V-\overline{\widehat V})\right]
=
\mathbb{P}_n\!\left[(\widehat U-\tfrac12)(\widehat V-\tfrac12)\right]
-(\overline{\widehat U}-\tfrac12)(\overline{\widehat V}-\tfrac12).
\]
Since $\overline{\widehat U}-\tfrac12=O_p(n^{-1/2})$ and $\overline{\widehat V}-\tfrac12=O_p(n^{-1/2})$,
\[
\sqrt n\Big|
\mathbb{P}_n\!\left[(\widehat U-\overline{\widehat U})(\widehat V-\overline{\widehat V})\right]-A_n
\Big|
=
\sqrt n\,|\overline{\widehat U}-\tfrac12|\,|\overline{\widehat V}-\tfrac12|
=o_p(1).
\]
Similarly,
\[
\sqrt n\Big|\mathbb{P}_n\!\left[(\widehat V-\overline{\widehat V})^2\right]-B_{n,V}\Big|=o_p(1),
\qquad
\sqrt n\Big|\mathbb{P}_n\!\left[(\widehat U-\overline{\widehat U})^2\right]-B_{n,U}\Big|=o_p(1).
\]
Hence the sample-mean-centered and $1/2$-centered forms differ by $o_p(1)$ at the $\sqrt n$ scale, and therefore
share the same asymptotic distribution and variance expressions. This proves the theorem.
\end{proof}

\begin{proof}[Proof of Theorem~\ref{thm:boot_valid}]
By Theorem~\ref{thm:asy_normal_fun}, any estimator form $\ddot\rho_C$ admits an asymptotic linear representation
\[
\sqrt n(\ddot\rho_C-\rho_C)
=
\frac{1}{\sqrt n}\sum_{i=1}^n\psi_\rho(Z_i)+o_p(1),
\]
where $\psi_\rho(\cdot)$ is the corresponding influence function, satisfying $\mathbb{E}[\psi_\rho(Z)]=0$ and
$\mathbb{E}|\psi_\rho(Z)|^{2+\delta}<\infty$ for some $\delta>0$.

The bootstrap statistic $\ddot\rho_C^{*}$ is computed using a weighted-MLE fit of DCTM and weighted sample
moments. Combining Lemma~\ref{lem:ABlin_fun} and the derivation in Theorem~\ref{thm:asy_normal_fun} (replacing
the empirical average $\mathbb{P}_n$ with the weighted empirical average
$\mathbb{P}_n^{*}f:=n^{-1}\sum_{i=1}^n\widetilde\omega_{ni}f(Z_i)$, and noting that the within-fold/out-of-fold
structure remains valid under reweighting), we obtain the conditional asymptotic linear representation
\begin{equation}\label{eq:boot_ALR}
\sqrt n(\ddot\rho_C^{*}-\ddot\rho_C)
=
\frac1{\sqrt n}\sum_{i=1}^n(\widetilde\omega_{ni}-1)\psi_\rho(Z_i)+o_{P^*}(1),
\end{equation}
where $o_{P^*}(1)$ denotes convergence in probability under the weight distribution conditional on the data.

By \eqref{eq:boot_ALR}, it suffices to study
\[
S_n^*
:=
\frac1{\sqrt n}\sum_{i=1}^n(\widetilde\omega_{ni}-1)\psi_\rho(Z_i).
\]
Under Assumption~\ref{ass:emp_process}(ii) and the moment conditions implied by Theorem~\ref{thm:asy_normal_fun},
there exists $\delta>0$ such that $\mathbb{E}|\psi_\rho(Z)|^{2+\delta}<\infty$, and
$\mathbb{E}[\psi_\rho(Z)]=0$, $\mathrm{Var}(\psi_\rho(Z))=\sigma_\rho^2\in(0,\infty)$. Therefore, applying the
exchangeable bootstrap CLT in Lemma~\ref{lem:bootclt} with $\xi_i:=\psi_\rho(Z_i)$ yields
\[
S_n^*\ \rightsquigarrow_P\ \mathcal{N} (0,\sigma_\rho^2).
\]
Combining this with \eqref{eq:boot_ALR} and Slutsky's theorem gives
\[
\sqrt n(\ddot\rho_C^{*}-\ddot\rho_C)\ \rightsquigarrow_P\ \mathcal{N}(0,\sigma_\rho^2),
\]
so the bootstrap distribution consistently estimates the limiting distribution and the bootstrap-based standard
errors and confidence intervals are asymptotically valid. This completes the proof.
\end{proof}

\bibliographystyle{apa}

\bibliography{nkthesis}

@article{beller2006intergenerational,
  title={Intergenerational social mobility: The United States in comparative perspective},
  author={Beller, Emily and Hout, Michael},
  journal={The future of children},
  pages={19--36},
  year={2006},
  publisher={JSTOR}
}

@article{chetty2014land,
  title={Where is the land of opportunity? The geography of intergenerational mobility in the United States},
  author={Chetty, Raj and Hendren, Nathaniel and Kline, Patrick and Saez, Emmanuel},
  journal={The quarterly journal of economics},
  volume={129},
  number={4},
  pages={1553--1623},
  year={2014},
  publisher={MIT Press}
}

@incollection{spearman1961proof,
  title     = {The Proof and Measurement of Association between Two Things},
  author    = {Spearman, Charles},
  booktitle = {Studies in Individual Differences: The Search for Intelligence},
  editor    = {Jenkins, James J. and Paterson, Donald Gildersleeve},
  year      = {1961},
  pages     = {45--58},
  publisher = {Appleton-Century-Crofts},
  address   = {New York},
  doi       = {10.1037/11491-005}
}

@article{chetverikov2023inference,
  title={Inference for rank-rank regressions},
  author={Chetverikov, Denis and Wilhelm, Daniel},
  journal={arXiv preprint arXiv:2310.15512},
  year={2023}
}

@article{chernozhukov2024conditional,
  title={Conditional rank-rank regression},
  author={Chernozhukov, Victor and Fern{\'a}ndez-Val, Iv{\'a}n and Meier, Jonas and Van Vuuren, Aico and Vella, Francis},
  journal={arXiv preprint arXiv:2407.06387},
  year={2024}
}

@article{becker1986human,
  title={Human capital and the rise and fall of families},
  author={Becker, Gary S and Tomes, Nigel},
  journal={Journal of labor economics},
  volume={4},
  number={3, Part 2},
  pages={S1--S39},
  year={1986},
  publisher={University of Chicago Press}
}

@article{atkinson1980intergenerational,
  title={On intergenerational income mobility in Britain},
  author={Atkinson, Anthony B},
  journal={Journal of Post Keynesian Economics},
  volume={3},
  number={2},
  pages={194--218},
  year={1980},
  publisher={Taylor \& Francis}
}

@book{cramer1999mathematical,
  title={Mathematical methods of statistics},
  author={Cram{\'e}r, Harald},
  volume={9},
  year={1999},
  publisher={Princeton university press}
}

@article{kendall1948rank,
  title={Rank and product-moment correlation},
  author={Kendall, Maurice G},
  journal={Biometrika},
  pages={177--193},
  year={1949},
  publisher={JSTOR}
}

@inproceedings{baumann2021deep,
  title={Deep conditional transformation models},
  author={Baumann, Philipp FM and Hothorn, Torsten and R{\"u}gamer, David},
  booktitle={Joint European Conference on Machine Learning and Knowledge Discovery in Databases},
  pages={3--18},
  year={2021},
  organization={Springer}
}

@article{hothorn2014conditional,
  title={Conditional transformation models},
  author={Hothorn, Torsten and Kneib, Thomas and B{\"u}hlmann, Peter},
  journal={Journal of the Royal Statistical Society Series B: Statistical Methodology},
  volume={76},
  number={1},
  pages={3--27},
  year={2014},
  publisher={Oxford University Press}
}

@article{hothorn2018most,
  title={Most likely transformations},
  author={Hothorn, Torsten and M{\"o}st, Lisa and B{\"u}hlmann, Peter},
  journal={Scandinavian Journal of Statistics},
  volume={45},
  number={1},
  pages={110--134},
  year={2018},
  publisher={Wiley Online Library}
}

@article{kook2024estimating,
  title={Estimating conditional distributions with neural networks using r package deeptrafo},
  author={Kook, Lucas and Baumann, Philipp FM and D{\"u}rr, Oliver and Sick, Beate and R{\"u}gamer, David},
  journal={Journal of Statistical Software},
  volume={111},
  pages={1--36},
  year={2024}
}

@inproceedings{sick2021deep,
  title={Deep transformation models: Tackling complex regression problems with neural network based transformation models},
  author={Sick, Beate and Hathorn, Torsten and D{\"u}rr, Oliver},
  booktitle={2020 25th International Conference on Pattern Recognition (ICPR)},
  pages={2476--2481},
  year={2021},
  organization={IEEE}
}

@article{kook2022deep,
  title={Deep interpretable ensembles},
  author={Kook, Lucas and G{\"o}tschi, Andrea and Baumann, Philipp FM and Hothorn, Torsten and Sick, Beate},
  journal={arXiv preprint arXiv:2205.12729},
  year={2022}
}

@article{chernozhukov2018double,
  title={Double/debiased machine learning for treatment and structural parameters},
  author={Chernozhukov, Victor and Chetverikov, Denis and Demirer, Mert and Duflo, Esther and Hansen, Christian and Newey, Whitney and Robins, James},
  year={2018},
  journal = {The Econometrics Journal},
  pages = {C1–C68},
  volume = {21},
  number = {1}
}

@article{van1996weak,
  title={Weak convergence and empirical processes with applications to statistics},
  author={Vaart, AW van der and Wellner, Jon A},
  journal={Journal of the Royal Statistical Society-Series A Statistics in Society},
  volume={160},
  number={3},
  pages={596--608},
  year={1997},
  publisher={London: Royal Statistical Society, 1988-}
}

@article{ward2022internal,
  title={Internal migration, education, and intergenerational mobility: Evidence from american history},
  author={Ward, Zachary},
  journal={Journal of Human Resources},
  volume={57},
  number={6},
  pages={1981--2011},
  year={2022},
  publisher={University of Wisconsin Press}
}

@article{asher2024intergenerational,
  title={Intergenerational mobility in India: New measures and estimates across time and social groups},
  author={Asher, Sam and Novosad, Paul and Rafkin, Charlie},
  journal={American Economic Journal: Applied Economics},
  volume={16},
  number={2},
  pages={66--98},
  year={2024},
  publisher={American Economic Association 2014 Broadway, Suite 305, Nashville, TN 37203-2425}
}

@incollection{hoeffding1992class,
  title={A class of statistics with asymptotically normal distribution},
  author={Hoeffding, Wassily},
  booktitle={Breakthroughs in statistics: Foundations and basic theory},
  pages={308--334},
  year={1992},
  publisher={Springer}
}

@article{mogstad2023family,
  title={Family background, neighborhoods, and intergenerational mobility},
  author={Mogstad, Magne and Torsvik, Gaute},
  journal={Handbook of the Economics of the Family},
  volume={1},
  number={1},
  pages={327--387},
  year={2023},
  publisher={Elsevier}
}

@incollection{bottou2012stochastic,
  title={Stochastic gradient descent tricks},
  author={Bottou, L{\'e}on},
  booktitle={Neural networks: tricks of the trade: second edition},
  pages={421--436},
  year={2012},
  publisher={Springer}
}

@article{kingma2014adam,
  title={Adam: A method for stochastic optimization},
  author={Kingma, Diederik P and Ba, Jimmy},
  journal={arXiv preprint arXiv:1412.6980},
  year={2014}
}

@article{loshchilov2017decoupled,
  title={Decoupled weight decay regularization},
  author={Loshchilov, Ilya and Hutter, Frank},
  journal={arXiv preprint arXiv:1711.05101},
  year={2017}
}

@techreport{dahl2008association,
  title       = {The Association between Children's Earnings and Fathers' Lifetime Earnings: Estimates Using Administrative Data},
  author      = {Dahl, Molly and DeLeire, Thomas},
  institution = {Institute for Research on Poverty, University of Wisconsin--Madison},
  type        = {Discussion Paper},
  number      = {1342-08},
  year        = {2008},
  month       = aug,
  url         = {https://www.irp.wisc.edu/publications/dps/pdfs/dp134208.pdf}
}

@article{olivetti2015name,
  title={In the name of the son (and the daughter): Intergenerational mobility in the United States, 1850--1940},
  author={Olivetti, Claudia and Paserman, M Daniele},
  journal={American Economic Review},
  volume={105},
  number={8},
  pages={2695--2724},
  year={2015},
  publisher={American Economic Association 2014 Broadway, Suite 305, Nashville, TN 37203}
}

@article{fagereng2021wealthy,
  title={Why do wealthy parents have wealthy children?},
  author={Fagereng, Andreas and Mogstad, Magne and R{\o}nning, Marte},
  journal={Journal of Political Economy},
  volume={129},
  number={3},
  pages={703--756},
  year={2021},
  publisher={The University of Chicago Press Chicago, IL}
}

@article{halliday2021intergenerational,
  title={Intergenerational mobility in self-reported health status in the US},
  author={Halliday, Timothy and Mazumder, Bhashkar and Wong, Ashley},
  journal={Journal of Public Economics},
  volume={193},
  pages={104307},
  year={2021},
  publisher={Elsevier}
}

@article{song2020long,
  title={Long-term decline in intergenerational mobility in the United States since the 1850s},
  author={Song, Xi and Massey, Catherine G and Rolf, Karen A and Ferrie, Joseph P and Rothbaum, Jonathan L and Xie, Yu},
  journal={Proceedings of the National Academy of Sciences},
  volume={117},
  number={1},
  pages={251--258},
  year={2020},
  publisher={National Academy of Sciences}
}

@article{chetty2014united,
  title={Is the United States still a land of opportunity? Recent trends in intergenerational mobility},
  author={Chetty, Raj and Hendren, Nathaniel and Kline, Patrick and Saez, Emmanuel and Turner, Nicholas},
  journal={American economic review},
  volume={104},
  number={5},
  pages={141--147},
  year={2014},
  publisher={American Economic Association 2014 Broadway, Suite 305, Nashville, TN 37203}
}

@article{kotera2017educational,
  title={Educational policy and intergenerational mobility},
  author={Kotera, Tomoaki and Seshadri, Ananth},
  journal={Review of economic dynamics},
  volume={25},
  pages={187--207},
  year={2017},
  publisher={Elsevier}
}

@techreport{chetty2017mobility,
  title={Mobility report cards: The role of colleges in intergenerational mobility},
  author={Chetty, Raj and Friedman, John N and Saez, Emmanuel and Turner, Nicholas and Yagan, Danny},
  year={2017},
  institution={national bureau of economic research}
}

@article{fletcher2021intergenerational,
  title={Intergenerational health mobility: Magnitudes and importance of schools and place},
  author={Fletcher, Jason and Jajtner, Katie M},
  journal={Health economics},
  volume={30},
  number={7},
  pages={1648--1667},
  year={2021},
  publisher={Wiley Online Library}
}

@techreport{boar2021occupational,
  title={Occupational choice and the intergenerational mobility of welfare},
  author={Boar, Corina and Lashkari, Danial},
  year={2021},
  institution={National Bureau of Economic Research}
}

@article{chetty2016effects,
  title={The effects of exposure to better neighborhoods on children: New evidence from the moving to opportunity experiment},
  author={Chetty, Raj and Hendren, Nathaniel and Katz, Lawrence F},
  journal={American Economic Review},
  volume={106},
  number={4},
  pages={855--902},
  year={2016},
  publisher={American Economic Association 2014 Broadway, Suite 305, Nashville, TN 37203}
}

@article{mazumder2018intergenerational,
  title={Intergenerational mobility in the United States: What we have learned from the PSID},
  author={Mazumder, Bhashkar},
  journal={The Annals of the American Academy of Political and Social Science},
  volume={680},
  number={1},
  pages={213--234},
  year={2018},
  publisher={SAGE Publications Sage CA: Los Angeles, CA}
}

@article{foresi1995conditional,
  title={The conditional distribution of excess returns: An empirical analysis},
  author={Foresi, Silverio and Peracchi, Franco},
  journal={Journal of the American Statistical Association},
  volume={90},
  number={430},
  pages={451--466},
  year={1995},
  publisher={Taylor \& Francis}
}

@article{chernozhukov2013inference,
  title={Inference on counterfactual distributions},
  author={Chernozhukov, Victor and Fern{\'a}ndez-Val, Iv{\'a}n and Melly, Blaise},
  journal={Econometrica},
  volume={81},
  number={6},
  pages={2205--2268},
  year={2013},
  publisher={Wiley Online Library}
}

@article{box1964analysis,
  title={An analysis of transformations},
  author={Box, George EP and Cox, David R},
  journal={Journal of the Royal Statistical Society Series B: Statistical Methodology},
  volume={26},
  number={2},
  pages={211--243},
  year={1964},
  publisher={Oxford University Press}
}

@article{croix2024nepotism,
  title={Nepotism vs. intergenerational transmission of human capital in Academia (1088--1800)},
  author={Croix, David de la and Go{\~n}i, Marc},
  journal={Journal of Economic Growth},
  volume={29},
  number={4},
  pages={469--514},
  year={2024},
  publisher={Springer}
}

@article{solon1992intergenerational,
  title={Intergenerational income mobility in the United States},
  author={Solon, Gary},
  journal={The American Economic Review},
  pages={393--408},
  year={1992},
  publisher={JSTOR}
}

@article{corcoran1992association,
  title={The association between men's economic status and their family and community origins},
  author={Corcoran, Mary and Gordon, Roger and Laren, Deborah and Solon, Gary},
  journal={Journal of human resources},
  pages={575--601},
  year={1992},
  publisher={JSTOR}
}

@article{lee2009trends,
  title={Trends in intergenerational income mobility},
  author={Lee, Chul-In and Solon, Gary},
  journal={The review of economics and statistics},
  volume={91},
  number={4},
  pages={766--772},
  year={2009},
  publisher={The MIT Press}
}

@incollection{mazumder2015estimating,
  author    = {Mazumder, Bhashkar},
  title     = {Estimating the Intergenerational Elasticity and Rank Association in the United States: Overcoming the Current Limitations of Tax Data},
  booktitle = {Inequality: Causes and Consequences},
  editor    = {Cappellari, Lorenzo and Polachek, Solomon W. and Tatsiramos, Konstantinos},
  series    = {Research in Labor Economics},
  volume    = {43},
  pages     = {83--129},
  year      = {2016},
  publisher = {Emerald Group Publishing Limited},
  doi       = {10.1108/S0147-912120160000043012},
  url       = {https://doi.org/10.1108/S0147-912120160000043012}
}

@article{fisher2023intergenerational,
  title={Intergenerational Mobility using Income, Consumption, and Wealth},
  author={Fisher, Jonathan and Johnson, David},
  year={2023},
 journal = {SSRN},
 pages = {4435534}
}

@article{pfeffer2018generations,
  title={Generations of advantage. Multigenerational correlations in family wealth},
  author={Pfeffer, Fabian T and Killewald, Alexandra},
  journal={Social Forces},
  volume={96},
  number={4},
  pages={1411--1442},
  year={2018},
  publisher={Oxford University Press}
}

@article{emran2025gender,
  title={Is gender destiny? Gender bias and intergenerational educational mobility in India},
  author={Emran, M Shahe and Jiang, Hanchen and Shilpi, Forhad},
  journal={Journal of Economic Behavior \& Organization},
  volume={238},
  pages={107217},
  year={2025},
  publisher={Elsevier}
}

@article{ahsan2025ranks,
  title={When Ranks Fail: New Evidence on Intergenerational Educational Mobility},
  author={Ahsan, Md Nazmul and Emran, M Shahe and Mohammed, Shariq and Murphy, Orla and Shilpi, Forhad},
  year={2025},
  journal = {SSRN},
  pages = {5881282}
}

@article{asher2017estimating,
  title={Estimating intergenerational mobility with coarse data: A nonparametric approach},
  author={Asher, Sam and Novosad, Paul and Rafkin, Charlie},
  journal={World Bank. Technical report},
  year={2017}
}

@book{van2000asymptotic,
  title={Asymptotic statistics},
  author={Van der Vaart, Aad W},
  volume={3},
  year={2000},
  publisher={Cambridge university press}
}

@article{newey1994large,
  title={Large sample estimation and hypothesis testing},
  author={Newey, Whitney K and McFadden, Daniel},
  journal={Handbook of econometrics},
  volume={4},
  pages={2111--2245},
  year={1994},
  publisher={Elsevier}
}

@inproceedings{Ma2018MMoE,
  author    = {Jiaqi Ma and Zhe Zhao and Xinyang Yi and Jilin Chen and Lichan Hong and Ed H. Chi},
  title     = {Modeling Task Relationships in Multi-task Learning with Multi-gate Mixture-of-Experts},
  booktitle = {Proceedings of the 24th ACM SIGKDD International Conference on Knowledge Discovery and Data Mining},
  pages     = {1930--1939},
  year      = {2018},
  publisher = {ACM},
  doi       = {10.1145/3219819.3220007}
}

\end{document}